\makeatletter \@addtoreset{equation}{section} \makeatother
\newtheorem{theorem}{Theorem}[section]
\newtheorem{definition}[theorem]{Definition}
\newtheorem{lemma}[theorem]{Lemma}
\newtheorem{example}[theorem]{Example}
\newtheorem{observation}[theorem]{Observation}
\newtheorem{assumptions}[theorem]{Assumptions}
\def\be{\begin{equation}}
\def\ee{\end{equation}}
\def\ba{\begin{eqnarray}}
\def\ea{\end{eqnarray}}
\def\R{\mathbb{R}}
\newcommand\nn{\nonumber}
\def\tr{{\rm tr}}
\def\phys{{\rm phys}}
\def\g{{\cal G}}
\def\Nl{{\mathchoice
{\setbox0=\hbox{$\displaystyle\rm N$}\hbox{\hbox to0pt
{\kern0.4\wd0\vrule height0.9\ht0\hss}\box0}}
{\setbox0=\hbox{$\textstyle\rm N$}\hbox{\hbox to0pt
{\kern0.4\wd0\vrule height0.9\ht0\hss}\box0}}
{\setbox0=\hbox{$\scriptstyle\rm N$}\hbox{\hbox to0pt
{\kern0.4\wd0\vrule height0.9\ht0\hss}\box0}}
{\setbox0=\hbox{$\scriptscriptstyle\rm N$}\hbox{\hbox to0pt
{\kern0.4\wd0\vrule height0.9\ht0\hss}\box0}}}}
\def\Zl{{\mathchoice
{\setbox0=\hbox{$\displaystyle\rm Z$}\hbox{\hbox to0pt
{\kern0.4\wd0\vrule height0.9\ht0\hss}\box0}}
{\setbox0=\hbox{$\textstyle\rm Z$}\hbox{\hbox to0pt
{\kern0.4\wd0\vrule height0.9\ht0\hss}\box0}}
{\setbox0=\hbox{$\scriptstyle\rm Z$}\hbox{\hbox to0pt
{\kern0.4\wd0\vrule height0.9\ht0\hss}\box0}}
{\setbox0=\hbox{$\scriptscriptstyle\rm Z$}\hbox{\hbox to0pt
{\kern0.4\wd0\vrule height0.9\ht0\hss}\box0}}}}
\def\Ql{{\mathchoice
{\setbox0=\hbox{$\displaystyle\rm Q$}\hbox{\hbox to0pt
{\kern0.4\wd0\vrule height0.9\ht0\hss}\box0}}
{\setbox0=\hbox{$\textstyle\rm Q$}\hbox{\hbox to0pt
{\kern0.4\wd0\vrule height0.9\ht0\hss}\box0}}
{\setbox0=\hbox{$\scriptstyle\rm Q$}\hbox{\hbox to0pt
{\kern0.4\wd0\vrule height0.9\ht0\hss}\box0}}
{\setbox0=\hbox{$\scriptscriptstyle\rm Q$}\hbox{\hbox to0pt
{\kern0.4\wd0\vrule height0.9\ht0\hss}\box0}}}}
\def\Rl{{\mathchoice
{\setbox0=\hbox{$\displaystyle\rm R$}\hbox{\hbox to0pt
{\kern0.4\wd0\vrule height0.9\ht0\hss}\box0}}
{\setbox0=\hbox{$\textstyle\rm R$}\hbox{\hbox to0pt
{\kern0.4\wd0\vrule height0.9\ht0\hss}\box0}}
{\setbox0=\hbox{$\scriptstyle\rm R$}\hbox{\hbox to0pt
{\kern0.4\wd0\vrule height0.9\ht0\hss}\box0}}
{\setbox0=\hbox{$\scriptscriptstyle\rm R$}\hbox{\hbox to0pt
{\kern0.4\wd0\vrule height0.9\ht0\hss}\box0}}}}
\def\Cl{{\mathchoice
{\setbox0=\hbox{$\displaystyle\rm C$}\hbox{\hbox to0pt
{\kern0.4\wd0\vrule height0.9\ht0\hss}\box0}}
{\setbox0=\hbox{$\textstyle\rm C$}\hbox{\hbox to0pt
{\kern0.4\wd0\vrule height0.9\ht0\hss}\box0}}
{\setbox0=\hbox{$\scriptstyle\rm C$}\hbox{\hbox to0pt
{\kern0.4\wd0\vrule height0.9\ht0\hss}\box0}}
{\setbox0=\hbox{$\scriptscriptstyle\rm C$}\hbox{\hbox to0pt
{\kern0.4\wd0\vrule height0.9\ht0\hss}\box0}}}}
\def\Hl{{\mathchoice
{\setbox0=\hbox{$\displaystyle\rm H$}\hbox{\hbox to0pt
{\kern0.4\wd0\vrule height0.9\ht0\hss}\box0}}
{\setbox0=\hbox{$\textstyle\rm H$}\hbox{\hbox to0pt
{\kern0.4\wd0\vrule height0.9\ht0\hss}\box0}}
{\setbox0=\hbox{$\scriptstyle\rm H$}\hbox{\hbox to0pt
{\kern0.4\wd0\vrule height0.9\ht0\hss}\box0}}
{\setbox0=\hbox{$\scriptscriptstyle\rm H$}\hbox{\hbox to0pt
{\kern0.4\wd0\vrule height0.9\ht0\hss}\box0}}}}
\def\Ol{{\mathchoice
{\setbox0=\hbox{$\displaystyle\rm O$}\hbox{\hbox to0pt
{\kern0.4\wd0\vrule height0.9\ht0\hss}\box0}}
{\setbox0=\hbox{$\textstyle\rm O$}\hbox{\hbox to0pt
{\kern0.4\wd0\vrule height0.9\ht0\hss}\box0}}
{\setbox0=\hbox{$\scriptstyle\rm O$}\hbox{\hbox to0pt
{\kern0.4\wd0\vrule height0.9\ht0\hss}\box0}}
{\setbox0=\hbox{$\scriptscriptstyle\rm O$}\hbox{\hbox to0pt
{\kern0.4\wd0\vrule height0.9\ht0\hss}\box0}}}}
\newcommand{\cc}{\mathcal C}
\newcommand{\ch}{\mathcal H}
\newcommand{\s}{\mathcal{S}}
\def\nn{\nonumber}
\newcommand{\eqa}{\begin{eqnarray}}
\newcommand{\neqa}{\end{eqnarray}}
\def\C{{\mathbbm C}}
\begin{document}

{\renewcommand{\thefootnote}{\fnsymbol{footnote}}

\title{An operational approach to spacetime symmetries:\\ Lorentz transformations from quantum communication}
\author[1]{Philipp A.\ H\"ohn\footnote{Present address: Institute for Quantum Optics and Quantum Information, Austrian Academy of Sciences, Boltzmanngasse 3, 1090 Vienna, Austria}\thanks{p.hoehn@univie.ac.at}}
\affil[1]{Perimeter Institute for Theoretical Physics, 31 Caroline Street North, Waterloo, ON N2L 2Y5, Canada}
\author[2,1,3]{Markus P.\ M\"uller\thanks{markus@mpmueller.net}}
\affil[2]{Department of Applied Mathematics, Department of Philosophy, University of Western Ontario, London, ON N6A 5BY, Canada}
\affil[3]{Institut f\"ur Theoretische Physik, Universit\"at Heidelberg, Philosophenweg 19, D-69120 Heidelberg, Germany}

}

\date{June 22, 2016}
\setcounter{footnote}{0}

\maketitle

\begin{abstract}
In most approaches to fundamental physics, spacetime symmetries are postulated a priori and then explicitly implemented in the theory.
This includes Lorentz covariance in quantum field theory and diffeomorphism invariance in quantum gravity, which are seen as
fundamental principles to which the final theory has to be adjusted. In this paper, we suggest, within a much simpler setting, that this kind of reasoning can actually be reversed,
by taking an operational approach inspired by quantum information theory. We consider observers in distinct laboratories, with local
physics described by the laws of abstract quantum theory, and without presupposing a particular spacetime structure.
We ask what information-theoretic effort the observers have to spend to synchronize their descriptions of local physics.
If there are ``enough'' observables that can be measured universally on several different quantum systems,
we show that the observers' descriptions are related by an element of the {orthochronous} Lorentz group ${\rm O^+}(3,1)$, together with a global scaling factor. Not only does this operational approach predict the Lorentz transformations, but it also accurately describes the behavior of relativistic Stern-Gerlach devices in the WKB approximation, and it correctly predicts that quantum systems carry Lorentz group representations of different spin. This result thus hints at a novel information-theoretic perspective on spacetime.
\end{abstract}

\section{Introduction}
Spacetime symmetries are powerful principles guiding the construction of physical theories. Their predictive power comes from the constraints
that arise from demanding that the fundamental physical laws are invariant under those symmetries. A paradigmatic example is given by quantum
field theory~\cite{Peskin}, where Lorentz covariance severely constrains the physically possible quantum fields and successfully predicts the different
types of particles that we find in nature. Similarly, diffeomorphism invariance guides our attempts to construct a quantum theory of gravity \cite{Rovelli:2004tv,Rovelli:1990ph,Rovelli:1990pi,Lewandowski:2005jk,Fleischhack:2004jc,Dittrich:2014ala}.

However, diffeomorphism invariance in particular reminds us of an important insight that played a fundamental role in the
construction of general relativity: that spacetime has a fundamentally operational basis {and that no external non-dynamical `background' exists on which physics takes place. Physical objects and systems can only be localized and referred to in {\it relation} to one another and not with respect to an extrinsic reference \cite{Rovelli:2004tv,Rovelli:1990ph,Rovelli:1990pi}}. The description of spacetime in terms of a metric field
can be derived from the equivalence principle, which is obtained as a statement about what an observer in a free-falling elevator can or cannot
get to know by performing measurements. Spacetime symmetries, like the Lorentz transformations in special relativity, can be interpreted as ``dictionaries''
that translate among distinct descriptions of the same physics given by different observers. These transformations constitute the {\it relations} among observers, and many of their properties can be understood
by analyzing operational tasks like clock synchronization.

All these operational protocols must be performed within the laws and limits of quantum physics. Thus, it is natural to take an ``inside point of view'' {on spacetime -- and physics in general --, giving primacy to relations among observers rather than spacetime itself. Furthermore, it suggests} to ask the fundamental question how different descriptions of the same physics by different observers are related, if we only assume the
validity of quantum theory. One might expect -- or at least hope -- that the spacetime symmetry group does not have to be postulated externally,
but instead emerges from the structure of physical objects themselves, that is, from the structure of quantum theory.

This general idea is not new, but has been pursued in several ways during the last decades. In his ``ur theory'', von Weizs\"acker~\cite{Weizsaecker}
argued that the spatial symmetries are inherited from the ${\rm SU}(2)$ symmetry of the quantum two-level system. Wootters~\cite{WoottersThesis,Wootters89} pointed out
the relation between distinguishability of quantum states and spatial geometry, and several authors~\cite{Holevo,Brody,MM3D,DakicBrukner3D} have analyzed aspects of the relation
between properties of space and the structure of quantum theory. This research has substantial overlap with questions in quantum information theory
regarding the use of quantum states as resources for reference frame agreement~\cite{Bartlett,GourSpekkens}.

In this paper, we approach this old question from a new angle by using the ideas and rigorous vocabulary of (quantum) information theory, {paired with a sometimes underappreciated insight from gravitational physics, namely that much of spacetime structure is encoded in the communication relations among all observers contained in it. Indeed, the communication relations encode the causal structure which, in turn, (under mild conditions) determines the spacetime geometry -- up to the conformal structure \cite{Hawking,Geroch}. One might even \emph{identify} spacetime with the set of all communication relations among all observers. Can we thus read out spacetime properties from elementary considerations on information communication? Rather than immediately attempting to derive the geometry, as a first step, we shall focus on one of the most basic ingredients underlying a spacetime picture, namely on reference frames and how to obtain the transformations among them from observer communication relations. }

To this end, we consider an information-theoretic
scenario involving two agents, Alice and Bob, {each with their own description of physics}, who cooperate to solve a simple communication task. The task is that Alice sends a request to Bob in terms of
classical information, and Bob is supposed to answer by sending the concrete physical object that was described by Alice's message.
Then we ask for the smallest possible group of transformations that Alice has to apply to her request to make sure the task succeeds.
We argue that this gives an operational definition of {the relation between reference frames. This yields an abstract and purely informational derivation of the reference frame transformations which does {\it not} rely on any externally given spatial or spacetime symmetries. We neither presuppose any specific spacetime or causal structure that Alice and Bob are `immersed' in, nor do we assume a Lorentzian or Euclidean signature and neither a specific dimension. Nevertheless, we shall argue that these reference frame transformations must be ultimately related to the local spacetime isometry group.}

After proving some general properties of this scenario in Section~\ref{SecReferenceFrames}, we consider in Section~\ref{SecO3} the special case that the objects to
transmit are stand-alone finite-dimensional quantum states. Under the hypothesis that many different sorts of quantum systems can be measured in the same
``universal'' devices, we show that the resulting transformation group, relating Alice's and Bob's reference frames, will be the orthogonal group ${\rm O}(3)$.
This confirms von Weizs\"acker's intuition~\cite{Weizsaecker}, but puts it on firm operational grounds, by specifying detailed physical background assumptions
that are sufficient (and probably necessary) to arrive at this conclusion, and by pointing out how these assumptions are realized in actual physics. Furthermore, we also derive the fact
that quantum systems carry projective representations of ${\rm SO}(3)$ of different spin.

In Section~\ref{sec_phys}, we argue that the previous derivation contained one crucial oversimplification: namely that the outcomes of measurements (that is, the eigenvalues of observables) are simply abstract labels without physical meaning. Dropping this assumption, and considering the possibility that the outcomes are actual physical quantities, leads us to consider different \emph{types} of quantum systems. Analyzing the communication task in this more general setting, it turns out
that Alice's and Bob's descriptions will be related by a Lorentz transformation, an element of ${\rm O}^+(3,1)$. These transformations act as isometries among
{\it different} finite-dimensional Hilbert spaces.

Thus, we arrive at the Lorentz group, without having assumed any aspects of special relativity in the first place. Furthermore, the resulting formalism turns out to
correctly describe the behavior of relativistic Stern-Gerlach measurement devices, under a WKB approximation where the spin degree of freedom does not
mix with the momentum of the particle. This provides further evidence for a spacetime interpretation of this abstractly derived Lorentz group.

In this article, we therefore reverse the standard point of view: instead of postulating a symmetry group and working out its physical or information-theoretic consequences (as is usually done e.g.\ in quantum field theory or the theory of quantum reference frames~\cite{Bartlett,GourSpekkens}), we start with natural information-theoretic and physical assumptions and derive the symmetry group from them. The motivation is to develop a novel information-theoretic understanding of spacetime structure, beginning here with an informational perspective on reference frames and their relations.

\section{Two agents who have never met: an operational account of reference frames}
\label{SecReferenceFrames}
Reference frames are usually introduced in theoretical physics in situations where there is some geometric structure which can be described in different
coordinate systems. The fact that these different descriptions are relevant has however an operational interpretation: \emph{different observers (agents, or physicists)
will in general use different descriptions for the same physical property}, as long as they have not agreed on a method of description beforehand. {The origin of this disagreement may be physical or simply the use of distinct conventions by different observers.}

In the following, we will emphasize this operational point of view by taking an approach that is well-known from information theory: we describe a
scenario involving two parties, traditionally called Alice and Bob, who cooperate to solve a certain communication task. What is to be communicated in our case,
however, will not be classical information, but different kinds of physical objects.
The case where the physical objects in question are abstract quantum states, carrying some group representation for which Alice and Bob may not share
a common frame of reference, is a standard scenario in quantum information theory, cf.~\cite{Bartlett,GourSpekkens} and the references therein. A standard question
asked there is, for example, how many asymmetric quantum states are needed to serve as a replacement of a classical reference frame. This usually presupposes the {\it classical} symmetry group which is to translate among distinct quantum reference frames. That is, the quantum reference frames have to be adjusted to the assumed classical symmetry group.

However, this perspective is conceptually not fully satisfactory because the classical world emerges from a quantum one, rather than the other way around. In a fundamentally quantum world, we better reverse the perspective and {\it derive} the appropriate classical symmetry group from quantum structures. That is, we ought to be able to explain the quantum origin of the classical symmetry group, rather than assuming the latter and adjusting the quantum theory to it. This shall be the goal of the present manuscript.

Our approach is therefore substantially different from the literature on quantum reference frames in that it asks a distinct kind of question: \emph{given some sort of physical objects (for example, quantum systems), what is the \underline{smallest}
possible transformation group that translates between Alice's and Bob's descriptions, given that they both cooperate and use all available physical structure?}
That is, we are not assuming any externally given group that acts on the objects, but instead ask for the group that emerges from the structure of the
physical objects themselves.

Consider an observer -- Alice -- who has access to a certain property of a physical system (in the sense that she can measure it), and who tries to describe this
property in mathematical terms. Denote by $\s_\phys$ all the values (or states, or configurations) that this physical property can possibly attain; then Alice is looking for
a map $\varphi_A:\s_\phys\to\s$, where $\s$ is the set of mathematical objects that she uses to describe the physical property.

As a simple example, suppose that Alice lives in three-dimensional Euclidean space of classical physics, and would like to describe the velocity of a billiard ball
at a given time. She can determine the velocity by a measurement, and read off a description of the velocity from a scale on her measurement device.
In this case, $\s_\phys$ denotes the set of all possible physical velocities, and $\s$ equals $\Rl^3$, the set of vectors with three real numbers as entries.

Different observers may (and will) in general use different maps to encode physics into mathematical descriptions; if we have another observer, Bob, then he
may use his own map $\varphi_B$, with $\varphi_B\neq \varphi_A$. Given a fixed physical property $x\in\s_\phys$, Alice's description will be $\varphi_A(x)$,
while Bob's description will be $\varphi_B(x)$. There are many different possible reasons for this. One possible reason is that Alice and Bob are using different
kinds of measurement devices to determine the physical property, for example differing in the choice of units or orientation.

Ideally, we would like to postulate that $\varphi_A,\varphi_B:\s_\phys\to\s$ are
one-to-one and onto (that is, bijective), at least in principle.
In this case, the map $\varphi_A\circ \varphi_B^{-1}$ is well-defined, and yields Alice's description
in terms of Bob's. Similarly, $T:=\varphi_B\circ\varphi_A^{-1}$ computes Bob's description from Alice's. These are maps from $\s$ to $\s$ (assuming that both observers use
the same mathematical objects to encode the physical property), and they are bijective because $\varphi_A$ and $\varphi_B$ are.
They are elements of the group of bijections of $\s$ into itself, and they play a natural role in a simple communication task involving Alice and Bob which is
depicted in Figure~\ref{fig_scenario1}.
\begin{figure*}[!hbt]
\begin{center}
\includegraphics[angle=0, width=8cm]{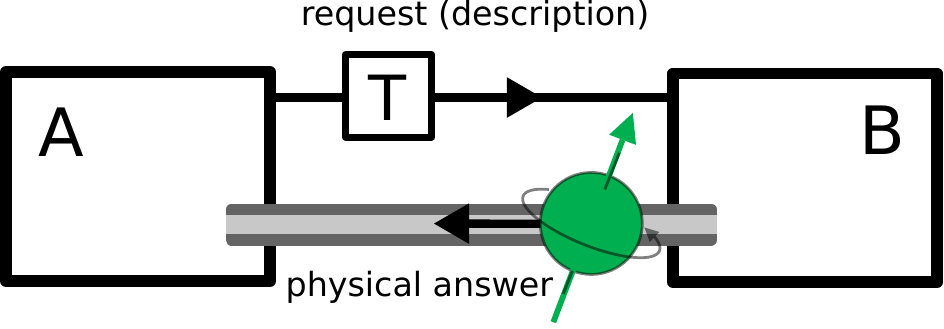}
\caption{The communication task. Alice requests a physical object by sending a classical description, and Bob answers by sending the corresponding
physical system in the desired state. If Alice and Bob use different maps $\varphi_A$ and $\varphi_B$ to describe physical objects, a correcting transformation $T$
must be applied to Alice's message before it arrives at Bob's laboratory. This $T$ is seen as the ``information gap'' between Alice and Bob; agreeing on $T$ corresponds to agreeing on a common
frame of reference with regards to the desired physical object.}
\label{fig_scenario1}
\end{center}
\end{figure*}

The scenario assumes that Alice is able to send classical information to Bob (as a colorful illustration, we may think of Alice and Bob being able to talk on the telephone).
In the first step of the game, Alice sends
Bob a ``classical request'', telling him to prepare a certain physical system. In the second step, Bob answers this request by sending an actual instance of
the requested physical system with the desired property.

Clearly, if Alice and Bob have never met before, and thus have never agreed on a shared reference frame, Bob will not know what map $\varphi_A$ Alice is
using to describe the desired physical quantity. The best he can do is to guess $\varphi_A$ -- or, equivalently, hope for the unlikely fact that $\varphi_A=\varphi_B$ -- and
send a physical system $x$ such that $\varphi_B(x)$ equals the description that Alice sent. If Alice checks whether she received the desired physical object, she will in
general detect failure, except for the unlikely case that $\varphi_A=\varphi_B$.

To correct for this problem, Alice can apply a correcting transformation $T$ before the classical information is actually sent to Bob.
It is easy to see that the protocol succeeds if Alice applies $T=\varphi_B\circ \varphi_A^{-1}$ to the classical description she sends out -- the problem is, of course,
that Alice may not know $\varphi_B$ and thus does not know $T$. However, if she knows $\varphi_B$, she can make the protocol succeed even if Bob does
not know $\varphi_A$. Regardless of whether Alice knows $\varphi_B$ or not, we may regard $T$ as the \emph{relation between Alice's and Bob's local descriptions}.
Not knowing $T$ may be regarded as an ``information gap'' between them; closing the information gap, i.e.\ negotiating on a map $T$, corresponds to setting
up a common frame of reference (regarding the physical property that is intended to be sent).

Sometimes, instead of preprocessing her classical description via the map $T=\varphi_B\circ\varphi_A^{-1}$, Alice may equivalently postprocess the physical
system that she obtains by applying the physical transformation $T_\phys:\s_\phys\to\s_\phys$, given by $T_\phys:=\varphi_A^{-1}\circ \varphi_B$, cf.\ Figure~\ref{fig_scenario2}.
If this is possible, we call $T$ \emph{implementable}. Note that $T$ and $T_\phys$ are different kinds of maps: $T$ is a transformation on \emph{classical descriptions},
and thus something that can always be accomplished by doing computations on representations on a piece of paper (or in a computer memory). The map $T_\phys$,
in contrast, is an actual physical transformation; in many cases, physics may disallow the actual implementation  of $T_\phys$. We will soon discuss an example below.
\begin{figure*}[!hbt]
\begin{center}
\includegraphics[angle=0, width=8cm]{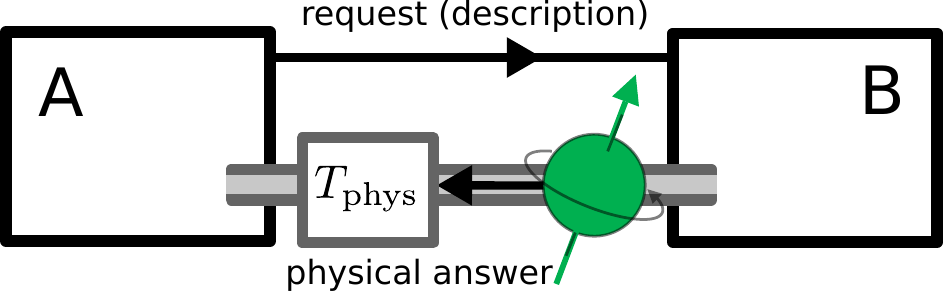}
\caption{Sometimes, instead of preprocessing the classical request via $T$, Alice may equivalently postprocess the physical answer via $T_\phys$. In this case,
we call $T$ \emph{implementable}.}
\label{fig_scenario2}
\end{center}
\end{figure*}

If there were no further restrictions on the maps $\varphi_A,\varphi_B$,
then the coordinate transformation functions $T=\varphi_B\circ\varphi_A^{-1}$ could exhaust all possible bijective maps from $\s$ to itself.
However, in many cases, it is physically impossible to actually use every map $\varphi$ as a description map, in the sense that $\varphi(x)$ may be physically
impossible to determine, given the physical object $x$. This is the case when there are physical
restrictions on the possible measurement devices that agents may access or implement. As a simple example, if we have again Alice in three-dimensional classical
mechanics, then only \emph{continuous} maps $\varphi_A$ will be physically relevant, simply because of the possible measurement procedures that Alice
may implement: she can build devices that measure velocity to arbitrary accuracy, but never to perfect accuracy. But then, the coordinate transformation maps $T=\varphi_A\circ\varphi_B^{-1}$ (and $T^{-1}=\varphi_B\circ\varphi_A^{-1}$)
must be continuous, too. In general, we obtain a group $\g_{\max}$ of physically meaningful transformations,
\[
   \g_{\max}=\left\{T:\s\to\s\,\,|\,\,T \mbox{ transforms between physically accessible/allowed descriptions}\right\}.
\]
In the example of classical physics and billiard ball velocity, this would be the group of homeomorphisms of $\Rl^3$ into itself. In general, one would
expect that $\g_{\max}$ is the subgroup of all bijections that preserve some key physical structure (which, in this example, is the topology).

Now recall the scenario from Figure~\ref{fig_scenario1}. Suppose that Alice and Bob are both cooperative and would like to
agree on a transformation $T$ in order to succeed with the communication task, and do so with as little effort as possible.
For this, they are willing to adjust their own local descriptions as long as it facilitates their effort to find the relation between their respective descriptions of the observed system. Then one way to reduce the effort is to stop using \emph{arbitrary} possible encodings
$\varphi_A$ and $\varphi_B$, and to draw instead from a \emph{physically distinguished subset} of encodings.

\begin{definition}
\label{DefMinimal}
A subgroup $\g\subset\g_{\max}$ is called \emph{achievable} if there is a subset of ``distinguished encodings'' $\Phi$ such that every observer can locally ensure by physical means that she is using a distinguished encoding $\varphi\in\Phi$, and
\[
   \g=\left\{\varphi_1\circ\varphi_2^{-1}\,\,|\,\, \varphi_1,\varphi_2\in\Phi\right\}.
\]
An achievable subgroup $\g$ is called \emph{minimal} if there is no proper subgroup $\g'\subsetneq\g$ which is achievable, too.
\end{definition}
Why is the set $\g$ always a group? This follows from a natural assumption on any subset $\Phi$ of physically distinguished
encodings; namely that
\begin{equation}
   \varphi_1,\varphi_2,\varphi_3\in\Phi\enspace\Rightarrow\enspace \varphi_3\circ\varphi_2^{-1}\circ\varphi_1\in\Phi.
   \label{eqGroup}
\end{equation}
Intuitively this is plausible: if $\varphi_1,\varphi_2,\varphi_3$ are ``equally natural'' encodings of physical objects, then an agent can obtain
another physically distinguished encoding by first using encoding $\varphi_1$, then determining the physical object that would have the same description via $\varphi_2$,
and finally encode this object via $\varphi_3$, and the resulting encoding should not be outside of the set of natural encodings. More formally, we can argue
as follows. In our communication task, Alice and Bob will try to use a common set of encodings $\Phi$ that is \emph{as small as possible}. If~(\ref{eqGroup})
was violated, then they could use a smaller set $\tilde\Phi$ of encodings, which would facilitate their task, rendering the choice of $\Phi$ inefficient.
To see this, suppose that there are $\varphi_1,\varphi_2,\varphi_3\in\Phi$ such that $\varphi_3\circ\varphi_2^{-1}\circ \varphi_1\not\in\Phi$.
Set $f:=\varphi_3\circ\varphi_2^{-1}$, which is a fixed map from $\s$ to $\s$ that Alice can describe to Bob in terms of classical information,
and define $\tilde\Phi:=\{\varphi\in\Phi\,\,|\,\, f\circ\varphi\not\in\Phi\}$.
Since $\varphi_1\in\tilde\Phi$ but $\varphi_2\not\in\tilde\Phi$, the set $\tilde\Phi$ is a non-empty strict subset of $\Phi$ which Alice and Bob
could use instead to encode physical objects. Thus, it makes sense to assume~(\ref{eqGroup}), from which it easily follows that $\g$ is closed with respect to composition
and inversion, that is, $\g$ is a group.

As a simple example, recall the example of the billiard ball velocity in classical mechanics. Instead of using an arbitrary
continuous encoding map $\varphi_A$, Alice may choose to use an \emph{inertial frame} to describe the velocity. Classical mechanics teaches us
how to achieve this: an inertial frame is one in which Newton's laws (stated in their inertial form) are valid; this is something that Alice can check by physical means.

Different inertial frames are related by a Galilean transformation. Since we are not interested in the spatial location of the billiard ball
(it will be part of the protocol to make sure it is located in Alice's lab after being sent by Bob), only the action of this group on the ball's momentum
is relevant. Restricting the Galilean group to the ball's velocity, we obtain the \emph{Euclidean group} ${\rm E}(3)$, describing combinations
of rotations and translations in momentum space which relate Alice's and Bob's laboratory.
Moreover, this group is also \emph{minimal}: there is clearly no way in which a
proper subgroup of ${\rm E}(3)$ would be sufficient to relate the velocity descriptions of observers who have never met. This follows from the fact that every element of ${\rm E}(3)$ defines a distinct inertial frame for velocity (relative to some reference inertial frame), and all inertial frames are physically equivalent.

What would be examples of non-minimal achievable groups in this case? The simplest example is $\g_{\max}$, which in this situation is the group of all homeomorphisms of $\Rl^3$
itself. We obtain it by allowing all continuous encodings $\varphi$; every observer can make sure to use a continuous encoding by simply \emph{being
forced to do this} by physics, as discussed before.

A less trivial example would be given by allowing all \emph{affine-linear} invertible encodings $\varphi$; that is,
encodings that preserve the vector space structure, but not necessarily the distance between points in momentum space. The corresponding group $\g'$ would
be ${\rm Aff}(3,\Rl)$, the affine group. Having Alice and Bob agree on an element of this group to establish a common frame of reference is more efficient than
referring to the full group $\g_{\max}$; it would describe a situation where Alice and Bob are not using arbitrary encodings, but those that they can find by
probing the affine vector space structure of momentum space (and building their measurement devices accordingly). If Alice and Bob are for some reason
not capable of measuring angles or lengths (notions that are only computable from an inner product), this would be the best they can do.

However, we know that in principle they \emph{can} do better, which is reflected by the fact that $\g'$ is achievable, but not minimal. It contains a
proper subgroup which is also achievable -- which is the Euclidean group. The Euclidean group corresponds to the best possible strategy to agree on a frame of reference, in the
sense that different observers can \emph{individually} commit to using inertial frame encodings, and by doing so, minimize the information gap between them.

Is there always a \emph{unique} minimal group? Taken literally, the answer is ``no'', as can be seen in our standard example. Let $\Phi$ be the set
of all encodings of velocities into vectors in $\Rl^3$ which correspond to choices of inertial frames, such that the corresponding group
$\g=\{\varphi_1\circ\varphi_2^{-1}\,\,|\,\, \varphi_1,\varphi_2\in\Phi\}$ is the Euclidean group. Consider the function $f(x_1,x_2,x_3)=(x_1^3,x_2^3,x_3^3)$,
which is a bijective map from $\Rl^3$ to itself. Then $\Phi':=\{f\circ\varphi\,\,|\,\,\varphi\in\Phi\}$ is also a physically distinguished encoding.
It is an encoding where an observer chooses an arbitrary inertial frame, determines the velocity in the corresponding coordinates, and then
takes the third power of all entries. The corresponding group $\g'$ is then achievable and even \emph{minimal} according to Definition~\ref{DefMinimal},
and it is different from the Euclidean group.

However, $\g'$ is \emph{isomorphic} to the Euclidean group; in fact, we have $\g'=f\circ\g\circ f^{-1}$. Taking any element $f\in\g$ and mapping it to $g'=f\circ g \circ f^{-1}$
is a bijective group homomorphism. That is, $\g'$ is ``exactly the same'' as $\g$ except for a relabeling. As an abstract group, the Euclidean group is the
unique minimal group for the scenario we described. Interestingly, this turns out to be true in \emph{all} scenarios:
\begin{lemma}
For any given scenario, all achievable minimal groups are isomorphic. That is, for any given scenario, the group $\g_{\min}$
which describes the minimal effort that two observers have to spend to agree on a common frame of reference is unique as an abstract group.
\end{lemma}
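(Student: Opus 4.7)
The plan is to show that any two minimal achievable groups $\g_1,\g_2$, arising from distinguished encoding sets $\Phi_1,\Phi_2$, are conjugate inside the permutation group of $\s$, which directly yields the required isomorphism.

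First I would record the basic structural observation that the closure property~(\ref{eqGroup}) makes each $\Phi_i$ a torsor for $\g_i$: fixing any $\varphi_i^0 \in \Phi_i$, one checks that $\Phi_i = \g_i \circ \varphi_i^0$, so every distinguished encoding is the translate of a single reference encoding by a unique group element. This is the algebraic content that makes conjugation transparent: a set $\Phi$ and a preferred reference together determine the group, and the group in turn determines all of $\Phi$.

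Next, pick $\varphi_1 \in \Phi_1$ and $\varphi_2 \in \Phi_2$ and set $T := \varphi_2 \circ \varphi_1^{-1}$, a bijection of $\s$. The set $T \circ \Phi_1 := \{T \circ \varphi \,:\, \varphi \in \Phi_1\}$ is obtained from $\Phi_1$ by post-composing with the \emph{fixed} mathematical map $T$. In direct analogy with the cubing bijection $f(x_1,x_2,x_3) = (x_1^3,x_2^3,x_3^3)$ of the example preceding the lemma, $T$ may be treated as shared classical data, so $T \circ \Phi_1$ remains physically distinguished: every observer first commits by physical means to a $\Phi_1$-encoding, then applies the prescribed classical relabeling $T$. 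Its associated group is the conjugate $T\g_1 T^{-1}$, and since conjugation by a fixed bijection of $\s$ induces a bijection between achievable subgroups of $\g_1$ and achievable subgroups of $T\g_1 T^{-1}$, the latter inherits minimality from the former. Now observe that $T\circ\Phi_1$ and $\Phi_2$ are left cosets of $T\g_1 T^{-1}$ and of $\g_2$, respectively, both containing the common element $\varphi_2 = T\circ\varphi_1$. Their intersection $\Phi^* := (T\circ\Phi_1)\cap\Phi_2$ is therefore nonempty and still satisfies~(\ref{eqGroup}), so it defines an achievable subgroup $H$ with $H \subseteq T\g_1 T^{-1}$ and $H \subseteq \g_2$. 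Minimality of $\g_2$ forces $H = \g_2$, and minimality of $T\g_1 T^{-1}$ forces $H = T\g_1 T^{-1}$; hence $\g_2 = T\g_1 T^{-1}$, and conjugation by $T$ is the sought abstract isomorphism $\g_1 \to \g_2$.

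The main obstacle will be justifying that the intersection $\Phi^*$ is \emph{physically distinguished} in the operational sense of Definition~\ref{DefMinimal}, i.e.\ that every observer can locally ensure by physical means that her encoding lies in $\Phi^*$. The intended reading is that two individually achievable physical commitment procedures can be performed in conjunction whenever the resulting sets have nonempty overlap (here guaranteed by $\varphi_2 \in \Phi^*$); this is a natural but interpretive assumption about what ``physically distinguished'' means, and it is the hinge that promotes the purely algebraic intersection argument to a genuine achievability statement. A subsidiary point is to confirm that $T$, though defined from the specific reference elements $\varphi_1,\varphi_2$, may legitimately be distributed as shared classical data---exactly as the cubing bijection is in the Euclidean example---so that $T\circ\Phi_1$ itself qualifies as physically distinguished in the first place.
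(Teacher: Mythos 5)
Your proposal is correct and follows essentially the same route as the paper's own proof: transport one distinguished set by the fixed bijection $T=\varphi_2\circ\varphi_1^{-1}$ (the paper uses $f=\varphi_{\g}\circ\varphi_{\mathcal{H}}^{-1}$ in the symmetric direction), note that the transported set is still physically distinguished with conjugate group, intersect with the other set at the common element, and invoke minimality of both groups to force equality and hence conjugacy. The points you flag as delicate (achievability of $T\circ\Phi_1$ and of the intersection) are exactly the ones the paper also treats only by a brief operational remark, so no substantive gap separates the two arguments.
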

\begin{proof}
Fix any scenario.
Suppose that $\g$ and $\mathcal{H}$ are both achievable and minimal according to Definition~\ref{DefMinimal}. Denote by $\Phi_{\g}$ and $\Phi_{\mathcal{H}}$ the
corresponding sets of physically distinguished encodings. Choose $\varphi_{\g}\in\Phi_{\g}$ and $\varphi_{\mathcal{H}}\in\Phi_{\mathcal{H}}$ arbitrarily.
Set $f:=\varphi_{\g}\circ\varphi_{\mathcal{H}}^{-1}$, then $f:\s\to\s$ is bijective. Define $\Phi'_{\mathcal{H}}:=f\circ \Phi_{\mathcal{H}} = \{f\circ\varphi\,\,|\,\, \varphi
\in\Phi_{\mathcal{H}}\}$. Then $\Phi'_{\mathcal{H}}$ is a set of physically distinguished encodings. The corresponding group is
\[
   \mathcal{H}'=\{\varphi_1\circ\varphi_2^{-1}\,\,|\,\, \varphi_1,\varphi_2\in\Phi'_{\mathcal{H}}\}
   =\{f\circ \varphi_1\circ \varphi_2^{-1}\circ f^{-1}\,\,|\,\, \varphi_1,\varphi_2\in\Phi_{\mathcal{H}}\} = f\circ\mathcal{H}\circ f^{-1}.
\]
Thus, the group $\mathcal{H}'$ is isomorphic to the group $\mathcal{H}$. Furthermore, $\mathcal{H}'$ is achievable and minimal, because $\mathcal{H}$ is.
Note that $\varphi_{\g}=f\circ\varphi_{\mathcal{H}}$, hence $\varphi_{\g}\in\Phi'_{\mathcal{H}}$. Thus, $\Phi:=\Phi_{\g}\cap\Phi'_{\mathcal{H}}$ is not empty.
Furthermore, it is a set of physically distinguished encodings (observers can physically ensure that an encoding is in $\Phi_{\g}$ and also that it is in $\Phi'_{\mathcal{H}}$,
hence they can ensure that it is in both). Therefore, the corresponding group
\[
   \mathcal{K}:=\{\varphi_1\circ \varphi_2^{-1}\,\,|\,\, \varphi_1,\varphi_2\in\Phi\}
\]
is achievable, and we have $\mathcal{K}\subset\g$ as well as $\mathcal{K}\subset\mathcal{H}'$. Since $\g$ and $\mathcal{H}'$ are both minimal,
we must have $\mathcal{K}=\g$ and $\mathcal{K}=\mathcal{H}'$, hence $\g=\mathcal{H}'$, and so $\g$ and $\mathcal{H}$ are isomorphic.
\end{proof}

Proper Euclidean transformations (in the connected component of the identity) are implementable:
instead of preprocessing her classical description, Alice can simply rotate and accelerate the physical objects she obtains from Bob.
The question whether a spatial reflection is implementable or not depends on the detailed assumptions on the physics. For example, we can think of a machine
that measures the velocity $v$ of an object, and then accelerates the object exactly to velocity $-v$. If the physical background assumptions allow this machine,
then reflections will be implementable, too. In the following, we will consider transformations on quantum states, where implementability is more severely restricted
by the probabilistic structure.

In general, there may be scenarios in which no minimal achievable group exists at all; however, all examples of this that we can think of at present are rather unphysical,
for example in the sense that they refer to groups which are not topologically closed. In all remaining examples of this paper, a minimal group $\g_{\min}$ exists. {A priori this $\g_{\min}$ may depend on the communicated objects. The smallest possible group $\g_{\min}$ which translates among the description of {\it all} physical objects that agents may communicate embodies the minimal efforts they have to spend in order to agree on a description of physics. Accordingly, we shall later take it as {\it defining} the reference frame transformations.}

We finish this section with an example meant to elucidate in more detail the rules of the communication task that we have in mind.
Suppose that the physical objects to be transmitted are \emph{pairs of vectors} in classical mechanics; say, the velocities of \emph{two} billiard balls.
Since the Euclidean group ${\rm E}(3)$ is a (minimal) achievable group for \emph{one} billiard ball, the group
$\g={\rm E}(3)\times {\rm E}(3)$ is achievable -- two copies of the group, one for each billiard ball. However, Alice and Bob can just agree to use the \emph{same}
encoding (resp.\ inertial frame) for the description of \emph{both} billiard balls, and thus achieve $\g_{\min}={\rm E}(3)$ also for pairs of billiard balls.
This is possible whenever it is physically clear that if Bob sees two billiard balls next to each other and acknowledges that they have the same velocity,
then Alice will agree with this fact even after the balls have been transported to her laboratory.\footnote{This tacitly presumes the billiard balls to remain invariant under transport.}

This example illustrates that \emph{objective relations between
given physical objects} represent useful physical structure that can be used to simplify the communication task. This will become important in Subsections~\ref{SubsecFullLab}
and~\ref{Subsec44}.

\section{Abstract quantum states and the orthogonal group ${\rm O}(3)$}
\label{SecO3}
\subsection{Transmitting finite-dimensional quantum states}
\label{SubsecTransmitFinite}
We now consider the special case that the objects Alice and Bob are intending to describe and to send are \emph{abstract finite-dimensional
quantum states}\footnote{We could also consider normal quantum states on infinite-dimensional separable Hilbert spaces. Many conclusions of
Section~\ref{SecO3} should apply to these as well.}.
In the communication scenario in Figure~\ref{fig_scenario1}, Alice sends a request to Bob, asking him to prepare
a physical system in a specific $N$-level quantum state; Bob in turn sends a system in this state to Alice.
We are only interested in the quantum states themselves, not in the question whether these quantum
systems are correlated with any other system. In particular, if Bob is supposed to prepare and send a mixed quantum state, it does not matter whether this
is a proper or improper mixture, and whether there is any other system that serves as a purifying system of the mixed state. All that matters for the communication
task to succeed is the pure or mixed state that will in the end arrive in Alice's laboratory, which should match Alice's request.

The ``information gap'' in this scenario is that Alice and Bob will in general not have agreed beforehand on a common orthonormal basis in the Hilbert
space that is supposed to carry the quantum state. Since different bases are related by unitaries, the following result is not surprising:
\begin{lemma}
\label{LemQuantumGeneral}
The projective unitary antiunitary~\cite{Parthasarathy} group ${\rm PUA}(N)$ is achievable in the setup described above -- that is, the group of conjugations $\rho\mapsto U\rho U^{-1}$
with $U$ either unitary or antiunitary. The subgroup of implementable transformations is the projective unitary group ${\rm PU}(N)$.
\end{lemma}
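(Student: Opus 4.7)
To prove this lemma, I would first identify a physically motivated set of distinguished encodings $\Phi$ for abstract finite-dimensional quantum states. Each observer can locally select a reference observable (a non-degenerate self-adjoint operator on her $N$-level system) together with an enumeration of its eigenvalues. Such a choice determines an orthonormal basis $\{|e_i\rangle\}$ of her Hilbert space, and hence a valid encoding $\varphi$ that represents any physical state as the density matrix $(\rho)_{ij} = \langle e_i|\rho|e_j\rangle \in M_N(\mathbb{C})$. Since any observer can ensure she is using such a reference-observable encoding, and composing three such encodings as in~(\ref{eqGroup}) again yields one, this $\Phi$ qualifies as a set of physically distinguished encodings.

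Step two is to compute $\g = \{\varphi_1\circ\varphi_2^{-1}\,|\,\varphi_1,\varphi_2\in\Phi\}$. If two observers pick bases related by a unitary $U$, the same physical state acquires two descriptions related by $\rho\mapsto U\rho U^{\dagger}$; varying $U$ over $U(N)$ and quotienting by the physically irrelevant global phase already yields the projective unitary group ${\rm PU}(N)$. However, an observer cannot locally distinguish her Hilbert space $\mathcal{H}$ from its complex conjugate $\overline{\mathcal{H}}$: the two carry identical observable algebras and identical transition probabilities, so there is no experimental procedure by which she can unambiguously pin down the sign of the complex structure $J$ (``what we call $i$''). Allowing this additional indistinguishability adjoins a complex-conjugation operation to $\g$. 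By Wigner's theorem, the group generated by unitary conjugations together with complex conjugation, acting on density matrices, is exactly the projective unitary-antiunitary group ${\rm PUA}(N)$ of conjugations $\rho\mapsto U\rho U^{-1}$ with $U$ unitary or antiunitary.

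For the second claim, I would use that an \emph{implementable} transformation, in the sense of Figure~\ref{fig_scenario2}, corresponds to an actual physical operation $T_\phys$ on quantum states that must be completely positive, trace preserving, and invertible. A standard argument (together with the Stinespring/Choi characterization of CPTP maps) shows that the invertible CPTP maps on an $N$-level system are precisely the unitary conjugations. Antiunitary conjugations are positive and trace preserving but not completely positive; the canonical obstruction is the transpose map, which fails the CP test whenever the system is entangled with an ancilla. Consequently the implementable subgroup of ${\rm PUA}(N)$ is exactly ${\rm PU}(N)$.

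The main obstacle, to my mind, is justifying rigorously why the antiunitary branch appears at all rather than only the unitary one. This rests on the claim that a single observer has no purely local, operational means of fixing the complex structure on $\mathcal{H}$; once this is granted, Wigner's theorem supplies the rest. The CP-versus-positive distinction in the implementability part is, by contrast, routine, and the projectivity of both groups is simply the statement that a global phase in $U$ drops out of the induced map on density matrices.
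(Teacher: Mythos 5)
Your proposal is essentially correct and reaches the paper's conclusion, but by a somewhat different route, and with a few soft spots worth flagging. The paper does not build encodings from local basis choices; it takes as the distinguished set $\Phi$ all encodings that map statistical mixtures of physical states to convex mixtures of density matrices -- an operationally verifiable property -- and then Wigner's theorem (in its convex-linear form) does the real work: any $\varphi_A\circ\varphi_B^{-1}$ between two such encodings is a convex automorphism of the set of density matrices, hence a unitary or antiunitary conjugation, and conversely every such conjugation relates two valid encodings, so the group is exactly ${\rm PUA}(N)$. Your route instead constructs $\Phi$ concretely from reference observables and adjoins complex conjugation via an indistinguishability argument. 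Three remarks. First, your construction is operationally under-specified: a non-degenerate reference observable fixes its eigenbasis only up to phases, and the off-diagonal entries $\langle e_i|\rho|e_j\rangle$ are not determined by measuring that observable alone -- they require a full tomographic convention, including precisely the choice of complex structure you discuss afterwards; the paper's mixture-preservation criterion sidesteps this. Second, your appeal to Wigner's theorem is misplaced: that unitary conjugations together with complex conjugation generate ${\rm PUA}(N)$ is elementary; in the paper, Wigner's theorem is needed for the non-trivial converse, namely that mixture-preserving encodings can differ by nothing else. Third, the point you single out as the main obstacle -- justifying that the antiunitary branch ``appears at all'' -- is not actually required for this lemma: the statement claims only achievability in the sense of Definition~\ref{DefMinimal}, so it suffices to include the conjugated encodings in $\Phi$ (they are just as physically accessible, and closure~(\ref{eqGroup}) still holds); whether observers could exclude them is a question of minimality, which the paper explicitly defers. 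On implementability your argument matches the paper's (the transpose is positive but not completely positive), with the caveat that ``invertible CPTP maps are unitary conjugations'' must be read as CPTP bijections of the state space (equivalently, CPTP maps with CPTP inverse) -- invertibility as a linear map alone does not suffice -- but since $T_\phys=\varphi_A^{-1}\circ\varphi_B$ is a bijection of $\s_\phys$ by construction, your conclusion stands.
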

Note that ${\rm PUA}(N)$ can also be characterized as the set of transformations of the form $\rho\mapsto U\rho U^\dagger$ and $\rho\mapsto U\rho^T U^\dagger$,
with $U$ unitary. The transposition can be achieved by conjugation with an antiunitary map.

The proof of this lemma is simple. First,
we may assume that Alice and Bob have both agreed to represent quantum states by $N\times N$ unit trace positive semidefinite Hermitian matrices (that is, density matrices), which
thus constitute the set $\s$ of mathematical objects used as descriptions of the physical systems (cf.\ the notation in Section~\ref{SecReferenceFrames}).
Moreover, it is a physically
distinguished choice to encode quantum states such that \emph{statistical mixtures of quantum states are mapped to convex mixtures of the
corresponding density matrices}, as it is standard in quantum mechanics. The probabilistic interpretation of mixtures implies that different observers
cannot disagree on the question whether a given state is a statistical mixture of other given states.

Thus, every group element $T=\varphi_A\circ\varphi_B^{-1}$ must be a convex-linear symmetry of the set of $N\times N$ density matrices.
According to Wigner's Theorem~\cite{Bargmann},
this is equivalent to $T$ being a conjugation either by a unitary or antiunitary matrix.

Recalling the definition of implementability as explained in Figure~\ref{fig_scenario2}, we also see that only the subgroup of unitary conjugations is (physically)
implementable; antiunitary conjugations are not. They correspond to maps which are not completely positive.

The group ${\rm PUA}(N)$ is achievable, but is it minimal? The answer to this question clearly depends on the detailed physical background assumptions,
in particular on the question whether the physical carrier system (which Alice asks Bob to use) carries any physically distinguished choice of Hilbert space basis.
For example, if $N=2$, a carrier system could be given by a ground state and an excited state (say, of an atom) that span a two-dimensional Hilbert space.
In this case, Alice and Bob would under many circumstances agree on this orthonormal Hilbert space basis, and then $\g_{\min}=\{\mathbf{1}\}$ or
$\g_{\min}={\rm SO}(2)$, depending on whether Alice and Bob would also agree on the relative phase between the states.

Thus, the question of $\g_{\min}$ for transmitting quantum states only becomes interesting when we consider suitable additional physical background
assumptions. A crucial property of quantum systems in actual physics is that \emph{different systems can interact}. {In Subsection \ref{SubsecFullLab},
we will explore in detail the implications of this simple fact for the communication scenario of two observers in local quantum laboratories and, in particular, under which conditions an agreement on a qubit description can be employed to agree on the description of (almost) arbitrary $N$-level systems. But firstly, we briefly explain the agreement procedure for qubits.}

\subsection{Agreeing on qubit descriptions}\label{sec_qbit}

{Suppose Alice and Bob wish to agree on the description of an ensemble of qubits. For instance, they might wish to agree on the description of an ensemble of electron spins. Since a qubit density matrix $\rho$ has three degrees of freedom, this requires a tomographically complete set of three observables $\hat{M}_1,\hat{M}_2,\hat{M}_3$, for example the Pauli matrices $\sigma_x,\sigma_y,\sigma_z$.}

{Alice can ask Bob, through classical communication, to prepare a large number of copies of a fixed eigenstate $\rho_1$ with specified eigenvalue of $\hat{M}_1$ and to send these to her through the quantum communication channel (assumed to be noise-free and reversible, for simplicity). Upon receipt, Alice will perform tomography on the received systems in order to compare her description of the received states, call it $\rho_1'$, with her request; in general, she will find disagreement $\rho_1'\neq\rho_1$. The same task is repeated with copies of eigenstates $\rho_2,\rho_3$ of fixed eigenvalue of $\hat{M}_2,\hat{M}_3$, in general giving $\rho_2'\neq\rho_2$ and $\rho_3'\neq\rho_3$. As the three requested eigenstates $\rho_1,\rho_2,\rho_3$ constitute a basis for the qubit density matrices, this provides sufficient information for Alice to compute the unitary (or antiunitary) $U\in{\rm UA}(2)$, translating between the received and requested states as $\rho'_i=U\,\rho_i\,U^{-1}$, $i=1,2,3$, uniquely up to phase. If no preferred choice of Hilbert space basis exists, $U$ can, in principle, be any element of ${\rm UA}(2)$. By contrast, if a ``natural'' choice of basis exists, $U$ must be contained in a subset of ${\rm UA}(2)$, as argued above.}

The transformation $U$ defines a unique element of the achievable qubit transformation group $\g_{\rm qubit}\simeq{\rm PUA}(2)\simeq{\rm O}(3)$ which translates between Bob's and her descriptions of the qubits. This sets up an agreement on the description of arbitrary states, at least for the kind of qubits they employed in their procedure: if Alice henceforth wishes to receive a state $\rho_A$ (in her description), she will have to ask Bob, via classical communication, to prepare the state which takes the form $U^{-1}\,\rho_A\,U$ in his description.

As an aside, we note that the temporal stability of this qubit agreement procedure requires an additional assumption: if Bob sends two sets of copies of some fixed state at distinct times, then Alice will also receive them as two sets of copies of one state. As natural as this assumption may seem, it cannot hold under all physical circumstances; in particular, for accelerated observers in special relativity or even observers in a gravitational context, this condition may not be true in general. We shall call it therefore the {\it inertial frame condition} as it might be taken as an informational definition for two communicating frames to be inertial. We shall henceforth tacitly assume that it holds.

\subsection{Universal measurement devices}
\label{SubsecFullLab}
From a physical point of view, a natural question is whether we can describe the relation between the two laboratories of Alice and Bob,
including \emph{all} local quantum physics, by a single (small) group. Only in this case can we meaningfully take this group of transformations as constituting proper reference frame relations. If we consider Alice's local laboratory content as a collection of (many) finite-dimensional quantum
systems $S,S',S'',\ldots$, then Lemma~\ref{LemQuantumGeneral} only tells us that we can achieve ${\rm PUA}(N)$ for every system $S$ with $\dim S=N$; the tensor product over all these groups, each corresponding to one system, would form an achievable group. This is a huge group that seems highly inefficient for the communication task. If one could not do better, then Alice and Bob would have to negotiate a common reference frame for each of their local quantum systems separately. Clearly, there must be a way to do better than this. After all, there are relations between the different quantum systems that might be exploited in the communication task. For example, can we use the fact that different quantum systems interact?

Quantum systems carry quantum states, which are nothing but catalogues of probabilities of measurement outcomes. Therefore, we could relate different quantum systems $S$ and $S'$ to each other if we could somehow \emph{apply one and the same measurement device to both of them}. Let us elaborate on this idea, and imagine a device which we might call a  ``universal measurement device'', which is an apparatus that measures a given observable $\hat M$
for two quantum systems $S$ and $S'$ (of possibly different Hilbert space dimensions) universally.
In other words, we have a measurement device which accepts as inputs both systems $S$ and $S'$, and
in the former case measures an observable $\hat M(S)$, and in the latter another observable $\hat M(S')$. As a paradigmatic example, we can think of an idealized \emph{Stern-Gerlach
device} which measures, say, the spin in $z$-direction, $\hat S_z$, both for systems $S$ of spin $1/2$ and systems $S'$ of spin $1$ with the same given magnetic
field gradient, as sketched in Figure~\ref{fig_universal1}. Physically, the operators $\hat S_z(S)$
and $\hat S_z(S')$ represent the same quantity, but mathematically they correspond to observables on different Hilbert spaces.
\begin{figure*}[!hbt]
\begin{center}
\setlength{\fboxrule}{1pt}
\fbox{\includegraphics[angle=0, width=10cm]{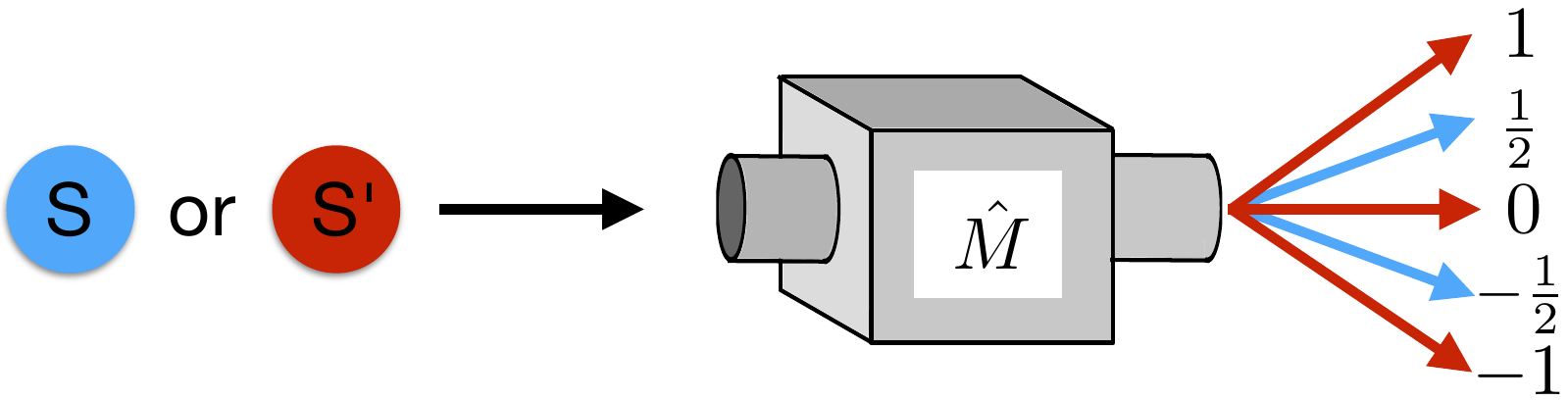}}
\caption{A Stern-Gerlach device as a universal measurement device. It measures the observable $\hat S_z$ (spin in $z$-direction) both on quantum systems $S$ of spin-$1/2$ and quantum systems $S'$ of spin-$1$, implementing mathematical observables $\hat S_z(S)$ and $\hat S_z(S')$ which are operators on $\C^2$ and $\C^3$ respectively.}
\label{fig_universal1}
\end{center}
\end{figure*}

Why should universal measurement devices exist at all, and what does it mean in general that $\hat M(S)$ and $\hat M(S')$ correspond to ``the same'' observable $\hat M$ on the
different Hilbert spaces of $S$ and $S'$? 
In fact, these two questions are intimately intertwined. Namely, a minimal prerequisite for operationally defining ``that very same quantity" $\hat M$ for physically distinct systems $S,S'$ is that there exists, in principle, a measurement device in which $\hat M$ can be measured for both $S$ and $S'$ -- ideally in the same setting.\footnote{For instance, returning to the Stern-Gerlach device as a universal measurement device for spin, the setting would be determined by the direction of the magnetic field gradient.} Otherwise, there does not exist an unambiguous way of comparing $\hat{M}(S)$ and $\hat M(S')$ and, subsequently, to evaluate them as being ``the same quantity'', but carried by physically distinct systems. In other words, if $\hat M$ is supposed to be a quantity which can be carried by physically distinct systems, it better be universally measurable and thus universal measurement devices better exist. 

A universally measurable quantity which can be defined for a large number of different kinds of systems can therefore also be regarded  as a ``carrier independent" quantity. One might then be inclined to ask why carrier independent quantities should exist at all. If such carrier independent quantities did not exist, it would be difficult to operationally define the notion of interaction between physically distinct systems. By an interaction, one usually thinks of a process where physical systems exchange or redistribute some quantities or properties. One tacitly follows the intuition that a quantity that is being transferred from one system to another is in its nature somehow ``the same'' before and after the exchange. But if the carrier systems are physically distinct then (at least the nature of) the quantity should be carrier independent. In particular, a carrier independent $\hat M$ might even be a ``conserved quantity''; that is, a physical quantity whose
total value is typically preserved in closed systems, even in interactions between different kinds of quantum systems, which in the end allows to compare the value
of $\hat M$ on $S$ and $S'$.

Interactions, in turn, are at the heart of any measurement. Hence, if carrier independent quantities did not exist, it would be troublesome to envisage how interesting measurement devices could be built in the first place which do not employ only one kind of system to also measure only the same kind of system. Instead, with carrier independent quantities available, one can imagine that quantum systems of some kind can be used to build devices in which they interact in ``canonical ways'' with -- and thereby measure --
systems of another kind. For example, we can use the spins of many electrons (which are qubits) to build a magnet which in turn can be used in a Stern-Gerlach device,
defining a quantization axis for particles of higher spin.

These arguments suggest that an interesting (`interactive') physical world renders the existence of such universally measurable observables and corresponding universal measurement devices quite natural. In the following, we shall thus simply \emph{assume} their existence, and our aim is to investigate the consequences of this assumption, in particular, to derive the symmetry group that is implied by their properties. This reverses the prevalent logic in the standard literature of presupposing a spacetime symmetry group which, in turn, implies conserved quantities (such as energy, momentum, angular momentum etc.) that, in light of our discussion, would be carrier independent and universally measurable.

In order to analyze the implications of universal measurement devices in our communication scenario more formally, we have to specify the mathematical assumptions that we make about their behavior. It turns out that we do not need to specify too many details about their working for the present purpose; only the following quite intuitive property is needed:
\begin{assumptions}
\label{AssUniversal3}
When we say that (one, or several) observables $\hat M$ can be \emph{universally measured on quantum systems $S$ and $S'$}, we assume that there are operators $\hat M(S)$ and $\hat M(S')$ such that $\hat M(S)$ uniquely determines $\hat M(S')$, and vice versa. Moreover, we assume that this interdependence is continuous in both directions\footnote{An example
is given by $\hat M=\hat S_z$, if $S$ and $S'$ are spin-$1/2$ and spin-$1$ systems, see also Example~\ref{ExEncoding} below. In particular, a continuous change of $\vec{n}\in\mathbb{R}^3$ in the observable family $\hat{S}_{\vec{n}}=\vec{n}\cdot\vec{S}$ for $\vec{S}=(\hat{S}_x,\hat{S}_y,\hat{S}_z)$ for spin-$1/2$ induces a continuous change of the same family for spin-$1$, and vice versa.
However, if $S$ is a spin-$0$ system and $S'$ a spin-$1/2$ system, then
we do \emph{not} have this property. For all unit vectors $\vec n\in\mathbb{R}^3$, we would have $\hat S_{\vec n}(S)=0$, and knowledge of this observable
would not allow to infer $\hat S_{\vec n}(S')$.}.
\end{assumptions} 
These assumptions are indeed quite intuitive: small changes of an observable $\hat M$ on a system $S$ should lead to small changes of that observable on other quantum systems $S'$. For this to make sense, there must be an invertible map $\hat M(S)\mapsto \hat M(S')$ which encodes what we mean by ``the same measurement''.

In Section~\ref{sec_phys}, we will consider a more general scenario (where the eigenvalues are themselves physical quantities, not only abstract labels as in this section). There we will have to reconsider the properties of universal measurement devices and extend the set of assumptions.

Let us now return to our communication scenario, and ask the question how Alice and Bob can use universal measurement devices to simplify their task. If there are ``enough'' observables $\hat M$ that can be universally measured on quantum systems $S$ and $S'$, then it seems intuitively clear that the two agents can exploit this fact: by agreeing on a description of states on $S$, they should be able to obtain a common description of states on $S'$ ``for free''. The following lemma specifies the conditions under which this is possible.

\begin{lemma}
\label{Lem3.3}
Suppose that a set of observables $\mathcal{M}:=\{\hat M_i\}_{i\in I}$ can be universally measured on two quantum systems $S$ and $S'$,
and suppose this set is large enough
to be \emph{tomographically complete} on $S'$.\footnote{That is, the set of outcome probabilities
${\rm tr}(\rho'\, \pi_i^{(j)})$, with $\pi_i^{(j)}$ the eigenprojectors of $\hat M_i(S')$,
determines the physical state $\rho'\in\s_{\rm phys}(S')$ uniquely. Note that this does not necessarily mean that the expectation values ${\rm tr}(\rho \hat M_i(S'))$ determine the
state $\rho'$ uniquely. As a counterexample, consider the spin-$1$ matrices $S_{\vec n}:=n_x S_x + n_y S_y + n_z S_z$ with $\vec n \in\R^3$ the unit vectors.
These matrices span a three-dimensional linear subspace of the space of observables (a representation of the Lie algebra $\mathfrak{so}(3)$);
thus, the set of numbers ${\rm tr}(\rho S_{\vec n})$ reveals only three of
the eight independent parameters of $\rho'$. However, since we have an irreducible representation, knowing all outcome probabilities on the eigenvectors is sufficient
to determine $\rho'$.} Then there is a protocol that allows Alice and Bob to agree on an encoding $\varphi^{(S')}:\s_{\rm phys}(S')\to\s(S')$ by exchanging only classical information, given that they already agree on an encoding $\varphi^{(S)}:\s_{\rm phys}(S)\to \s(S)$.
Moreover, if \emph{all} observables of $S$ can be universally measured on $S$ and $S'$, then the resulting map $\varphi^{(S)}\mapsto\varphi^{(S')}$ is continuous. That is, continuously changing $\varphi^{(S)}\to\varphi^{(S)}+\delta\varphi^{(S)}$, while keeping the exchange of classical information identical, changes $\varphi^{(S')}$ continuously.

In the case where $\mathcal{M}(S)$ is a strict subset of all observables on $S$ (i.e.\ not all observables can be universally measured), we have a slightly weaker continuity property. To state it, call two encodings $\varphi^{(S)}$ and $\tilde\varphi^{(S)}$ \emph{equivalent} if they map the physical observables\footnote{We are slightly abusing notation,
by writing $\varphi^{(S)}(\hat M(S))$ for the corresponding encoding of an \emph{observable} $\hat M(S)$, even though we have defined $\varphi^{(S)}$
to act on states only. However, this notation makes sense: if we treat $\rho_{\rm phys}$ and $\hat M(S)$ as unknown matrices,
then $\varphi^{(S)}(\rho_{\rm phys})=U\rho_{\rm phys} U^\dagger$ for some unknown unitary $U$ and possibly an additional transposition on $\rho_{\rm phys}$.
Consequently, we obtain $\tr(\rho_{\rm phys}\hat M(S))=\tr\left(\varphi^{(S)}(\rho_{\rm phys})\varphi^{(S)}(\hat M(S))\right)$,
and we can consistently claim that $\varphi^{(S)}$ is also the correct encoding of observables into matrices.} $\mathcal{M}(S)$ to the same set of mathematical operators, i.e.\ if $\varphi^{(S)}\left(\mathcal{M}(S)\right)=\tilde\varphi^{(S)}\left(\mathcal{M}(S)\right)$. Then continuously changing $\varphi^{(S)}$ to an equivalent encoding $\varphi^{(S)}+\delta \varphi^{(S)}$ continuously changes $\varphi^{(S')}$.
\end{lemma}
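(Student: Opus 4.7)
The plan is to turn the statement into a concrete classical protocol that leverages the already-shared $\varphi^{(S)}$, and then read off continuity from Assumptions~\ref{AssUniversal3} and tomographic completeness together with the rigidity statement of Lemma~\ref{LemQuantumGeneral}.

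For the existence of the protocol, I would have Alice freely pick her intended encoding $\varphi^{(S')}_A$ and then broadcast, for each $i\in I$, the pair of mathematical operators
\[
  X_i := \varphi^{(S)}\bigl(\hat M_i(S)\bigr), \qquad Y_i := \varphi^{(S')}_A\bigl(\hat M_i(S')\bigr).
\]
This is classical data only: a list of matrices. Bob uses the shared $\varphi^{(S)}$ to invert the first component, recovering $\hat M_i(S)=(\varphi^{(S)})^{-1}(X_i)$ as a \emph{physical} observable; Assumptions~\ref{AssUniversal3} then hand him the unique counterpart $\hat M_i(S')$ on $S'$. He finally chooses any $\varphi^{(S')}_B$ satisfying $\varphi^{(S')}_B(\hat M_i(S'))=Y_i$ for all $i$. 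That this forces $\varphi^{(S')}_B=\varphi^{(S')}_A$ (up to the irrelevant projective phase) is the key uniqueness step: by Lemma~\ref{LemQuantumGeneral}, any two admissible encodings differ by a projective unitary-or-antiunitary conjugation $U$, and the constraint $U Y_i U^{-1}=Y_i$ (or the analogous condition involving $Y_i^T$) must hold for every $i$. Tomographic completeness of $\mathcal{M}$ on $S'$, as sharpened in the footnote to the statement (the spectral projectors of $\{Y_i\}$ distinguish every state), implies that the commutant of $\{Y_i\}$ consists only of scalars, so $U\propto\mathbb{1}$.

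For the continuity, I would hold the classical message $\{(X_i,Y_i)\}$ fixed and treat $\varphi^{(S)}$ as a variable input. Bob's decoded physical observables $(\varphi^{(S)})^{-1}(X_i)$ depend continuously on $\varphi^{(S)}$; Assumptions~\ref{AssUniversal3} provide continuity of the universal-measurement correspondence $\hat M(S)\leftrightarrow\hat M(S')$; hence the target operators that $\varphi^{(S')}_B$ must send to the fixed $Y_i$'s move continuously. The uniqueness argument of the previous paragraph shows that $\varphi^{(S')}_B$ is rigidly (i.e.\ isolatedly) determined by these constraints at every point, so the implicit-function-style dependence is continuous. When every observable on $S$ is universally measurable, this immediately gives the strong statement, because $\mathcal{M}(S)$ then separates encodings on $S$ too. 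When $\mathcal{M}(S)$ is a strict subset, different $\varphi^{(S)}$'s that agree on $\mathcal{M}(S)$ produce identical inputs to Bob's decoding and hence the same $\varphi^{(S')}$; the protocol therefore factors through the equivalence classes defined in the statement, and the continuity we just established descends to the quotient, which is precisely the weaker property claimed.

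The main obstacle I foresee is the uniqueness step. Tomographic completeness is stated as a property of outcome \emph{probabilities} of eigenprojectors, not of the expectation values of the $\hat M_i$ themselves (cf.\ the $\mathfrak{so}(3)$ counterexample in the footnote), so the translation ``commutant of $\{Y_i\}$ is trivial'' is not immediate: one has to verify that any unitary/antiunitary fixing each $Y_i$ also fixes all of its spectral projectors (which is automatic from functional calculus), and then use that the projectors separate states to conclude centrality. The antiunitary branch requires an extra sign-check but is otherwise parallel. Once this algebraic step is in place, the rest of the argument is essentially the implicit-function theorem on the rigid orbit of encodings.
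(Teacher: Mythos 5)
Your protocol is essentially the paper's: Alice transmits the matrix descriptions of both $\hat M_i(S)$ and $\hat M_i(S')$, Bob rebuilds the physical devices from the shared $\varphi^{(S)}$, universality hands him the observables on $S'$, and tomographic completeness pins down the induced encoding of states. Your uniqueness step (trivial commutant of the eigenprojectors, Wigner-type rigidity from Lemma~\ref{LemQuantumGeneral}) is a legitimate algebraic reformulation of what the paper gets more directly by letting Bob tomograph states; and your ``antiunitary sign-check'' does go through, since a transposition-type map fixing a tomographically complete set would force transposition to be an inner automorphism. For the strong continuity claim, however, note that uniqueness of the solution plus continuity of the constraints does not by itself yield continuity of the solution map; you would need an extra ingredient such as compactness of ${\rm PUA}(N)$, or better, the explicit formula the paper uses, $\varphi^{(S')}=T_{S\to S'}\circ\varphi^{(S)}\circ T^{\rm phys}_{S'\to S}$ with $T_{S\to S'}$ fixed by the classical data, from which continuity is immediate.

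The genuine gap is in your last step, the weaker continuity property. The equivalence relation in the statement is only \emph{set-wise}: $\varphi^{(S)}\bigl(\mathcal{M}(S)\bigr)=\tilde\varphi^{(S)}\bigl(\mathcal{M}(S)\bigr)$ as sets of matrices, not element-wise agreement on $\mathcal{M}(S)$. Under a set-preserving but non-trivial change of $\varphi^{(S)}$, Bob's decoding $(\varphi^{(S)})^{-1}(X_i)$ yields \emph{different} physical observables, hence a different $\varphi^{(S')}$; the protocol does \emph{not} factor through the equivalence classes, so ``the continuity descends to the quotient'' is both false and not what the lemma asserts. Indeed, if $\varphi^{(S')}$ were constant on equivalence classes, the induced action of the stabilizer subgroup on $S'$ would be trivial, contradicting the paper's later use of this lemma (e.g.\ the non-trivial ${\rm SO}(2)$ representation on photon polarization). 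What the weak statement requires, and what your argument is missing, is the observation that for an equivalent perturbation $\varphi^{(S)}+\delta\varphi^{(S)}$ the image of $\mathcal{M}(S)$ still lies in the domain of the fixed map $T_{S\to S'}$ determined by the classical message, so the composition formula remains well-defined and depends continuously---but generally non-trivially---on $\varphi^{(S)}$ within each equivalence class.
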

The method to obtain $\varphi^{(S')}$ from $\varphi^{(S)}$ again is quite intuitive, given the communication scenario of Figure~\ref{fig_scenario1}. Suppose that we have a situation as described in Lemma~\ref{Lem3.3},
and Alice and Bob have agreed on a common encoding $\varphi^{(S)}$ of quantum system $S$. In order to agree
with Bob on an arbitrary encoding $\varphi^{(S')}$ of $S'$, Alice can do the following.
Suppose for concreteness that $I=\{1,2,3\}$. Then Alice can say:
\emph{``Bob, please build the three universal measurement devices that measure the observables $\hat M_1(S), \hat M_2(S), \hat M_3(S)$ on $S$. As you know, you can also send quantum systems $S'$ into those devices. From now on, let's use the following matrices as descriptions of the observables $\hat M_1(S'), \hat M_2(S'), \hat M_3(S')$: [tables of numbers]. Thus, you know how to do tomography with quantum states on $S'$, and we can agree over the phone on their description.''}

In this request, the $\hat M_i(S)$ can be described by sending the matrices
$\hat M_i(S)_0:=\varphi^{(S)}\left(\hat M_i(S)\right)$. The resulting encoding $\varphi^{(S')}$ will then be shared between Alice and Bob. It is not unique,
but depends on the choice of matrix encodings that Alice suggests. These constitute classical data that can be communicated between Alice and Bob.
Note however that this protocol fails if the $\hat M_i(S)_0$ do not encode universal observables; in other words, the protocol, resp.\ algorithm, is tailored to the set of matrices $\varphi^{(S)}(\mathcal{M}(S))$. The formal details are as follows.
\begin{proof}
Let $\varphi_A^{(S')}$ be Alice's personal choice of classical description of quantum states on $S'$, and let $T_{S\to S'}^{\rm phys}$ be the map which satisfies $T_{S\to S'}^{\rm phys}\hat M_i(S)=\hat M_i(S')$. According to Assumptions~\ref{AssUniversal3}, it is well-defined and a homeomorphism into its image. Define
\[
   T_{S\to S'}:=\varphi_A^{(S')}\circ T_{S\to S'}^{\rm phys}\circ\left(\varphi^{(S)}\right)^{-1},
\]
which is thus also a homeomorphism into its image. It satisfies $T_{S\to S'}\hat M_i(S)_0=\hat M_i(S')_0:=\varphi_A^{(S')}(\hat M_i(S'))$. Let $J\subset I$ be a finite set such that the observables $\{\hat M_j(S')\}_{j\in J}$ are still tomographically complete on $S'$ (if $I$ itself is finite, we may have $J=I$). Alice can communicate the matrix descriptions $\{\hat M_j(S')_0\}_{j\in J}$ and $\{\hat M_j(S)_0\}_{j\in J}$ to Bob. Using the latter, Bob can locally implement the physical measurement devices $\{\hat M_j(S)\}_{j\in J}$ (assuming agreement on $\varphi^{(S)}$), which by universality amounts to an implementation of the observables $\{\hat M_j(S')\}_{j\in J}$. He therefore knows the $\varphi_A^{(S')}$-description of a tomographically complete set of observables (and thus of their eigenprojectors which span the space of Hermitian operators), which allows him to deduce the $\varphi_A^{(S')}$-description of all physical quantum states on $S'$.

Our description of the continuity property seems puzzling at first sight: if $\varphi_A^{(S')}$ is simply Alice's personal choice of encoding of $S'$, then why should this change if we change $\varphi^{(S)}$? Can she not make her choice of $\varphi_A^{(S')}$ independently of the agreed-upon $\varphi^{(S)}$? Of course she can; but changing $\varphi_A^{(S')}$ while keeping $\varphi^{(S)}$ changes $T_{S\to S'}$ and thus changes the classical information that is sent from Alice to Bob. To have a \emph{fixed} exchange of classical information, however, $\varphi_A^{(S')}$ must change with $\varphi^{(S)}$, and we can define a map
\[
   \varphi^{(S)}\to\varphi_A^{(S')}\equiv \varphi^{(S')}.
\]
Note that
\begin{equation}
   \varphi^{(S')}=T_{S\to S'}\circ \varphi^{(S)}\circ T_{S'\to S}^{\rm phys}\,\,,
   \label{eqDefSSPrime}
\end{equation}
and for fixed exchange of classical information, $T_{S\to S'}$ is fixed, and so is $T_{S'\to S}^{\rm phys}$ which, as a physical map, is independent of any choice of description. In this equation, consider replacing $\varphi^{(S)}$ by $\tilde\varphi^{(S)}:=\varphi^{(S)}+\delta\varphi^{(S)}$. The resulting expression will still be well-defined if
and only if $\tilde\varphi^{(S)}\left(\mathcal{M}(S)\right)$ agrees with the domain of definition of $T_{S\to S'}$, which is $\varphi^{(S)}\left(\mathcal{M}(S)\right)$.
If this is the case, we will have a continuous change of $\varphi^{(S')}$.
\end{proof}

We will look in more detail into this protocol for spin systems in our concrete physical world in Example~\ref{ExEncoding} below. The next step will be to see how Alice and Bob can relate their full local laboratories by using the protocol above.

\subsection{Relating local laboratories: the universal measurability graph}

Now we formalize the idea of the beginning of the previous subsection: if there is a single quantum system that ``interacts naturally'' indirectly with {\it all} other systems,
then a choice of reference frame for that system implies a choice of reference frame for the full laboratory.

\begin{definition}[Universal measurability graph]
\label{DefGraph}
Consider the set of all finite-dimensional quantum systems $S,S',S'',\ldots$; we regard them as vertices of a graph.
Draw a directed edge from $S$ to $S'$ if and only if the situation of Lemma~\ref{Lem3.3} holds, i.e.\ if there is a set of observables $\{\hat M_i\}_{i\in I}$ which is tomographically complete on $S'$, and which is universally measurable on $S$ and $S'$.
\end{definition}

In Subsection~\ref{sec_concrete}, we will look at the concrete realization of the universal measurability graph in our own universe (in particular, see Figure~\ref{fig_graph}).

\begin{assumptions}
\label{AssInteraction}
There exists at least one quantum system $S$ that is a ``root'' of this graph, in the sense that every vertex can be reached
from $S$ by following directed edges. Furthermore, we assume that no quantum system with a partially preferred choice of basis or encoding
is a root of this graph.
\end{assumptions}

Choose one root which has the smallest Hilbert space dimension $d$ among all roots, and call it $S$. Clearly, $d=\dim\, S\geq 2$.

Suppose that $S'$ is any quantum system such that there is a directed edge directly from the root $S$ to $S'$. By assumption, $S$ does \emph{not}
carry any natural choice of Hilbert space basis, in the sense that there is no distinguished subset of encodings $\varphi^{(S)}$ at all.
We know how to parametrize all possible
encodings $\varphi^{(S)}$: given an arbitrary fixed encoding $\varphi$, all others can be written in the form $\varphi^{(S)}(\rho)=
T(\varphi(\rho))$ for all $\rho\in\s_{\rm phys}$, where $T\in {\rm PUA}(d)$. In this case, we write $\varphi^{(S)}=\varphi^{(S)}_T$.
Due to Lemma~\ref{Lem3.3}, the directed edge induces a corresponding set of encodings $\varphi^{(S')}_T$ on $S'$, again labelled by $T\in{\rm PUA}(d)$.
For every $T,V,W\in{\rm PUA}(d)$, we have
\[
   \varphi_W^{(S)}\circ\left(\varphi_V^{(S)}\right)^{-1}\circ \varphi_T^{(S)}=\varphi_{WV^{-1}T}^{(S)}
\]
for the root $S$.
If $S'$ is a system such that all observables of $S$ are universally measurable on $S$ and $S'$, we get due to~(\ref{eqDefSSPrime}) (dropping some ``$\circ$'')
\begin{eqnarray}
   \varphi_W^{(S')}\circ \left(\varphi_V^{(S')}\right)^{-1}\circ \varphi_T^{(S')}&=&
   T_{S\to S'}\varphi_W^{(S)} T_{S'\to S}^{\rm phys}\left(T_{S\to S'}\varphi_V^{(S)}  T_{S'\to S}^{\rm phys}\right)^{-1}
   T_{S\to S'} \varphi_T^{(S)}  T_{S'\to S}^{\rm phys}\nonumber\\
   &=& T_{S\to S'}\circ \varphi_{W V^{-1} T}^{(S)}\circ T_{S'\to S}^{\rm phys}=\varphi_{WV^{-1} T}^{(S')}.
   \label{eqTripleProd}
\end{eqnarray}
Define
\[
   G_W^{(S')}:=\varphi_W^{(S')}\circ\left(\varphi_{\mathbf{1}}^{(S')}\right)^{-1},
\]
which is a linear operator on $S'$. Then we get for all $V,W\in{\rm PUA}(d)$
\begin{equation}
   G_W^{(S')} G_V^{(S')} = \varphi_W^{(S')}\circ \left(\varphi_{\mathbf{1}}^{(S')}\right)^{-1}\circ \varphi_V^{(S')}\circ\left(\varphi_{\mathbf{1}}^{(S')}\right)^{-1}
   =\varphi_{WV}^{(S')}\circ\left(\varphi_{\mathbf{1}}^{(S')}\right)^{-1} = G_{WV}^{(S')}.
   \label{eqGroupRep}
\end{equation}
Since the map $W\mapsto G_W^{(S')}$ is continuous, it is a group representation of ${\rm PUA}(d)$ within ${\rm PUA}(S')$.
By considering the connected component at the identity, we thus get a projective representation of ${\rm PU}(d)$ on\footnote{More precisely, $G_W^{(S')}$ acts on the density matrices of $S'$ by conjugation, which is a proper representation of ${\rm PU}(d)$ on the vector space of Hermitian matrices. The corresponding representation on the Hilbert space of $S'$ is however in general only a projective representation.} $S'$.
The same conclusion holds for quantum systems $S'$
that are not directly connected to $S$, but can be reached from $S$ by a path of several directed edges in the graph.\footnote{If there are several
paths leading from the qubit $S$ to $S'$, then a choice has to be made as to which path to take to assign a resulting encoding map $\varphi_T^{(S')}$.
However, this is also a choice that can be communicated by classical information between observers. If $S\to S''\to S'$ in the universal measurability graph,
then universal measurability will be transitive, in the sense that if the set of all $S$-observables $\mathcal{M}$ is universally measurable on $S$ and $S''$, and also on $S''$ and $S'$, then this set is also universally measurable on $S$ and $S'$. However, this set need not be tomographically complete on $S'$, which is why there need not be an edge in the universal measurability graph going from $S$ to $S'$ directly.}

Now suppose that $S'$ is a system such that the set $\mathcal{M}(S)$ of observables that are universally measurable on $S$ and $S'$ is a strict \emph{subset} of
all physical observables of the root $S$. According to~(\ref{eqDefSSPrime}), the calculation in~(\ref{eqTripleProd}) continues to make sense
if $\varphi_T^{(S)}(\mathcal{M}(S))=\varphi_V^{(S)}(\mathcal{M}(S))=\varphi_W^{(S)}(\mathcal{M}(S))$, and thus~(\ref{eqGroupRep}) remains
true if $\varphi_V^{(S)}(\mathcal{M}(S))=\varphi(\mathcal{M}(S))$ and $\varphi_W^{(S)}(\mathcal{M}(S))=\varphi(\mathcal{M}(S))$. Since $\varphi_V^{(S)}=V\circ\varphi$
and similarly for $W$, this means that $V$ and $W$ are in the subgroup of transformations of ${\rm PUA}(d)$ that preserve the image of the observables that are universally measurable on $S$ and $S'$. In other words, we obtain a projective representation of this subgroup.

Let $\mathcal{C}$ be the set of all systems $S'$ such that all observables of $S$ are universally measurable on $S$ and $S'$. (Clearly, $\cc$ contains $S$.)
The formal product
\begin{equation}
   \Phi:=\left\{\varphi_T\,\,|\,\, T\in {\rm PUA(d)}\right\},\qquad \mbox{where }\varphi_T:=\bigotimes_{S'\in\mathcal{C}} \varphi_T^{(S')}
   \label{eqAllFactors}
\end{equation}
defines a physically distinguished set of encodings of \emph{all} these systems at once.
Suppose there was a smaller subset of distinguished encodings
$\Phi'\subsetneq\Phi$, then $\Phi'=\{\varphi_T\,\,|\,\, T\in\g\subsetneq {\rm PUA}(d)\}$, with $\g$ a strict subset of ${\rm PUA}(d)$.
Operationally, being able to encode all laboratory quantum systems via some $\varphi_T$ implies in particular that one can encode the root $S$ via $\varphi_T^{(S)}$; thus, the set $\{\varphi_T^{(S)}\,\,|\,\, T\in\g\}$ would constitute a subset of distinguished encodings of the root $S$, which contradicts our assumption that there is no such subset. Hence $\Phi$ must be a minimal set of encodings.

But then, in particular~(\ref{eqGroup}) must hold, such that for every choice
of $T,V,W\in {\rm PUA}(d)$, there is some $X\in{\rm PUA}(d)$ such that $\varphi_W\circ \varphi_V^{-1}\circ\varphi_T =\varphi_X$. Since
\[
   \varphi_W^{(S)}\circ\left(\varphi_V^{(S)}\right)^{-1}\circ \varphi_T^{(S)}=\varphi_{WV^{-1}T}^{(S)}
\]
for the root $S$, it follows that $X=WV^{-1}T$. This equation must be consistent with the other factors of $\varphi_T$
in accordance with~(\ref{eqAllFactors}), which is only possible if an analogous equation holds for all systems $S'\in\mathcal{C}$ simultaneously,
yielding an independent proof of~(\ref{eqTripleProd}). Most importantly, the group associated to $\Phi$, that is
$\mathcal{G}:=\{\varphi_1\circ \varphi_2^{-1}\,\,|\,\, \varphi_1,\varphi_2\in\Phi\}$ is isomorphic to ${\rm PUA}(d)$. This is the minimal group that translates between Alice's and Bob's encodings, if they describe the totality of all systems $S'\in\mathcal{C}$ in their laboratories.

It is tempting to generalize~(\ref{eqAllFactors}), and to define the product to range over \emph{all} finite-dimensional quantum systems $\mathcal{\bar C}$,
including those $S'$ for which not all  observables of $S$ are universally measurable on $S$ and $S'$. This does not cause any problems for systems $S'$ which carry \emph{no} universally measurable observable at all (or only trivial such observables $\lambda\cdot\mathbf{1}$ with $\lambda\in\R$).
The conclusions will be the same as above, namely that these systems
carry projective representations of ${\rm PU}(d)$ (which will typically be trivial representations), and the set of encoding
$\bigotimes_{S'} \varphi_T^{(S')}$, where the product is over all those $S'$ \emph{and} all $S'\in\mathcal{C}$, will be a minimal set of encodings.

However, a subtle difficulty arises with quantum systems $S'$ that carry a non-trivial strict subset $\mathcal{M}$ of universally measurable observables (among those observables of $S$). As a concrete physical example, suppose $S$ is a spin-$1/2$ particle while $S'$ is a photon moving in $z$-direction. Then it is meaningful to ask for the photon spin in $z$-direction, however, not for the spin in any other direction. Hence, the spin component $\hat{S}_z$ and $\hat{S}_z^2\propto\mathbb{1}$ are universally measurable on $S$ and $S'$ in this case -- in contrast to $\hat{S}_x,\hat{S}_y$ (see also Section \ref{sec_concrete} below).

Our picture of coding and encoding of quantum states on $S'$ rests on a tacit assumption: namely that the agents (Alice and Bob in the communication scenario) have perfect knowledge on the \emph{choice of quantum system $S'$}, which in this case \emph{includes} the specification of the set of observables $\mathcal{M}$ that are universally measurable on $S$ and $S'$. For instance, in the case of the spin-$1/2$ particle and photon propagating in $z$-direction, this would be tantamount to Alice and Bob already agreeing on the $z$-direction beforehand. In general, this knowledge would help Alice and Bob to encode quantum states of the root $S$, inducing a strict subset of physically distinguished
encodings $\varphi^{(S)}$ among all encodings $\varphi^{(S)}_T$ with $T\in{\rm PUA}(d)$. {In the photon example, it would be natural to choose the eigenstates of $\hat{S}_z$ as Hilbert space basis which would only leave relative phases to determine.} Thus, taking the formal product in~(\ref{eqAllFactors})
over \emph{all} quantum systems $S'\in\mathcal{\bar C}$ would yield a set of encodings which is not minimal.
Not only would this contradict our assumptions,
but it would also ignore the {operational} difficulty of setting up the agreement of the choice of $S'$ between Alice and Bob. For example, in the case of the spin-$1/2$ particle and photon propagating in $z$-direction it would ignore the problem of firstly having to agree on the $z$-direction.

In order to deal with this situation, we have to treat quantum systems $S'$ with non-trivial subsets of universally measurable observables $\mathcal{M}$ differently. Suppose that $S'$ and $\tilde S'$ are physically identical \emph{except} for the corresponding sets of universally measurable observables $\mathcal{M}$ and
$\mathcal{\widetilde M}$, and that those sets are related by unitary conjugation and/or transposition on $S$, i.e.\ that there is $T\in{\rm PUA}(d)$ with
$\mathcal{\widetilde M}(S)=T\left(\mathcal{M}(S)\right)$. For instance, $S'$ could be a photon propagating in $z$-direction, while $\tilde S'$ could be a photon propagating in $x$-direction. In this case, $\hat{S}_z$ and $\hat{S}_x$ are indeed related by a unitary.
Then treat them as two instances\footnote{We can classify systems $S'$ also with respect to the question which observables are $(S'',S')$-co-measurable
with respect to \emph{other} systems $S''$. However, here we are only interested in the question how we can use the agreement on the encoding of the root $S$
to agree on the encoding of $S'$, which is why only the case $S''=S$ is interesting here.} of the \emph{same} system {(or equivalence class of systems)} $[S']$, which is however
in two different states $(\mathcal{M}(S),\rho_{S'})$ resp.\ $(\mathcal{\widetilde M}(S), \rho_{\tilde S'})$. That is, we consider the set of universally measurable observables
as part of the specification of the state. {In the photon case, $[S']$ would just mean: a photon -- without specification of propagation direction. This is something Alice and Bob can agree on by classical communication.}

This prevents the difficulty just mentioned. For every $T\in {\rm PUA}(d)$, we can define a corresponding encoding of $[S']$ via the map $(\mathcal{M}(S),\rho_{\rm phys})\mapsto (\varphi_T^{(S)}(\mathcal{M}(S)), \varphi_T^{(S')}(\rho_{\rm phys}))$, and define the analog of~(\ref{eqAllFactors}) by taking the product over
all $[S']$, obtaining a set $\bar\Phi$ of encodings $\varphi_T$ with $T\in{\rm PUA}(d)$ of the totality of all quantum systems. Simply agreeing on some $[S']$ (i.e., in the photon case simply agreeing on $[S']$ being a photon, but not necessarily on its propagation direction) does not allow to simplify agreement on the root $S$, and $\bar\Phi$ will be a minimal set of encodings.
Arguing as above, we get $\mathcal{G}_{\min}={\rm PUA}(d)$ as abstract minimal group. Summarizing our findings, and reconsidering the assumptions that went into their derivation, we arrive at the following result.

\begin{theorem}
\label{TheQubitNormed}
Consider the communication scenario of Figure~\ref{fig_scenario1}: Alice and Bob attempt to agree on a classical description of finite-dimensional quantum systems; in fact, a description that works for the totality of \emph{all} quantum systems in their labs. Moreover, in their world, there are many ``universally measurable'' observables in the sense of Assumptions~\ref{AssUniversal3}: the \emph{universal measurability graph} (cf.\ Definition~\ref{DefGraph}) has roots (Assumptions~\ref{AssInteraction}), and the smallest Hilbert space dimension of any root is $d$.
Then, the minimal group that translates between their descriptions (in the sense of Definition~\ref{DefMinimal}) is
\[
   \g_{\min}={\rm PUA}(d).
\]
In particular, if we assume that there exists a ``qubit root'' $S$ of dimension $\dim\,S=2$, then
\[
   \g_{\min}={\rm O}(3),
\]
and the subgroup of implementable transformations (cf.\ Figure~\ref{fig_scenario2}) is ${\rm SO}(3)$. Moreover, in this case, if $S'$ is any other quantum system such that all observables of $S$ are universally measurable on it, then $S'$ carries a projective representation of ${\rm SO}(3)$. All other quantum systems $S'$ carry a projective representation of the subgroup of ${\rm SO}(3)$ which leaves the set of universally measurable observables invariant.
\end{theorem}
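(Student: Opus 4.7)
The plan is to assemble the pieces already developed in Sections~\ref{SecO3}.1--3.4 into a coherent argument, treating the qubit case as a corollary of the general statement. I would first invoke Assumptions~\ref{AssInteraction} to fix a root $S$ of smallest Hilbert space dimension $d$, noting that Lemma~\ref{LemQuantumGeneral} identifies the full achievable group on $S$ alone as ${\rm PUA}(d)$, and that by assumption there is \emph{no} distinguished proper subset of encodings for $S$. Every other system can be reached from $S$ by a path in the universal measurability graph, and Lemma~\ref{Lem3.3} guarantees that, once Alice and Bob have agreed (by classical communication) on an encoding $\varphi^{(S)}$, they obtain an induced encoding $\varphi^{(S')}$ for every reachable $S'$; when the set of universally measurable observables on $(S,S')$ contains all observables of $S$, this correspondence $T\mapsto \varphi^{(S')}_T$ is continuous in $T\in{\rm PUA}(d)$.

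Next, I would show that the induced maps $G_W^{(S')}:=\varphi_W^{(S')}\circ(\varphi_{\mathbf{1}}^{(S')})^{-1}$ are continuous linear operators satisfying $G_W^{(S')}G_V^{(S')}=G_{WV}^{(S')}$ (as already derived in equations~\eqref{eqTripleProd}--\eqref{eqGroupRep}), so that on each such $S'$ we obtain a projective representation of ${\rm PU}(d)$ (and of the full ${\rm PUA}(d)$ at the level of density matrices). The product encoding $\varphi_T:=\bigotimes_{S'}\varphi_T^{(S')}$ over the appropriate class $\mathcal{C}$ of systems then yields a physically distinguished set $\Phi=\{\varphi_T:T\in{\rm PUA}(d)\}$ whose associated group is isomorphic to ${\rm PUA}(d)$. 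Minimality follows by pulling back: any proper subset $\Phi'\subsetneq\Phi$ would in particular give a proper distinguished subset of root encodings, contradicting Assumptions~\ref{AssInteraction}. Uniqueness as an abstract group is then guaranteed by the Lemma at the end of Section~\ref{SecReferenceFrames}.

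The qubit specialization ($d=2$) follows from the standard isomorphisms ${\rm PUA}(2)\cong{\rm O}(3)$ and ${\rm PU}(2)\cong{\rm SO}(3)$, combined with the observation already made after Lemma~\ref{LemQuantumGeneral} that antiunitary conjugations fail to be completely positive and hence are not implementable in the sense of Figure~\ref{fig_scenario2}; this identifies the implementable subgroup with ${\rm SO}(3)$. The projective representation statement for systems $S'$ on which all observables of $S$ are universally measurable is then precisely the content of~\eqref{eqGroupRep} restricted to $W\in{\rm PU}(2)$.

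The step I expect to require the most care is the extension to systems $S'$ carrying only a \emph{strict} subset $\mathcal{M}(S)$ of universally measurable observables (e.g.\ the photon-in-$z$-direction example). Here the naive product~\eqref{eqAllFactors} over all such systems would fail to be minimal, because agreeing on $S'$ would pre-select a proper subfamily of root encodings. I would resolve this, as sketched in the discussion preceding the theorem, by grouping physically identical systems that differ only by a unitary/antiunitary reorientation of $\mathcal{M}(S)$ into a single equivalence class $[S']$, and regarding the choice of $\mathcal{M}(S)$ as part of the state rather than of the system. The encoding map $(\mathcal{M}(S),\rho_{\rm phys})\mapsto (\varphi_T^{(S)}(\mathcal{M}(S)),\varphi_T^{(S')}(\rho_{\rm phys}))$ then restores the ${\rm PUA}(d)$-labelling, the minimality argument goes through, and on each genuine $S'$ the induced operators furnish a projective representation of the stabiliser subgroup of $\mathcal{M}(S)$ inside ${\rm PU}(d)$, which for $d=2$ is the claimed subgroup of ${\rm SO}(3)$.
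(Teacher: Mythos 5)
Your proposal is correct and follows essentially the same route as the paper's own argument in Subsection~\ref{SubsecFullLab}: parametrize the root encodings by ${\rm PUA}(d)$, lift them along the universal measurability graph via Lemma~\ref{Lem3.3}, obtain the representation property from~(\ref{eqTripleProd})--(\ref{eqGroupRep}), prove minimality of the product encoding by pulling a hypothetical smaller set back to the root, and handle systems with only a strict subset of universally measurable observables by absorbing $\mathcal{M}(S)$ into the state of an equivalence class $[S']$. The qubit specialization via ${\rm PUA}(2)\simeq{\rm O}(3)$, ${\rm PU}(2)\simeq{\rm SO}(3)$ and non-implementability of antiunitaries likewise matches the paper.
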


Thus, our assumptions have reconstructed an important property of quantum theory in our universe: that many systems come with a representation of the rotation group
in three dimensions. The usual point of view is that this is a \emph{consequence} of the three dimensions of space. However, here we argue the other way around:
the fundamental theory is quantum mechanics, and the emergence of an ${\rm SO}(3)$-symmetry can be understood on this basis alone. This insight is
very much in spirit of von Weizs\"acker's ``ur theory''~\cite{Weizsaecker}, but goes far beyond it by putting the argumentation on firm operational grounds: as long as there is a ``seed qubit'' (or root qubit) $S$ that carries enough observables which can be jointly (or universally) measured on other quantum systems, we obtain the symmetry group of three-dimensional rotations. In this sense, one can indeed view qubits as the fundamental building blocks of Alice's and Bob's world, given that the full reference frame transformation group ${\rm SO}(3)$ follows from the possible relations between their respective qubit descriptions alone.

In the next subsection, we will explore in more detail how our abstract assumptions (and Theorem~\ref{TheQubitNormed}) are
concretely realized in our actual physical world.

\subsection{Concrete realization in our universe}\label{sec_concrete}
It is important to notice that our derivations so far were purely abstract, without any assumptions of an underlying spacetime structure. We have arrived at the symmetry group ${\rm SO}(3)$ without assuming the dimensionality of space, either Galilean or special relativity, or other concrete properties of spacetime as observed in our world. In this section, we have a look at our actual universe in the context of relativistic quantum mechanics, and see how the abstract notions and assumptions from above (universal measurability graph etc.) are concretely physically realized there.

Let us consider three kinds of finite-dimensional quantum systems that exist in our universe:
\begin{itemize}
\item A spin-$1/2$ qubit encoded into an electron spin (call this quantum system $S$);
\item a spin-$1$ degree of freedom $S'$ -- either elementary (as in a $W$ or $Z$ boson), or as an effective degree of freedom
(such as the nuclear spin in orthohydrogen);
\item photon polarization qubits $S''$.
\end{itemize}

\begin{figure*}[!hbt]
\begin{center}
\setlength{\fboxrule}{1pt}
\fbox{\includegraphics[angle=0, width=10cm]{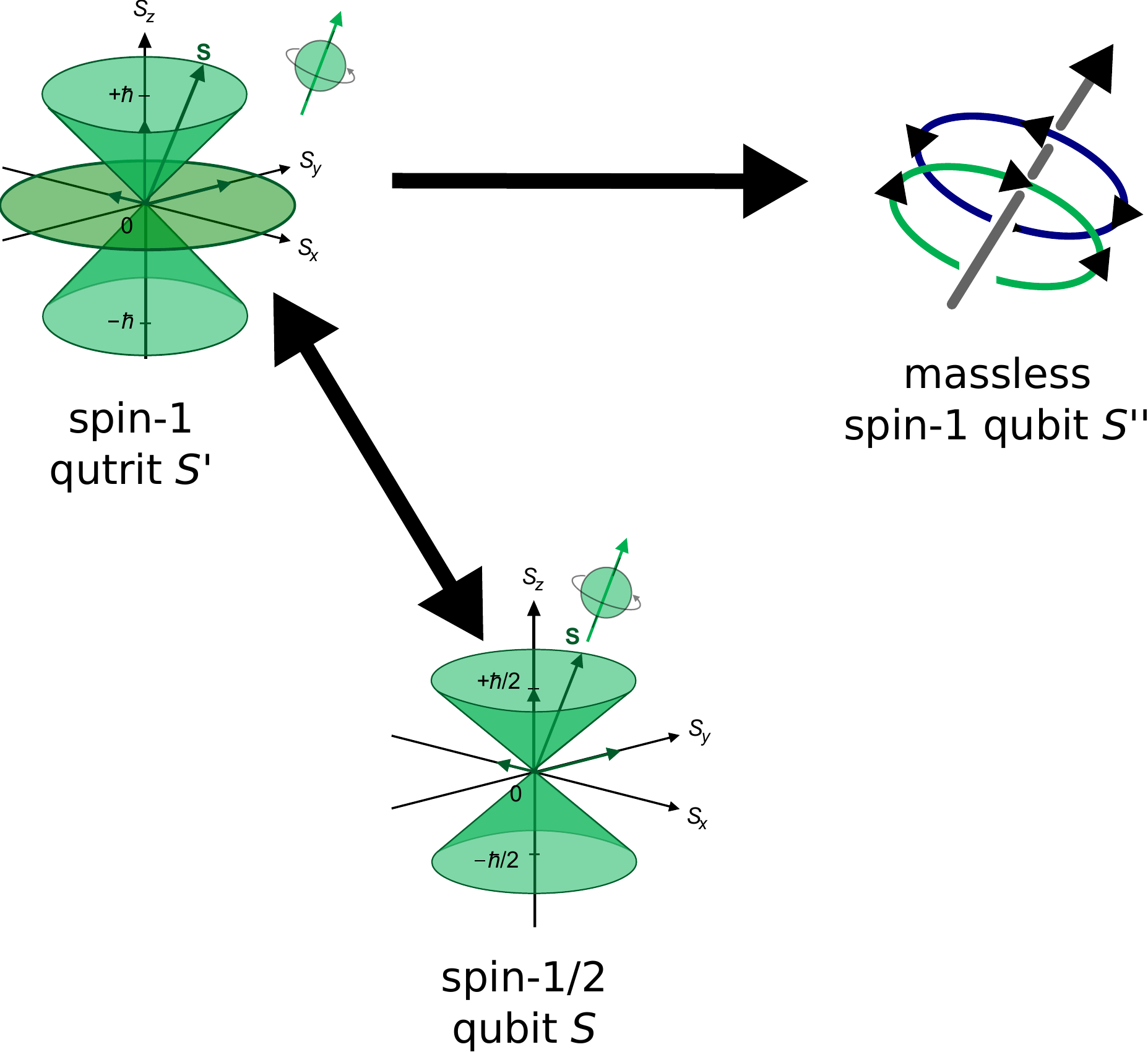}}
\caption[Caption]{Part of the universal measurability graph in relativistic quantum mechanics. Massive spin-$1/2$ particles $S$ (like an electron)
and massive spin-$1$ particles $S'$ have a set of universally measurable observables which is tomographically complete on both, namely spin in all directions.
For massive spin-$1$ particles $S'$ and photon polarization $S''$, however, only the set of observables which is supported in the span of $|+1\rangle$
and $|-1\rangle$ of spin in the direction of the photon momentum is universally measurable, which is tomographically complete on $S''$, but not on $S'$.
Thus, we can indirectly lift an encoding map $\varphi^{(S)}$ of the electron spin qubit to an encoding map $\varphi^{(S'')}$ of the photon
polarization qubit, but not vice versa.\footnotemark}
\label{fig_graph}
\end{center}
\end{figure*}

Figure~\ref{fig_graph} shows part of the universal measurability graph as defined in Definition~\ref{DefGraph}, namely the vertices $S$, $S'$ and $S''$.
Arrows are drawn according to universal measurability. In particular, there is an arrow from $S$ to $S'$,
and another arrow from $S'$ to $S$, since both systems can be jointly measured in Stern-Gerlach devices. More generally, let $I$ be the set
of all unit vectors in $\R^3$, then the set of spin observables $\{S_{\vec n}\}_{\vec n \in I}$ is universally measurable on $S$ and $S'$, namely by constructing a Stern-Gerlach device with magnetic field (and inhomogeneity) in direction $\vec n$.

\footnotetext{Picture source for massive spins: Wikipedia (declared as Public Domain).}

We will now give a detailed exposition of how the protocol in Lemma~\ref{Lem3.3} works in this case. While the following example is physically not particularly surprising, it nevertheless illustrates the protocol and the claimed continuous dependence of $\varphi^{(S')}$ from $\varphi^{(S)}$.

\begin{example}[Encoding: from spin-$1/2$ to spin-$1$]
\label{ExEncoding}
According to~\cite{Hofmann}, there are five spatial directions $\vec n_1, \ldots \vec n_5\in\mathbb{R}^3$ with $|\vec n_i|=1$ such that
$\{\hat S_{\vec n_i}\}_{i=1,\ldots,5}$ is tomographically complete on the spin-$1$ state space. Concretely, we can choose
\[
   \vec n_1=(1,0,0)^T,\quad \vec n_2=(0,1,0)^T,\quad \vec n_3=\frac 1 {\sqrt{2}}(1,1,0)^T,\quad
   \vec n_4=\frac 1 {\sqrt{2}}(0,1,1)^T,\quad\vec n_5=\frac 1 {\sqrt{2}}(1,0,1)^T.
\]
The actual data to start with is the set of physical spin-$1/2$ observables $\hat S_{\vec n_i}(S)$, represented mathematically as $\hat S_{\vec n_i}(S)_0=\vec n_i\cdot\vec\sigma$ via an encoding map $\varphi^{(S)}$,
with $\vec\sigma=(\sigma_x,\sigma_y,\sigma_z)$ the Pauli matrices. By using universal Stern-Gerlach devices, this also defines the physical observables
$\hat S_{\vec n_i}(S')$. Now we have to construct an arbitrary matrix representation $\hat S_{\vec n_i}(S')_0$ of these observables.
All possible choices are related by unitary conjugation
and possibly transposition; we can choose one arbitrarily. To this end, for every unit vector $\vec n\in\R^3$, set $\hat S_{\vec n}(S')_0:=\vec n \cdot\vec S$,
where $\vec S:=(\hat S_x,\hat S_y,\hat S_z)$, and
\begin{equation}
   \hat S_x=\frac 1 {\sqrt{2}} \left(\begin{array}{ccc} 0 & 1 & 0 \\ 1 & 0 & 1 \\ 0 & 1 & 0 \end{array}\right),\quad
   \hat S_y=\frac i {\sqrt{2}} \left(\begin{array}{ccc} 0 & -1 & 0 \\ 1 & 0 & -1 \\ 0 & 1 & 0 \end{array}\right),\quad
   \hat S_z=\left(\begin{array}{ccc} 1 & 0 & 0 \\ 0 & 0 & 0 \\ 0 & 0 & -1 \end{array}\right).
   \label{eqSpin1}
\end{equation}

From this, we obtain a unique physical encoding $\varphi^{(S')}$ of quantum states. For example, the pure state $\rho_{\rm phys}=|\psi\rangle\langle\psi|_{\rm phys}$ that gives
unit probability of ``spin-up'' in $z$-direction will have
\[
   \varphi^{(S')}(\rho_{\rm phys})={\rm diag}(1,0,0).
\]
This algorithm of constructing $\varphi^{(S')}$ from $\varphi^{(S)}$ had one arbitrary choice, namely how to describe $\hat S_{\vec n}(S')$ in terms of concrete $3\times 3$
matrices. This is an arbitrary choice among all encodings (they are all related by unitary conjugation and possibly a transpose; other attempts of encoding
will be in conflict with the observed measurement statistics),
and part of the classical information that is communicated from Alice to Bob in the protocol. Fixing this data, the resulting encoding map $\varphi^{(S')}$ depends on $\varphi^{(S)}$: choosing another $\varphi^{(S)}$ in the first place
will select another physical device as the appropriate $\hat S_z(S)$-measurement, for example, which leads to another physical observable $\hat S_z(S')$ and
to another pure physical quantum state $|\psi\rangle_{\rm phys}$ on $S'$ that will be considered to be the $(+1)$-eigenstate of spin in $z$-direction.

Therefore, changing the encoding of $S$ by a rotation will have an effect on the encoding of $S'$; in other words, we get a projective spin-$1$ representation of ${\rm SO}(3)$ on
$S'$ (as claimed in Theorem~\ref{TheQubitNormed}), and this representation is not trivial.
\end{example}

The spin example uses the background knowledge that there is a notion of underlying $3$-space, constituting a mechanism
to set up these universal Stern-Gerlach devices in the first place. However, there is a way to argue that these universal Stern-Gerlach
devices should exist, without directly resorting to spatial degrees of freedom. The idea is that we can use a large number $N$ of
electron spin qubits to build a magnet. The ensemble of spins should be in a coherent state $|\vec n\rangle^{\otimes N}$, and then
another particle $S'$ can interact with that system via a simple interaction Hamiltonian~\cite{DakicBrukner3D}. Thus, the electron spin qubit's
quantum state can define a frame of reference which is transferred to other quantum systems via interaction. This suggests, but does not
necessitate the interpretation of $\vec n$ as a spatial direction.

This idea is very similar to von Weizs\"acker's suggestion that the symmetries of the elementary binary quantum alternative should
be identical to the symmetries of space~\cite{Weizsaecker}, and it resembles the distinguished role of the qubit for the structure
of quantum theory~\cite{Hardy01,Masanes11,Chiribella2011,Hardy11,Axioms2013,Hoehn2014,hw2}. It has recently been used to argue in a different context why
space must have three dimensions if a certain operational interplay between quantum theory and space is to hold~\cite{MM3D,DakicBrukner3D}.

We now turn to photon polarization $S''$. Given a photon with momentum $\vec p$ {(as seen in some reference frame)}, the photon spin must be oriented either parallel
or antiparallel to $\vec p$. Set $\vec p^0:=\vec p/|\vec p|$, and consider the spin observable $\hat S_{\vec p^0}$. This observable is
well-defined on the photon $S''$, and it can be written $\hat S_{\vec p^0}(S'')=|R\rangle\langle R|-|L\rangle\langle L|$, where $|R\rangle$
and $|L\rangle$ denote the left- and right-circular polarization states of the photon. Clearly, the same observable can be defined on
the spin-$1$ particle $S'$, which will be a $3\times 3$ matrix $\hat S_{\vec p^0}(S')$.

We claim that $\hat S_{\vec p^0}$ is universally measurable on $S'$ and $S''$, as already indicated by the notation. Namely, a concrete way to measure
both systems in the same device is via photon absorption by an atom in its electronic ground state, as described in~\cite{Suter,CCD}.
Conservation of angular momentum will force the atom's electron from the ground state to an $\ell=1$ excited state, with magnetic
quantum number $m_\ell=\pm 1$ corresponding to the photon's spin quantum state, $|L\rangle$ or $|R\rangle$. That is, the result of the absorption will effectively be
the transfer of the photonic quantum information on $S''$  to a quantum system $S'$. Finally, the observable $\hat S_{\vec p^0}(S')$ can be measured in a
Stern-Gerlach-like device, at least in principle.

Indeed, transmission of arbitrary quantum states from photon polarization qubits to energy levels of atoms are currently performed in many concrete
experiments, e.g., see~\cite{Kurz,kalb}. For example, this involves the experimental implementation of a photon-atom quantum gate \cite{reiserer,duan} whose construction is unambiguous once the spatial directions are fixed. Accordingly, two agents who have already agreed on the description of spatial directions via spin-$1$ systems $S'$ would have an unambiguous way of agreeing via classical communication on the construction and implementation of this photon-atom quantum gate. This would enable the two agents to also agree on the description of photon qubit states. It relies on the fact that, on the one hand, the photonic $|L\rangle$ and $|R\rangle$ states and their relative phase, and, on the other hand, also the spin-$1$ $|+1\rangle$ and $|-1\rangle$ states and their relative phase all have a geometric meaning.

More precisely, since superpositions of $|L\rangle$ and $|R\rangle$ are preserved {in the transmission from photons to atoms},
we can use the tomographic completeness of Stern-Gerlach measurements of spin in all directions~\cite{Amiet} to
effectively measure any observable with support on ${\rm span}\{|+1\rangle,|-1\rangle\}$ on quantum system $S'$, not only $\hat S_{\vec p^0}(S'')$.
This yields a set of observables that is universally measurable on $S'$ and $S''$ and tomographically complete on $S''$. Therefore, the universal measurability graph has an arrow from $S'$ to $S''$.

We have assumed in Subsection~\ref{SubsecFullLab} that the observables $\hat M(S')$ uniquely define the observables $\hat M(S'')$;
in other words, there is a \emph{unique} interpretation of these observables in terms of a physical quantity $\hat M$. This shows that
the other observables on $S'$ -- those that are not fully supported on ${\rm span}\{|+1\rangle,|-1\rangle\}$ -- cannot have counterparts
on $S''$. Therefore, we have already found the maximal set of universally measurable observables on $S'$ and $S''$, and no such set can be tomographically
complete on $S'$. Thus, there is no arrow in the universal measurability graph from $S''$ to $S'$.

Theorem~\ref{TheQubitNormed} also tells us that there is no representation of the rotation group on the photon polarization qubit,
in contrast to the spin-$1/2$ system $S$ and the spin-$1$ system $S'$. Instead, one expects to find a representation of the
subgroup of ${\rm SO}(3)$ that preserves the set of $(S',S'')$-co-measurable observables. These are exactly the rotations that stabilize
the photon momentum vector $\vec p$ -- in other words, the subgroup equivalent to ${\rm SO}(2)$ which rotates the transversal photon polarization vector.

\subsection{Two causes of disagreement: active and passive}
So far, we have motivated the ``information gap'' between Alice and Bob by imagining that they reside in different, distant laboratories, and have
never met before. Under this assumption, we have argued in Subsection~\ref{SubsecFullLab} that their descriptions of local quantum physics in their laboratories must be related
by an element of $\g_{\min}={\rm O}(3)$.

Concretely, we can think of Alice and Bob as talking on the phone, Alice sending a request to Bob in terms of classical information, and Bob responding
in terms of a physical system that he sends back, as depicted in Figure~\ref{fig_scenario1}. Clearly, the classical information that Alice sends to Bob
must be encoded into \emph{some} physical system as well, that serves as the signal carrier. However, we can imagine that they are using a physical (quantum) system $S'$ that carries at least
a partial natural choice of Hilbert space basis. For example, Alice can send information bitwise, encoding a zero into the ground state, and a one into
an excited state of a two-level system; or encoding it into the \emph{relation} between the orthogonal basis elements of two successive quantum systems.

In addition to this ``passive'' picture, we can also imagine an ``active'' scenario: suppose that Alice and Bob \emph{have} actually met before, and have used
this encounter to calibrate their measurement devices and synchronize their frames of reference. But imagine that their laboratories have become separated
afterwards, and subjected to all kinds of physical influences. Concretely, think of Alice and Bob and their laboratories as traveling in spaceships to distant parts of
the galaxy, under the influence of all kinds of gravitational fields.

In this case, their descriptions of local physics of each others' laboratories will have become desynchronized, and they may recognize it once they try to implement the information-theoretic
scenario of Figure~\ref{fig_scenario1}. However, in this case, there is a difference to the earlier ``passive'' scenario: namely, one would expect that
their ``information gap'' is characterized by an element of the \emph{implementable subgroup} of $\g_{\min}$ -- that is, in this case, of ${\rm SO}(3)$.
This is because the universe \emph{has} actually implemented the corresponding transformation, by acting on their laboratories.

Indeed, this is consistent with our result: different observers may become twisted relative to each other (described by a rotation $R\in{\rm SO}(3)$), but usually not reflected relative to each other.

\section{Physical quantum states and the Lorentz group}
\label{sec_phys}

\subsection{Types of quantum systems and universal measurement devices}
\label{SubsecTypes}

In the previous section, we have taken an abstract operational point of view on quantum states inspired by quantum information theory.
In this picture, a state of a quantum $N$-level system is merely a concise catalogue of probabilities of the outcomes of all possible measurements
that can be performed on the quantum system. For example, if a state $\rho=\sum_{i=1}^N \lambda_i |i\rangle\langle i|$ (assuming
non-degenerate spectrum) is measured in its
eigenbasis, then the probability to obtain outcome $i$ is $\lambda_i$, but the outcome itself is not considered to have any particular
physical meaning. Any observable of the form
\[
   \hat M=\sum_{i=1}^N m_i |i\rangle\langle i|
\]
for any choice of $m_1,\ldots,m_N$ can be measured by the same physical device with outcomes that merely differ by their classical labels $m_i$.
This point of view is particularly common in quantum information theory, and it seems especially appropriate
in the case of destructive measurements where the physical system is annihilated on detection.

However, in many situations, \emph{measurement outcomes carry concrete physical meaning}, in the sense that the outcome describes
a specific physical property of the physical system after the measurement. In this case, the eigenvalue is not just a classical label, but describes a physical post-measurement property (say, a particle's kinetic energy in some units like Joule). We have already seen in Subsection~\ref{SubsecFullLab}
that actual physical properties of an observable (for example the property of being measurable by universal devices) can have important structural consequences.
Therefore, we should analyze how the conclusion of the previous section are modified if we take into account that measurement outcomes have actually a ``size'' which can be compared to other outcomes.

Suppose we have a physical quantity which can in principle take one of infinitely, maybe continuously many values (such as energy). Then even if we have
an effectively finite-dimensional quantum system (such as a superposition of only two energy levels in an atom), this quantum system will be a subspace
of a much larger, typically infinite-dimensional Hilbert space or operator algebra which describes the laboratory as a whole {(once the necessary localization, e.g., within quantum field theory is possible)}. We will now argue that no matter
which fundamental theory (say, what specific quantum field theory) we assume to hold, this simple fact will have the universal consequence that finite-dimensional
quantum subsystems will come in different \emph{types}, even if they have the same Hilbert space dimensionality. A simple example illustrates this fact.

Imagine a Stern-Gerlach apparatus which performs a spin measurement on spin-$1/2$ particles. For this, the particles of mass $m$ and velocity $v$ will enter an inhomogeneous
magnetic field which defines a quantization axis $z$, and then spread into two beams corresponding to the two possible values of the spin in $z$-direction,
finally hitting a screen (let us assume for now that $v_z=0$ before the particles reach the magnetic field).
The particles will hit the screen in a small vicinity along the $z$-direction either up- or downwards from the initial beam axis, which constitutes the two possible outcomes ``spin up'' or ``spin down''.
We can associate an observable, $\Delta z$, to this
experiment, with eigenvalues corresponding to the two possible spatial displacements (in units of length) relative to the original beam axis and along the quantization axis.
The two eigenvalues (differing only in sign), corresponding to the two possible deflections\footnote{We will come back to the Stern-Gerlach device
later. It will turn out that our abstract derivation which follows describes the deflection more precisely in terms of the acceleration that the particle undergoes in the inhomogeneous field rather than in units of lengths. Nevertheless, qualitatively, the illustration in terms of $\Delta z$ is simpler and conveys the same motivation for the general abstract argumentation. We thus stick to it here.
}, are proportional to $1/(mv^2)$, as well as the magnetic moment and the magnetic field gradient. We emphasize that, in contrast to before, we now have classical scale in the game, described in this case by units of length.
\begin{figure*}[!hbt]
\begin{center}
\setlength{\fboxrule}{1pt}
\fbox{\includegraphics[angle=0, width=10cm]{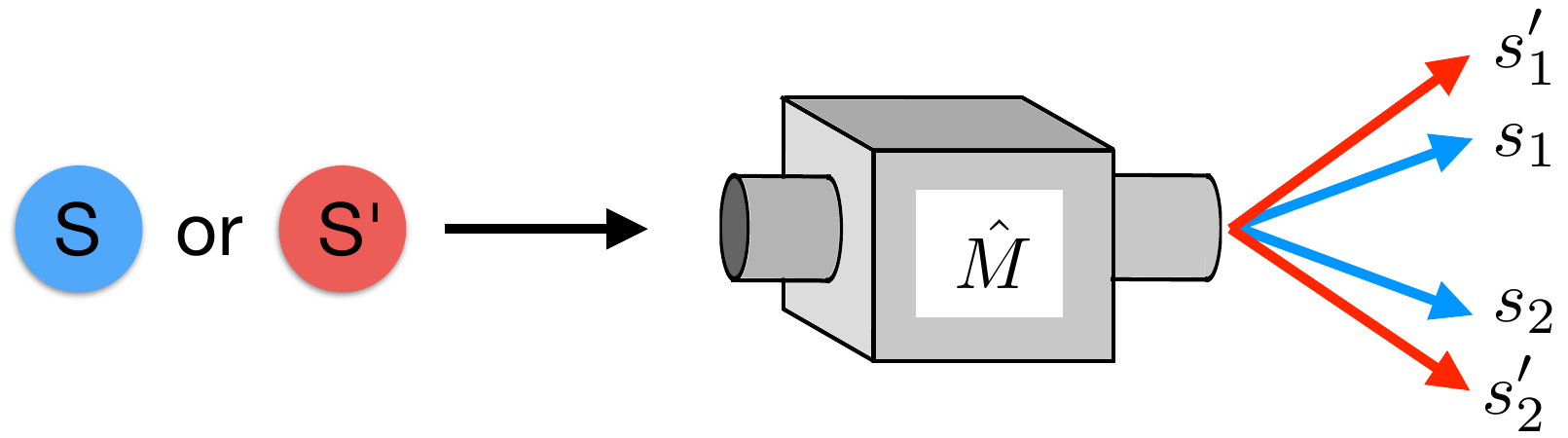}}
\caption{If measurement outcomes (i.e.\ eigenvalues of observables) are physical quantities, then there will typically be different ``types'' of systems $S$ and $S'$. They may have the same Hilbert space dimensionality, but still lead to measurement of different observables if input into the same measurement device. This leads us to refine the notion of ``universal measurement device'' from Section~\ref{SecO3}.}
\label{fig_universal2}
\end{center}
\end{figure*}

Thus, two electrons $S$ and $S'$ with different velocities will make the same fixed Stern-Gerlach device perform
different kinds of measurements (cf.\ Figure~\ref{fig_universal2}). More precisely, even though the same physical degree of freedom is probed in both cases (namely the quantum bit carried by
the electron's spin), the actual qubit observable $\Delta z$ differs between both cases {in the sense that the two sets of possible outcomes do not coincide. In fact, $\Delta z$ can be viewed as a family of {\it hybrid} observables with a quantum and a classical part: the quantum part, the spin, determines whether the particles are deflected `up' or `down', while the classical part (parametrizing the family) determines {\it how strongly} the systems are deflected}. Somehow, the qubits come in different \emph{types}
(in this case corresponding to different velocities $v$), and the measurement device responds differently to the different types of qubits.

Another possible source of having different types of quantum states is that \emph{there are different sorts of physical systems} that carry them, such as different sorts of particles.
For example, we can use other types of neutral atoms with an unpaired electron, or in principle even think of replacing the electron by a muon.

Once again, we are led to consider measurement devices which are able to measure different kinds of quantum systems, similarly as the \emph{universal measurement devices} of Section~\ref{SecO3}. But now we have additional complexity: if the outcomes (that is, eigenvalues) have physical meaning, then different observers may not agree on their description. For example, if (as in the example above) the outcomes have a unit of length, then Alice and Bob may use different units to describe length, and in this sense disagree on the actual numerical value of the eigenvalues.

\begin{figure*}[!hbt]
\begin{center}
\includegraphics[angle=0, width=9cm]{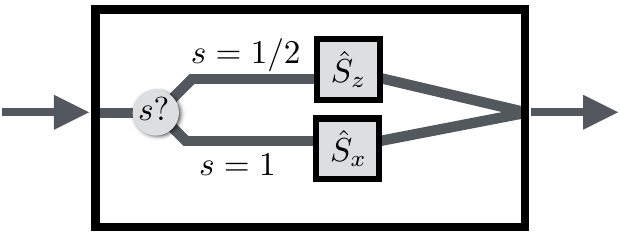}
\caption{Not a universal measurement device. By construction, the observables that this device measures on systems of spin-$1/2$ and spin-$1$ are physically different. Since there is in general no way to ``look inside the black box'', a definition of a universal measurement device has to appeal to consistency conditions among \emph{several} universal measurement devices. These conditions will rule out this device from the family of universal spin measurement devices that {we have} in our universe (see main text).}
\label{fig_notuniversal}
\end{center}
\end{figure*}

In order to analyze this more complicated situation, we need to say in more detail what it means for a measurement device to perform ``the same measurement'' on different quantum systems. In Assumptions~\ref{AssUniversal3} of Section~\ref{SecO3}, we have only assumed a certain continuity property; now we have to be more specific and extend our definition. For example, we would intuitively believe that a device as in Figure~\ref{fig_universal2} performs the same measurement on particles of different velocities. On the other hand, we would say that a device like the one in Figure~\ref{fig_notuniversal} is \emph{not} a good universal measurement device in that sense: by construction, it performs a different measurement for spin-$1/2$ particles than it does for spin-$1$ particles. But how can we know this if we do not know the inner workings of a measurement device, but only their effects?

The main idea (which is also implicit in Assumptions~\ref{AssUniversal3}, as we shall soon see) is not to consider single measurement devices separately, but to think of a \emph{family of universal measurement devices} which are related to each other by consistency conditions. For example, suppose that we have identified Stern-Gerlach devices that measure the spin in $x$- and $y$-direction, and do so for spin-$1/2$ particles $S$ as well as for spin-$1$ particles $S'$. If these {are} universal measurement devices from one family, then the device in Figure~\ref{fig_notuniversal} cannot be a universal measurement device from the same family too: the former would assign an observable $\hat S_x(S)$ to $\hat S_x(S')$, while the device in Figure~\ref{fig_notuniversal} would assign an observable $\hat S_z(S)$ to $\hat S_x(S')$, contradicting uniqueness of Assumptions~\ref{AssUniversal3}. {For example,} an observer who has a notion of directions and thereby of spin observables too can rule out universality of the device of Figure~\ref{fig_notuniversal} by comparing it to Stern-Gerlach measurements.

But which consistency conditions should we choose to determine whether two devices are from the {\it same} family of universal measurement devices? If the outcomes of a measurement are physical scalars, then we should consider the most important structural and operational property of the real numbers: that they form an ordered field. In particular, we have a notion of \emph{comparing} the values of a physical quantity on two different physical systems. It seems that the notion of ``one quantity being larger than another'' is one of the most important prerequisites to define physical quantities (like lengths or masses) operationally.

This leads us to our first new postulate: \emph{comparison of outcomes should be system-independent}. More concretely, suppose that $\hat M_1(S)$ and $\hat M_2(S)$ are devices that measure two observables on quantum system $S$, such that the expectation value of $\hat M_2(S)$ is always at least as large as that of $\hat M_1(S)$, no matter what state we send in. Then we expect that this property remains true if we measure quantum system $S'$ in the same device; that is, we obtain inequality~(\ref{eqOrder}) below.

For convenience, we add one more new postulate (and we will briefly discuss dropping it later). It encodes the idea that the outcomes of the measurement are taken \emph{relative to their values before the measurement}. In particular, an outcome corresponding to eigenvalue $\lambda=0$ corresponds to an outcome where \emph{the scalar physical quantity of interest after the measurement is identical to its value before the measurement}. In particular, the observable $\hat M=0$ (the zero matrix) is interpreted as \emph{performing no measurement at all}. We assume that this property is stable across the different quantum systems that are measured in the same device.

\begin{definition}[Universal measurement devices, extension of Assumptions~\ref{AssUniversal3}]
\label{DefUniversal4}
$\strut$\newline
We assume that there exists a family of ``universal measurement devices'' that can measure ``the same observable'' $\hat M$ on several different quantum systems $S,S',\ldots$, characterized by corresponding observables $\hat M(S),\hat M(S'),\ldots$. If quantum systems $S$ and $S'$ can be measured in a single universal measurement device, then we assume that $\hat M(S)$ uniquely determines $\hat M(S')$, and vice versa. Moreover, we assume that this interdependence is continuous in both directions.

Furthermore, we make the following assumptions as motivated above. If $\hat M_1$ and $\hat M_2$ are two universal measurement devices in which we can measure both $S$ and $S'$, then
\begin{equation}
   \hat M_1(S)\leq \hat M_2(S)\quad\Longrightarrow \quad\hat M_1(S')\leq \hat M_2(S') \qquad \mbox{(``order preservation'')}.
   \label{eqOrder}
\end{equation}
Furthermore, there is a unique ``zero measurement'' as explained above. That is,
\begin{equation}
   \hat M(S)=0 \quad \Longrightarrow\quad \hat M(S')=0.
   \label{eqZero}
\end{equation}
\end{definition}
Before we discuss these conditions in more detail, we take courage by observing that Stern-Gerlach spin measurements satisfy these conditions. First, think of observables $\hat M_i$ that measure the absolute value of the deflection, $|\Delta z|$, for a smaller ($i=1$) resp.\ larger ($i=2$) magnitude of the magnetic field gradient. Regardless of the precise incident system, the expectation value of $\hat M_1$ will always be less than the expectation value of $\hat M_2$, satisfying~(\ref{eqOrder}). The ``zero measurement'' in~(\ref{eqZero}) corresponds to the magnetic field being switched off which amounts to zero deflection for all incident systems and therefore no spin measurement. Later on, in Subsection~\ref{sec_SG}, we will show more thoroughly that a proper relativistic description of Stern-Gerlach measurements on spin-$1/2$ particles, taking their state of motion into account, will follow these conditions as well.

We can also show directly that the spin measurements of Section~\ref{SecO3} (with fixed eigenvalues independent of the deflection $\Delta z$) satisfy Definition~\ref{DefUniversal4} -- as long as they encompass a universal identity observable $\mathbf{\hat 1}$. Recall Theorem~\ref{TheQubitNormed}, and consider the case that there is a ``root qubit'' ($d=2$) such that the symmetry group becomes $\g_{\rm min}={\rm O}(3)$. In this case, the different quantum systems $S,S',\ldots$ correspond to the internal degrees of freedom of particles with different spin, while their external attributes -- such as the state of motion -- are irrelevant. The identity $\mathbf{\hat1}$ completes the spin operators to a full basis for all qubit observables.

\begin{lemma}[Consistency with Section~\ref{SecO3}]
\label{LemConsistency}
Denote by $S_s$ the quantum system that corresponds to the internal degree of freedom of a spin-$s$ particle, $s\in\{\frac 1 2,1,\frac 3 2 ,2,\frac 5 2,\ldots\}$. Use the abbreviation $\hat M(S_s)=:\hat M^{(s)}$. Define the observables $\hat S_x^{(s)}$, $\hat S_y^{(s)}$, $\hat S_z^{(s)}$ as arbitrary spin-$s$ representations of the angular momentum commutation relations $[\hat S_x^{(s)},\hat S_y^{(s)}]=i \hat S_z^{(s)}$ etc. Furthermore, define $\mathbf{\hat 1}^{(s)}:=s\mathbf{1}$. In particular, in some suitable choice of basis, this gives the Pauli matrices
\[
   \mathbf{\hat 1}^{(1/2)}=\frac 1 2 \left(\begin{array}{cc} 1 & 0 \\ 0 & 1 \end{array}\right),\quad
   \hat S_x^{(1/2)}=\frac 1 2 \left(\begin{array}{cc} 0 & 1 \\ 1 & 0 \end{array}\right),\quad
   \hat S_y^{(1/2)}=\frac 1 2 \left(\begin{array}{cc} 0 & -i \\ i & 0 \end{array}\right),\quad
   \hat S_z^{(1/2)}=\frac 1 2 \left(\begin{array}{cc} 1 & 0 \\ 0 & -1 \end{array}\right).
\]
Every observable $\hat M^{(1/2)}$ on the two-dimensional spin-$1/2$ system can be written as a linear combination with real coefficients
\[
   \hat M^{(1/2)}=\alpha \mathbf{\hat 1}^{(1/2)}+\alpha_x \hat S_x^{(1/2)}+\alpha_y \hat S_y^{(1/2)}+\alpha_z \hat S_z^{(1/2)}.
\]
Define $\hat M^{(s)}$ for all $s\neq \frac 1 2$ by linear extension of the above, i.e.
\begin{equation}
   \hat M^{(s)}=\alpha \mathbf{\hat 1}^{(s)}+\alpha_x \hat S_x^{(s)}+\alpha_y \hat S_y^{(s)}+\alpha_z \hat S_z^{(s)}.
   \label{eqLinComb}
\end{equation}
A family of spin measurement devices satisfying this prescription also fulfills Definition~\ref{DefUniversal4} above and is thus universal. In particular, we have order preservation as in~(\ref{eqOrder}).
\end{lemma}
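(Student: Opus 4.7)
The plan is to verify the three components of Definition~\ref{DefUniversal4} for the family defined by~(\ref{eqLinComb}): the one-to-one continuous interdependence of $\hat M^{(s)}$ across different spins, the zero condition~(\ref{eqZero}), and order preservation~(\ref{eqOrder}). Interdependence and the zero property will be essentially immediate from linear algebra; the substance of the argument lies in order preservation.

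First I would observe that for every $s\in\{\tfrac{1}{2},1,\tfrac{3}{2},\ldots\}$, the four Hermitian operators $\mathbf{\hat 1}^{(s)},\hat S_x^{(s)},\hat S_y^{(s)},\hat S_z^{(s)}$ on the spin-$s$ Hilbert space are linearly independent: the three spin operators are traceless and form a real basis of an irreducible representation of $\mathfrak{su}(2)$, while $\mathbf{\hat 1}^{(s)}=s\mathbf{1}$ has nonzero trace. Hence $(\alpha,\alpha_x,\alpha_y,\alpha_z)\mapsto \hat M^{(s)}$ is a continuous linear bijection from $\R^4$ onto a four-dimensional real subspace of Hermitian operators. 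Given $\hat M^{(s)}$ in this subspace, I can recover the coefficients uniquely and continuously, and then pass them on to any other spin $s'$; composition of two such bijections yields the required bicontinuous correspondence $\hat M^{(s)}\leftrightarrow \hat M^{(s')}$. The same linear independence immediately gives~(\ref{eqZero}): $\hat M^{(s)}=0$ forces all four coefficients to vanish, so $\hat M^{(s')}=0$.

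The key step is~(\ref{eqOrder}). By linearity it suffices to prove that for every $(\alpha,\vec\alpha)=(\alpha,\alpha_x,\alpha_y,\alpha_z)\in\R^4$, the positivity condition $\hat M^{(s)}\geq 0$ is independent of $s$. To that end, I would diagonalize the spin part using ${\rm SU}(2)$: there exists a spin-$s$ unitary $U^{(s)}$ implementing a rotation that sends $(\alpha_x,\alpha_y,\alpha_z)\to(0,0,|\vec\alpha|)$ with $|\vec\alpha|:=\sqrt{\alpha_x^2+\alpha_y^2+\alpha_z^2}$, so that
\[
   U^{(s)}\hat M^{(s)}\bigl(U^{(s)}\bigr)^\dagger = \alpha\,\mathbf{\hat 1}^{(s)} + |\vec\alpha|\,\hat S_z^{(s)}.
\]
The eigenvalues of the right-hand side are $\alpha s+|\vec\alpha|\,m$ for $m\in\{-s,-s+1,\ldots,s\}$, and the minimum, attained at $m=-s$, equals $s(\alpha-|\vec\alpha|)$. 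Since $s>0$, positivity of $\hat M^{(s)}$ is equivalent to $\alpha\geq|\vec\alpha|$, a condition manifestly independent of $s$. Applying this to $\hat M_2-\hat M_1$ yields~(\ref{eqOrder}).

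The main obstacle --- the place where the lemma could conceivably fail --- is exactly this $s$-independence of the positivity cone, and it depends critically on the nonstandard normalization $\mathbf{\hat 1}^{(s)}=s\mathbf{1}$ built into the prescription. The factor of $s$ in $\mathbf{\hat 1}^{(s)}$ is precisely what pairs with the minimal eigenvalue $-s$ of the spin projector to remove all $s$-dependence from the positivity condition; had one taken $\mathbf{\hat 1}^{(s)}=\mathbf{1}$ instead, positivity would read $\alpha\geq s|\vec\alpha|$ and the lemma would fail for every $s\neq 1/2$. I would flag this normalization explicitly as the reason the family is universal in the sense of Definition~\ref{DefUniversal4}.
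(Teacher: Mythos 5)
Your proof is correct and follows essentially the same route as the paper: linearity disposes of the zero condition and reduces order preservation to $s$-independence of positivity, and the smallest eigenvalue of $\hat M^{(s)}$ is computed to be $s\bigl(\alpha-\sqrt{\alpha_x^2+\alpha_y^2+\alpha_z^2}\bigr)$, so that $\hat M^{(s)}\geq 0$ iff $\alpha\geq|\vec\alpha|$ independently of $s$. Your explicit verification of the bicontinuous interdependence via linear independence of $\{\mathbf{\hat 1}^{(s)},\hat S_x^{(s)},\hat S_y^{(s)},\hat S_z^{(s)}\}$, and your remark that the normalization $\mathbf{\hat 1}^{(s)}=s\mathbf{1}$ is exactly what makes the positivity cone $s$-independent, are welcome elaborations of points the paper treats as clear.
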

This prescription describes the spin measurements in our universe. The (idealized) Stern-Gerlach devices of Section~\ref{SecO3}, each measuring solely the spin in a fixed direction on different sorts of particles (but not their precise deflection), are thus universal.
This requires a universal observable $\mathbf{\hat 1}$ as above which renders (\ref{eqOrder}) non-trivial (in fact, the choice above turns out to be the unique possibility in order to have $s$-independent positivity). Note that $\mathbf{\hat 1}(s)$ is a function of the total spin (which would be $\sqrt{s(s+1)}\mathbf{1}$). 
\begin{proof}
It is clear that linearity implies~(\ref{eqZero}); it also implies that we only have to check $\hat M^{(s)}\geq 0 \Rightarrow \hat M^{(s')}\geq 0$ to prove~(\ref{eqOrder}), i.e.\ that positivity is preserved among the different spins. Consider~(\ref{eqLinComb}) in the special case that $\alpha=0$ and $\alpha_x^2+\alpha_y^2+\alpha_z^2=1$. Then $\hat M^{(s)}=\hat S_{\vec n}$ is another spin-$s$ operator, namely the one that describes a spin measurement in direction $\vec n=(\alpha_x,\alpha_y,\alpha_z)$. Thus, its smallest eigenvalue is $(-s)$. Then it is clear that the smallest eigenvalue of $\hat M^{(s)}$ in general is $\alpha s -\sqrt{\alpha_x^2+\alpha_y^2+\alpha_z^2}s$. Therefore $\hat M^{(s)}\geq 0$ if and only if $\alpha\geq \sqrt{\alpha_x^2+\alpha_y^2+\alpha_z^2}$. This condition is independent of $s$.
\end{proof}
This raises the question whether we could have an analogous consistency result also in the case that $d\geq 3$ in Theorem~\ref{TheQubitNormed}, in which case we would have the symmetry group ${\rm PUA}(d)$ instead of ${\rm O}(3)$. If the answer was negative, then this would yield additional justification of the assumption that there is a ``qubit root'' with $d=2$ (or, in other words, this would single out the qubit as ``special''). We do not currently know the answer to this question.

Some quantum systems will share all their observables with other quantum systems:
\begin{definition}[Equivalent quantum system]
\label{DefEquivalence}
We say that two quantum systems $S$ and $S'$ are \emph{equivalent} if and only if all observables of $S$ are universally measurable on both $S$ and $S'$, and vice versa. We denote the equivalence classes in bold face font $\mathbf{S}$.
\end{definition}
Due to lack of a better wording, we will also call the equivalence classes $\mathbf{S}$ ``quantum systems'' (in fact, they are collections of many quantum systems). Surprisingly, Definition~\ref{DefUniversal4} turns out to be quite restrictive:
\begin{lemma}
\label{LemTypeTransform}
Let $S$ and $S'$ be equivalent quantum systems, then $\dim S=\dim S'$. Ignoring condition~(\ref{eqZero}) in Definition~\ref{DefUniversal4} implies that there is an invertible complex $N\times N$ matrix $X$ such that for all observables $\hat M$,
\begin{equation}
   \mbox{either } \qquad\hat M(S')=X^\dagger\hat M(S)X+Y\qquad\mbox{ or } \qquad\hat M(S') = X^\dagger\hat M(S)^T X+Y,
   \label{eqPoincare}
\end{equation}
where $Y$ is some Hermitian $N\times N$ matrix. If condition~(\ref{eqZero}) is taken into account (and therefore Definition~\ref{DefUniversal4} is considered in its complete form), then $Y=0$.
\end{lemma}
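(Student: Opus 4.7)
The plan is to analyze the map $\Phi \colon \mathcal{H}_N \to \mathcal{H}_M$ defined by $\Phi(\hat M(S)) = \hat M(S')$, where $N = \dim S$, $M = \dim S'$, and $\mathcal{H}_k$ denotes the real vector space of Hermitian $k \times k$ matrices. By Definition~\ref{DefUniversal4} together with equivalence of $S$ and $S'$ (Definition~\ref{DefEquivalence}), $\Phi$ is a continuous bijection that is an order isomorphism in both directions. First I would normalize: set $Y := \Phi(0)$ and $\tilde\Phi(A) := \Phi(A) - Y$. Since translation preserves order, $\tilde\Phi$ is still a continuous order isomorphism, now with $\tilde\Phi(0) = 0$, and hence sends the positive cone $\mathcal{H}_+^N$ bijectively onto $\mathcal{H}_+^M$. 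Viewing $\tilde\Phi$ as a homeomorphism between $\mathbb{R}^{N^2}$ and $\mathbb{R}^{M^2}$, invariance of domain forces $N = M$. Note that if condition~(\ref{eqZero}) is imposed from the start then $Y = \Phi(0) = 0$ directly, handling the last claim of the lemma.

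The key step, which I expect to be the main obstacle, is to establish that $\tilde\Phi$ is $\mathbb{R}$-linear. The map $\tilde\Phi$ is an order automorphism of $\mathcal{H}_N$ fixing $0$, and $\mathcal{H}_+^N$ is a symmetric (homogeneous and self-dual) cone, so linearity follows from a rigidity result of Alexandrov--Zeeman type. For $N=2$ this is literally the Alexandrov--Zeeman theorem, since $(\mathcal{H}_2, \mathcal{H}_+^2)$ is canonically identified with four-dimensional Minkowski space and its forward light cone, whose causal automorphisms are precisely the dilated Poincaré transformations; this identification is also what eventually produces the Lorentz group advertised in the paper's title. For general $N \geq 2$ one appeals to the matrix-cone analogue (in the spirit of Molnár--Šemrl), whose operational content is that the affine structure of the cone's base (the state space) can be reconstructed from the order alone together with continuity: midpoints are characterized as order-unique centers of suitable order intervals $[A,B]$, so $\tilde\Phi$ preserves convex combinations on $\mathcal{H}_+^N$, and the decomposition $A = A_+ - A_-$ propagates this to full $\mathbb{R}$-linearity on $\mathcal{H}_N$.

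Once $\tilde\Phi$ is linear, the classification is straightforward. As a linear cone automorphism, $\tilde\Phi$ permutes the extreme rays of $\mathcal{H}_+^N$, i.e.\ rank-one positives, and therefore induces an orthocomplemented lattice automorphism of the subspace lattice of $\mathbb{C}^N$: orthogonality is detectable order-theoretically, since $|\psi\rangle \perp |\phi\rangle$ iff $|\psi\rangle\langle\psi| + |\phi\rangle\langle\phi|$ sits as an extremal element below a rank-two projector (such relations being preserved by $\tilde\Phi$). By the fundamental theorem of projective geometry combined with Uhlhorn's strengthening of Wigner's theorem, this lattice automorphism is implemented by a unitary or antiunitary $U \colon \mathbb{C}^N \to \mathbb{C}^N$; absorbing the residual positive scaling freedom of the cone into $U$ yields an invertible matrix $X$ such that $\tilde\Phi(A) = X^\dagger A X$ in the unitary case and $\tilde\Phi(A) = X^\dagger A^T X$ in the antiunitary case (using $\bar A = A^T$ for Hermitian $A$). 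Undoing the shift by $Y$ gives the two alternatives of~(\ref{eqPoincare}), and imposing~(\ref{eqZero}) finally pins down $Y = 0$.
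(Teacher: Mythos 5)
Your overall architecture is the paper's: subtract $Y:=\Phi(0)$ to reduce to a continuous, zero-preserving order isomorphism, get $\dim S=\dim S'$ from the resulting homeomorphism of the spaces of Hermitian matrices, and note that imposing~(\ref{eqZero}) forces $Y=0$. At that point the paper simply invokes the theorems of \v{S}emrl--Sourour and Moln\'ar, which state exactly that (for $N\geq 2$) every such zero-preserving order automorphism of the Hermitian $N\times N$ matrices is $\hat M\mapsto X^\dagger\hat M X$ or $\hat M\mapsto X^\dagger\hat M^T X$ with $X$ invertible; had you cited these results at full strength you would be finished, and that is the paper's route. Instead you attempt to reprove the classification, and there the proposal has genuine gaps.

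First, the linearity step is asserted rather than proved: the claim that midpoints are ``order-unique centers of suitable order intervals'' is precisely the hard content of the cited theorems (whose actual proofs proceed via rank-one/adjacency characterizations and the fundamental theorem of the geometry of Hermitian matrices, not via a midpoint characterization), and the Alexandrov--Zeeman remark only covers $N=2$ while the lemma is stated for general $N$. Second, and more seriously, the ``straightforward classification'' step contains a false claim: orthogonality is \emph{not} detectable from the order. The map $A\mapsto X^\dagger A X$ with $X$ invertible but non-unitary is itself a continuous zero-preserving order automorphism, yet it sends $|\psi\rangle\langle\psi|$ and $|\phi\rangle\langle\phi|$ with $\langle\psi|\phi\rangle=0$ to rank-one positives supported on $X^\dagger\psi$ and $X^\dagger\phi$, and $\langle X^\dagger\psi|X^\dagger\phi\rangle=\langle\psi|XX^\dagger|\phi\rangle\neq 0$ in general; likewise ``rank-two projector'' is not an order-invariant notion, since cone automorphisms do not preserve the order unit $\mathbf{1}$. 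If orthogonality were order-detectable, Uhlhorn's theorem would force $X$ to be unitary --- contradicting the very statement being proved, whose point is that non-unitary $X$ occur. The correct route for the linear case uses the faces of the positive cone (which correspond to subspaces of $\mathbb{C}^N$), the induced inclusion-preserving automorphism of the subspace lattice, and the fundamental theorem of projective geometry together with continuity to exclude wild field automorphisms; the non-unitary freedom is then a full positive-definite polar factor of $X$, not a ``residual positive scaling'' of the cone.
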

\begin{proof}
If $S$ and $S'$ are equivalent, then Definition~\ref{DefUniversal4} implies that the spaces of Hermitian matrices over $S$ resp.\ $S'$ are homeomorphic, therefore they have the same dimensions, and hence $\dim S=\dim S'$.
It is easy to check that maps of the form given above satisfy all conditions of Definition~\ref{DefUniversal4}. Conversely, note that all continuous bijective order-preserving maps on the Hermitian $N\times N$ matrices which preserve the zero matrix are of the form $\hat M\mapsto X^\dagger\hat M X$ or $\hat M\mapsto X^\dagger \hat M^T X$ with $X$ some invertible complex $N\times N$ matrix, see~\cite{Semrl,Molnar}.
\end{proof}
This lemma shows that some further natural consistency conditions among the different quantum systems hold true automatically for equivalent systems, even though we have not postulated them directly in Definition~\ref{DefUniversal4}. For example, we have linearity:
\begin{equation}
   \hat M(S)=\lambda_1\, \hat M_1(S)+\lambda_2\hat M_2(S) \quad\Longrightarrow \quad
   \hat M(S')=\lambda_1\, \hat M_1(S')+\lambda_2\hat M_2(S').
   \label{eqLinearity}
\end{equation}
In Subsection~\ref{SubsecPSL}, we will discuss the communication scenario of Alice and Bob as in Figure~\ref{fig_scenario1} in the context of equivalent quantum systems. As we will discuss there, Alice and Bob will agree on equivalence classes $\mathbf{S}$ of quantum systems, but not on the question which one of the equivalent quantum systems ($S\in\mathbf{S}$ or $S'\in\mathbf{S}$ etc.) is to be sent. Nevertheless, (\ref{eqLinearity}) can be interpreted as saying that \emph{Alice and Bob agree on the linearity structure of observables}. And this is quite intuitive: think of two observers who measure distances in meters and miles, respectively. Consider the distances $d_1=4$m, $d_2=6$m, $d_3=10$m and $d_4=24$m. Then both observers will agree on the fact that $d_1+d_2=d_3$ -- this equation remains true after conversion from meters into miles. However, the fact that the numerical value of $d_4$ is the \emph{product} of the numerical values of $d_1$ and $d_2$ is something that the observers will not in general agree upon. In the next subsection,
we will also see that this property is even more important when we turn from the Heisenberg to the Schr\"odinger picture: the ``dual'' property of~(\ref{eqLinearity}) establishes the observer-independence of the probabilistic structure.

At the beginning of this section (see also Figure~\ref{fig_universal2}), we have motivated our approach by saying that quantum systems may come in different ``types'', making one and the same measurement device perform different kinds of measurements with potentially different outcome eigenvalues. We are now in a position to write down a formal definition of types. We will say that $S$ and $S'$ are of the same type if \emph{for every state $\rho_S$ of system $S$, there is a state $\rho_{S'}$ of system $S'$
that yields exactly the same statistics on all measurements $\hat M$ (and vice versa)}. We can interpret this as saying that $S$ and $S'$ are identical quantum systems, but carry different choices of Hilbert space bases. This leads us to the following definition.
\begin{definition}\label{def_type} \emph{\bf(Type)}
Two quantum systems $S$ and $S'$ are said to be of the same \emph{type} if they are equivalent in the sense of Definition~\ref{DefEquivalence}, and if either $\hat M(S')=U^\dagger\, \hat M(S)\, U$ or $\hat M(S')=U^\dagger\, \hat M(S)^T\, U$ with some fixed unitary $U$. Otherwise, we say that $S$ and $S'$ are of distinct \emph{types}.
\end{definition}
We can also nicely characterize this in a different way:
\begin{lemma}
\label{LemTypeEigenvalues}
Two equivalent quantum systems $S$ and $S'$ are of the same type if and only if the outcome eigenvalues of all measurements are identical.	
\end{lemma}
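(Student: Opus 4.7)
\medskip
\noindent\textbf{Proof plan.} The plan is to feed the hypothesis of eigenvalue preservation into the structural representation of Lemma~\ref{LemTypeTransform}: for equivalent systems $S$ and $S'$ (of common dimension $N$) there is an invertible $N\times N$ matrix $X$ such that either $\hat M(S')=X^\dagger\hat M(S)X$ for all observables $\hat M$, or $\hat M(S')=X^\dagger\hat M(S)^T X$ for all observables $\hat M$. Same type (Definition~\ref{def_type}) is precisely the statement that $X$ can be chosen unitary, so the lemma reduces to showing that eigenvalue preservation for every $\hat M$ is equivalent to $X$ being unitary.

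The forward direction is the easy one. If $X=U$ is unitary, conjugation $\hat M(S)\mapsto U^\dagger \hat M(S)U$ is a similarity, hence preserves the spectrum. In the alternative case $\hat M(S')=U^\dagger \hat M(S)^T U$, one uses that $\hat M(S)$ is Hermitian, so $\hat M(S)^T=\overline{\hat M(S)}$; since the eigenvalues of any Hermitian matrix are real, complex conjugation does not alter them, and unitary conjugation again preserves them.

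For the backward direction I would test the putative eigenvalue identity on the simplest non-trivial family of observables, the rank-one projections $\hat M(S)=|\psi\rangle\langle\psi|$ for unit vectors $|\psi\rangle$. Its spectrum is $(1,0,\ldots,0)$. In the first case of Lemma~\ref{LemTypeTransform},
\[
\hat M(S')=X^\dagger|\psi\rangle\langle\psi|X=(X^\dagger|\psi\rangle)(X^\dagger|\psi\rangle)^\dagger,
\]
which is rank one with non-zero eigenvalue $\|X^\dagger|\psi\rangle\|^2=\langle\psi|XX^\dagger|\psi\rangle$. Equality of spectra forces this scalar to equal $1$ for every unit vector $|\psi\rangle$, and a Hermitian matrix whose quadratic form equals $1$ on all unit vectors must be $I$; hence $XX^\dagger=I$, so $X$ is unitary. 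The transpose alternative is handled identically after noting $|\psi\rangle\langle\psi|^T=|\bar\psi\rangle\langle\bar\psi|$, which is again a rank-one projection, and $|\psi\rangle\mapsto|\bar\psi\rangle$ sweeps out all unit vectors.

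There is no real obstacle here; the only point that merits caution is the transpose case, which is why I would verify it explicitly in parallel with the straight-conjugation case. With $X$ unitary in both alternatives, Definition~\ref{def_type} is met and $S,S'$ are of the same type, completing the equivalence.
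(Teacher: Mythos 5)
Your proof is correct, and the forward direction matches the paper's. For the converse, however, you take a genuinely different route: the paper simply invokes the Jafarian--Sourour theorem on spectrum-preserving linear maps (reference \cite{Jafarian}) to conclude that a map of the form $X^\dagger\bullet X$ preserving all spectra forces $X^\dagger=X^{-1}$, whereas you give a direct, self-contained argument by evaluating the transformation of Lemma~\ref{LemTypeTransform} on the rank-one projections $|\psi\rangle\langle\psi|$, reading off the single non-zero eigenvalue $\langle\psi|XX^\dagger|\psi\rangle$, and concluding $XX^\dagger=\mathbf{1}$ from the fact that a Hermitian matrix with constant quadratic form $1$ on unit vectors is the identity. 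Your handling of the transpose alternative via $(|\psi\rangle\langle\psi|)^T=|\bar\psi\rangle\langle\bar\psi|$ is also correct. What your approach buys is elementarity and a slightly stronger statement: you only need eigenvalue preservation on rank-one projections (a small subfamily of observables), not on all measurements, and you avoid an external citation. What the paper's approach buys is brevity, and the cited result covers the general structure of spectrum-preserving maps beyond the specific conjugation form, which could be reused elsewhere. Both arguments rest equally on Lemma~\ref{LemTypeTransform} to reduce the problem to the form $X^\dagger\bullet X$ (possibly with a transpose), so no gap arises there.
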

\begin{proof}
Clearly, if $S$ and $S'$ are of the same type, then $\hat M(S)$ and $\hat M(S')$ are related by a unitary (and possibly a transpose) which preserves all eigenvalues. Conversely, the results in~\cite{Jafarian} show that the map $X^\dagger \bullet X$ which relates $\hat M(S)$ and $\hat M(S')$ according to Lemma~\ref{LemTypeTransform} is spectrum-preserving if and only if $X^\dagger=X^{-1}$, i.e.\ if $X$ is unitary and thus $S$ and $S'$ are of the same type.
\end{proof}
Therefore, in order to specify $\hat M(S)$ for a given quantum system $S$, we need to know two things: first, the type of the system $S$, and second, the choice of basis in the corresponding Hilbert space, or more generally the way that quantum states are described as density matrices.

The observables of different types of systems are thus related by conjugation with {\it non-unitary} matrices. The interpretation of this non-unitarity is clear: different types of systems produce distinct outcomes with distinct physical values in the {\it same} measurement devices (cf.\ Figure~\ref{fig_universal2}). Accordingly, their corresponding observables must feature distinct eigenvalues such that a transformation among them cannot be unitary.

\subsection{From Heisenberg to Schr\"odinger: unnormalized quantum states}
\label{SubsecSchroedinger}
Before we discuss the communication task between Alice and Bob under these generalized circumstances, we need to clarify what the type transformations imply for quantum states. In standard quantum theory, states are dual to observables and symmetry transformations can be performed on either.  So far, we have taken the different types of systems into account by saying that a measurement $\hat M$ is described by an observable $\hat M(S)$ which depends on the quantum system $S$. Translating between equivalent systems implies, according to Lemma~\ref{LemTypeTransform}, a transformation of the observables. However, we can also use a ``dual'' formalism, which is similar to the change from the Heisenberg to the Schr\"odinger picture, and apply the transformation equivalently to states. But how can we reconcile the non-unitarity of type transformations with a meaningful notion of state normalization? As we shall now argue, one can translate the type freedom into a parametric freedom in the state normalization which ultimately implies that degrees of freedom of distinct type live on distinct Hilbert spaces. Not only will this be useful to discuss the communication scenario of Figure~\ref{fig_scenario1} in Subsection~\ref{SubsecPSL}, but it will also allow us to show a direct connection to relativistic quantum theory in Subsection~\ref{sec_SG}.

For any given quantum system (equivalence class) $\mathbf{S}$, fix some reference system $S_0\in\mathbf{S}$. Suppose that $S\in\mathbf{S}$ is any other quantum system such that Lemma~\ref{LemTypeTransform} yields the transformation equation without transpose, i.e.\ $\hat M(S)=X^\dagger\hat M(S_0)X$. Then, for any quantum state $\rho$ on $S$,
\[
   \tr(\rho \hat M(S))=\tr(\rho X^\dagger \hat M(S_0)X)=\tr(X \rho X^\dagger \hat M(S_0)).
\]
This allows us to use the following alternative formalism: every measurement $\hat M$ is described by a \emph{fixed} observable $\hat M(S_0)$.
However, quantum states $\rho$ of other equivalent systems $S\in\mathbf{S}$ are instead described by pseudo quantum states
$X \rho X^\dagger$, where $X$ is the complex matrix
such that $\hat M(S)=X^\dagger\hat M(S_0)X$ in the ``Heisenberg'' formalism we had before. In the case where $\hat M(S)=X^\dagger\hat M(S_0)^T X$, the pseudo quantum state
is $\bar X \rho^T \bar X^\dagger$, where $\bar X$ denotes the complex conjugate of $X$.

Note that this ``dualization'' would be impossible if we had $Y\neq 0$ in Lemma~\ref{LemTypeTransform}. Thus, condition~(\ref{eqZero}) is necessary
for translating from type-dependent observables to type-dependent quantum states if we want to keep the usual Born rule (trace rule) formalism.

Pseudo quantum states are positive semidefinite, but not necessarily normalized; their trace can be any positive real number.
At first sight, this seems to be in conflict with the usual postulates of quantum mechanics, but in fact it is not. Every $N$-level pseudo quantum state $\rho$ is an operator on $\C^N$. However, this vector space becomes a Hilbert space only with a choice of inner
product. It turns out that choosing an inner product different from the canonical one turns pseudo quantum states into quantum states,
and thus admits the usual probabilistic interpretation of quantum states. Moreover, this choice of inner product is necessary and sufficient to have a valid ``Schr\"odinger picture'' that makes identical physical predictions as the ``Heisenberg picture'' that we have used in the previous subsections.

\begin{observation}
\label{Obs1}
By choosing a fixed reference quantum system $S_0$, we can denote every equivalent system $S$ by $S_X$ or $S_X^T$; we use the
former notation if $\hat M(S)=X^\dagger \hat M(S_0)X$, and the latter if $\hat M(S)=X^\dagger \hat M(S_0)^T X$.

For any given equivalent quantum system $S\in\mathbf{S}$, consider the polar decomposition
\[
   X=\left\{
      \begin{array}{cl}
         RU & \mbox{if }S=S_X\\
         R^T U & \mbox{if }S=S_X^T
      \end{array}
   \right\},
   \qquad\mbox{with }R>0\mbox{ and }U^\dagger U=\mathbf{1}.
\]
This decomposition is unique~\cite{HornJohnson}. Note that\footnote{To see this,
suppose $X^\dagger \hat M X = (X')^\dagger \hat M X'$ for all $\hat M$.
The special case $\hat M=(X X^\dagger)^{-1}$ implies that $XX^\dagger=X'(X')^\dagger$, and thus $R=R'$. Using this and substituting $\hat M \mapsto R^{-1}\hat M R^{-1}$
gives $U^\dagger \hat M U = (U')^\dagger \hat M U'$ for all $\hat M$, which implies $U'=e^{i\theta}U$.
} $S_X=S_{X'}$
(resp.\ $S_X^T=S_{X'}^T$) if and only if $X'=e^{i\theta}X$ for some $\theta\in\R$ (or equivalently if $U'=e^{i\theta}U$ and $R=R'$).
Pseudo quantum states on $S_X$ or $S_X^T$ are quantum states
on the Hilbert space $\mathcal{H}_R$, which is the complex vector space $\C^N$ with inner product
\[
   \langle \varphi|\psi\rangle_R:=\langle \varphi|R^{-2}|\psi\rangle,
\]
and the unitary $U$ signifies the different choice of basis in this Hilbert space, compared to the choice of basis for $S_0$. Two equivalent quantum systems are of the same type if and only if their matrices $R$ agree; that is, if and only if they live on the same Hilbert space, $\mathcal{H}_R$.
If $S=S_X$ (no transposition), then
the matrices $X$, interpreted as maps $X:\mathcal{H}_{\mathbf{1}}\to\mathcal{H}_R$, are isometries of Hilbert spaces.
In the case where $S=S_X^T$, the map $|\psi\rangle\mapsto \bar X |\bar\psi\rangle$, where the bar denotes componentwise complex conjugation, is an antilinear isometry.
Normalization of pseudo quantum states $\rho$ is given by $\tr^R(\rho):=\tr(\rho R^{-2})=1$.
\end{observation}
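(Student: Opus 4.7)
The plan is to break the observation into its component claims and verify each by fairly direct linear algebra, leveraging Lemma~\ref{LemTypeTransform} as the main input. The labeling $S_X$ or $S_X^T$ is legitimate because Lemma~\ref{LemTypeTransform} (with $Y=0$, as forced by condition~(\ref{eqZero})) tells us every equivalent system corresponds to some invertible $X$ up to the transpose dichotomy. Uniqueness of the polar decomposition $X=RU$ with $R=\sqrt{XX^\dagger}>0$ and $U=R^{-1}X$ unitary is then the standard result for invertible matrices, and I would simply cite this; the transpose case $X=R^T U$ is identical after polar-decomposing $X^T$.

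Next I would establish the redundancy relation $S_X=S_{X'}\Leftrightarrow X'=e^{i\theta}X$. The forward direction uses exactly the footnote argument: if $X^\dagger\hat{M}X=(X')^\dagger\hat{M}X'$ for all Hermitian $\hat M$, plug in $\hat M=(XX^\dagger)^{-1}$ (well-defined because $X$ is invertible) to obtain $X'(X')^\dagger=XX^\dagger$, hence $R'=R$; then substitute $\hat M\mapsto R^{-1}\hat M R^{-1}$ to reduce the equation to $U^\dagger\hat M U=(U')^\dagger\hat M U'$ for all Hermitian $\hat M$, which forces $U'U^\dagger$ to commute with everything and thus equal $e^{i\theta}\mathbf{1}$. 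The backward direction is an immediate computation. The transpose case is treated analogously.

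For the Hilbert space structure, positivity of $R$ immediately makes $\langle\varphi|\psi\rangle_R:=\langle\varphi|R^{-2}|\psi\rangle$ a genuine inner product on $\C^N$. The pseudo state corresponding to a normalized quantum state $\rho$ of $S_0$ is $\rho_X:=X\rho X^\dagger$ (or $\bar X\rho^T\bar X^\dagger$), and
\[
\tr^R(\rho_X)=\tr(X\rho X^\dagger R^{-2})=\tr(\rho\,X^\dagger R^{-2}X)=\tr(\rho\,U^\dagger R\,R^{-2}R\,U)=\tr(\rho)=1,
\]
so the proposed normalization is consistent, and the usual Born rule $\tr(\rho_X\hat M(S_0))$ reproduces the Heisenberg-picture expectation values by construction. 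The isometry claim is the same calculation read differently: $\langle X\varphi|X\psi\rangle_R=\langle\varphi|X^\dagger R^{-2}X|\psi\rangle=\langle\varphi|U^\dagger U|\psi\rangle=\langle\varphi|\psi\rangle$, showing $X:\mathcal{H}_{\mathbf{1}}\to\mathcal{H}_R$ is a Hilbert space isometry; in the transpose case one checks $\langle\bar X\bar\varphi|\bar X\bar\psi\rangle_R=\overline{\langle\varphi|\psi\rangle}$, giving an antilinear isometry.

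Finally, for the type characterization: by Definition~\ref{def_type}, two equivalent systems $S_X$ and $S_{X'}$ (ignoring transposes, which are handled in parallel) are of the same type iff $(X')^\dagger\hat M(S_0)X'=V^\dagger X^\dagger\hat M(S_0)XV$ for all $\hat M$ and some unitary $V$. By the uniqueness step above this forces $X'=e^{i\theta}XV$, whence $R'=\sqrt{X'(X')^\dagger}=\sqrt{XVV^\dagger X^\dagger}=R$. Conversely, if $R=R'$, then $X'=RU'=X(U^\dagger U')$ with $U^\dagger U'$ unitary, so the two systems differ by conjugation by a unitary and are of the same type. I expect no deep obstacle: the one point requiring a little care is bookkeeping across the four sign-type combinations (transposed vs.\ not, for $S$ and $S'$), but Lemma~\ref{LemTypeTransform} already sorts these into the two classes $S_X$ and $S_X^T$, and each computation above is reducible to the non-transposed case by replacing $\rho$ with $\rho^T$ and $X$ with $\bar X$.
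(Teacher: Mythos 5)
Your proposal is correct and follows essentially the same route as the paper: uniqueness of the polar decomposition cited as standard, the phase-freedom argument via the special choice $\hat M=(XX^\dagger)^{-1}$ followed by the substitution $\hat M\mapsto R^{-1}\hat M R^{-1}$, and the direct computations $\langle X\varphi|X\psi\rangle_R=\langle\varphi|\psi\rangle$ and $\tr^R(X\rho X^\dagger)=\tr\rho=1$ for the isometry and normalization claims. The only difference is that you spell out the ``same type iff $R=R'$'' equivalence (via $X'=e^{i\theta}XV$ and $X'=X(U^\dagger U')$), which the paper asserts without explicit proof; your argument for it is the natural one and is consistent with Definition~\ref{def_type} and Lemma~\ref{LemTypeTransform}.
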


Given pure quantum states $|\psi\rangle$ and $|\varphi\rangle$ on $S_X$, the pseudo quantum states are $|\psi'\rangle=X|\psi\rangle$
and $|\varphi'\rangle=X|\varphi\rangle$. With the given choice of inner product, we have
\[
   \langle \varphi'|\psi'\rangle_R=\langle \varphi|X^\dagger R^{-2} X|\psi\rangle = \langle \varphi| U^\dagger R R^{-2} R U |\psi\rangle = \langle \varphi|\psi\rangle.
\]
Thus, pseudo quantum states are indeed normalized with this choice of inner product, and they yield valid transition probabilities. On quantum system $S_X^T$, we instead have $|\psi'\rangle=\bar X|\bar\psi\rangle$, and we obtain analogously
\[
   \langle \varphi'|\psi'\rangle_R = \langle \psi|\varphi\rangle = \overline{\langle \varphi|\psi\rangle}.
\]
Since transition probabilities are given by the modulus square of an inner product, this still reproduces the probabilistic predictions of the standard quantum states in the Heisenberg picture of the previous subsection.
More generally, for systems $S_X$, we can check that the normalization of the pseudo quantum state
$\rho'=X\rho X^\dagger$ in the new inner product\footnote{We can describe the situation alternatively as follows. Given $\C^N$ as a vector space,
the choice of inner product $\langle \cdot,\cdot\rangle_R$ defines a set of Hermitian matrices, i.e.\ matrices $M$ with $\langle x,My\rangle_R=\langle Mx,y\rangle_R$.
These are the matrices for which $(R^{-2}M)^\dagger = R^{-2}M$, where the conjugate transpose is defined with respect to the standard inner product and its canonical orthonormal basis.
Similarly, the cone of positive semidefinite matrices becomes the set of matrices $M$ for which $R^{-2}M\geq 0$ with respect to the canonical inner product,
and the normalization condition becomes $\tr^R(M)\equiv \tr(R^{-2}M)=1$. All this is consistent with the Heisenberg picture: consider a quantum system $S$ in state $|\psi\rangle$. The expectation value of an observable $\hat M$ is $\langle \psi|\hat M(S)|\psi\rangle$, where $\hat M(S)=\hat M(S)^\dagger$.
If $S=S_X$, then this equals $\langle \varphi|R^2 \hat M(S_0)|\varphi\rangle_R$, where $|\varphi\rangle=X|\psi\rangle$, and $R^2\hat M(S_0)$ is Hermitian
in the Hilbert space $\mathcal{H}_R$.} equals the usual trace of $\rho$:
\ba
   \tr^R\rho'=\tr(X\rho X^\dagger R^{-2})=\tr(\rho X^\dagger R^{-2} X)=\tr\, \rho=1,
   \label{tra}
\ea
and the same result holds for systems $S_X^T$, using that $\tr(\rho^T)=\tr\,\rho$.

We will discuss the relation of these results to the physics of relativistic Stern-Gerlach devices in some detail in Subsection~\ref{sec_SG}. For the moment, let us simply anticipate a few points to see where this is going. As we have just seen, degrees of freedom of different type live on different Hilbert spaces, which means that we cannot construct quantum states describing their superposition and have to treat the type of a system \emph{classically}. In Subsection~\ref{sec_SG}, we will see that the different types of systems basically correspond to different momenta $p$ of particles in the Stern-Gerlach case. Thus, our situation is qualitatively different from quantum field theory, where superposition of different momentum modes is standard. However, this does not invalidate the current approach. Instead, our framework can be viewed as an approximation that is valid in certain physical situations.

For instance, in a spacetime context, this framework will apply to the special case of particles that can be approximately described to be in classical momentum states; the momentum-dependence of the Hilbert space inner product is identical to the one that was independently constructed in a relativistic description of the Stern-Gerlach experiment by Palmer et al.~\cite{Westman2}. We will come back to this later in Subsection~\ref{sec_SG}.

\subsection{The communication scenario with different types of systems}
\label{SubsecPSL}
We now return to the communication scenario of the previous sections. Again we have Alice and Bob as in Figure~\ref{fig_scenario1}.
Alice sends Bob a classical request, asking him to prepare a quantum system $\mathbf{S}$ in a quantum state $\rho$. Bob then tries to send the corresponding physical system back to Alice. We start with a simple situation that is the analogue of the situation considered in Subsection~\ref{SubsecTransmitFinite}: Alice and Bob aim at transmitting a single concrete $N$-level quantum system $\mathbf{S}$; no assumptions are made on any relation between different quantum systems $\mathbf{S}$ and $\mathbf{S'}$.

This formulation implicitly assumes the following:
\begin{assumptions}
\label{Ass4.8}
We assume that Alice and Bob can agree on the equivalence class $\mathbf{S}$ by classical communication (up to a subtlety discussed right before Theorem~\ref{TheQubitUnnormed}), but not in general on the physical system $S\in\mathbf{S}$ that is to be sent.\end{assumptions}
Note that the difficulty of picking out a quantum system $S$ from a continuum of systems (and describing it classically) was already implicit in Section~\ref{SecO3}, in the case of systems which carry only a subset of universally measurable observables (see the proof of Theorem~\ref{TheQubitNormed}).

For the time being, let us describe the situation in the Heisenberg picture.
The main communication problem -- if Alice and Bob have never
met before -- is that both may describe equivalent quantum systems in different ways. That is, Alice and Bob may have chosen two different ``reference systems''
$S_0^A\in\mathbf{S}$ and $S_0^B\in\mathbf{S}$ respectively, which means that they assign different invertible matrices $X_A$ and $X_B$ to the same quantum system $S$ (namely such that $\hat M(S)=X_A^\dagger \hat M(S_0^A)X_A$ and similarly for $B$, possibly with transpositions).
To distinguish between both cases with and without transpose, we introduce another variable $\mathcal{P}\in\{-1,+1\}$,
``parity'', which is set to $+1$ if there is no transpose, and $-1$ otherwise. Again, Alice and Bob may differ in the assignment of parity, in which case $\mathcal{P}_A\neq \mathcal{P}_B$.

Note that the communication scenario in the previous Section~\ref{SecO3} is a simpler version of this one: it applies to the case where Alice and Bob \emph{agree on the type of
system to be sent}, such that the only source of disagreement is the choice of Hilbert space basis, i.e.\
$\hat M(S_0^B)=U^\dagger \hat M(S_0^A) U$ for some unitary $U$ and all observables $\hat M$ (possibly with an additional transpose). With this additional knowledge, Alice's and Bob's descriptions differ by a unitary or antiunitary map,
and we recover the setting of Section~\ref{SecO3}. The additional complexity of \emph{this} section is that Alice and Bob do not in general agree on the description of eigenvalues of measurement outcomes; which is, according to Lemma~\ref{LemTypeEigenvalues}, equivalent to not agreeing on the type of system.

In the current more general communication scenario, where Alice and Bob have \emph{not} equalized their descriptions of the different types,
Alice sends out a classical description of $X_A$ and $\mathcal{P}_A$ and $\rho$. Since Bob uses
a different reference system, he will not know which system and state he should send to Alice. Thus, as in the previous sections, there should be a transformation $T$
(depicted in Figure~\ref{fig_scenario1}) that corrects this description before it arrives in Bob's lab. There is no need to correct the description of $\rho$ -- once
Bob knows the reference system, he also knows the choice of Hilbert space basis and knows how to prepare $\rho$. We will now describe a group of transformations $T$
that achieves the task; these transformations act on $X$ and $\mathcal{P}$ only.

First, consider the case that Alice's and Bob's reference systems have the same parity, in the sense that $\hat M(S_0^B)=Y^\dagger \hat M(S_0^A) Y$
(no transposition) for some fixed invertible matrix $Y$. Furthermore, suppose that $S$ is a system that Bob describes as $S_{X_B}$ (no transposition). Then
\[
   \hat M(S)=X_B^\dagger \hat M(S_0^B) X_B = X_B^\dagger Y^\dagger \hat M(S_0^A) Y X_B = X_A^\dagger \hat M(S_0^A) X_A.
\]
Thus, we get $X_B=Y^{-1} X_A$ and $\mathcal{P}_B=\mathcal{P}_A=+1$. On the other hand, if $S$ is a system that Bob describes as $S_{X_B}^T$, then
\[
   \hat M(S)=X_B^\dagger \hat M(S_0^B)^T X_B = X_B^\dagger \bar Y^\dagger \hat M(S_0^A)^T \bar Y X_B,
\]
where the bar denotes componentwise complex conjugation. Thus, we get $X_B=\bar Y^{-1}X_A$ and $\mathcal{P}_A=\mathcal{P}_B=-1$, and we obtain the
following table defining a correcting transformation $T$:
\begin{eqnarray}
   (X_A,+1)&\stackrel T \mapsto& (Y^{-1}X_A,+1) \nonumber\\
   (X_A,-1)&\stackrel T \mapsto& (\bar Y^{-1} X_A,-1).\label{eqFirstKind}
\end{eqnarray}
Second, consider the case that Alice's and Bob's reference systems have opposite parities. Analogous calculations as above show that the correcting
transformation $T$ has to be chosen slightly differently:
\begin{eqnarray}
   (X_A,+1)&\stackrel T \mapsto& (\bar Y^{-1}X_A,-1) \nonumber\\
   (X_A,-1)&\stackrel T \mapsto& (Y^{-1} X_A,+1).\label{eqSecondKind}
\end{eqnarray}
If we denote the transformations $T$ in~(\ref{eqFirstKind}) by $T^+_{Y^{-1}}$ and those in~(\ref{eqSecondKind}) by $T^-_{Y^{-1}}$ {(mind the inverse)},
we obtain the multiplication table
\[
   T_Y^+ T_Z^+ = T_{YZ}^+,\quad T_Y^- T_Z^- = T_{Y\bar Z}^+,\quad T_Y^- T_Z^+ =T_{Y \bar Z}^-,\quad T_Y^+ T_Z^- =T_{YZ}^-.
\]
As expected, these transformations form a group. Since matrices $X_B$ and $X_B'$ that differ
only by a complex phase describe the same type, we have to identify $Y$ and $e^{i\theta}Y$ for $\theta\in\R$, which yields a quotient group.
If we define the maps $G_Y^+(\hat M):=Y \hat M Y^\dagger$ and $G_Y^-(\hat M):=Y \hat M^T Y^\dagger$
on the self-adjoint matrices $\hat M$, then these maps satisfy the same multiplication rules (with $T$ replaced by $G$), so the group we just obtained
is isomorphic to the group $\g$ containing the $G_Y^+$ and $G_Y^-$.
We have just seen that we can always choose an element $T\in\g$
of this group, and apply it as in Figure~\ref{fig_scenario1} to ensure that Alice's and Bob's communication task succeeds. Thus, we obtain the following statement:
\begin{lemma}
\label{LemPSL}
In the communication scenario described above, the group
\[
   \g=\{\hat M \mapsto Y \hat M Y^\dagger \,\,|\,\, \det Y\neq 0\}\cup \{\hat M \mapsto Y \hat M^T Y^\dagger \,\,|\,\, \det Y\neq 0\}
\]
is achievable. This is the group of linear automorphisms of the cone of positive semidefinite complex $N\times N$ matrices. The subgroup
of implementable transformations is its connected component at the identity, which can also be written
\begin{equation}
   {\rm GL}(N,\C)/{\rm U}(1)\enspace = \enspace \R^+\times {\rm PSL}(N,\C).
   \label{eqConnectedComponent}
\end{equation}
\end{lemma}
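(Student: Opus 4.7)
The plan is to prove the three assertions in sequence: achievability of $\g$, identification of $\g$ with all linear automorphisms of the PSD cone, and the structure of the implementable subgroup. The first assertion is essentially established by the construction immediately preceding the lemma: the correcting transformations $T^\pm_{Y^{-1}}$ derived in~(\ref{eqFirstKind}) and~(\ref{eqSecondKind}), together with the multiplication table computed there, are in bijective correspondence with the maps $G_Y^\pm$ and reproduce the same group law. To finish this step I would only need to observe that the set of encodings parametrized by choices of reference system $S_0 \in \mathbf{S}$ and parity $\mathcal{P} \in \{\pm 1\}$ does indeed qualify as a physically distinguished set in the sense of Definition~\ref{DefMinimal}, so that the associated group is achievable.

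For the second assertion, one direction is elementary: every $G_Y^\pm$ with $\det Y \neq 0$ is manifestly a linear bijection of the real vector space of Hermitian $N \times N$ matrices that preserves the PSD cone (positivity is preserved because $Y \hat M Y^\dagger \ge 0$ iff $\hat M \ge 0$ when $Y$ is invertible; likewise for the transposed version). For the converse I would invoke the classical characterization of order/cone automorphisms on Hermitian matrices (the same input used in the proof of Lemma~\ref{LemTypeTransform}, via the references to \v{S}emrl and Moln\'ar cited there): every such map has exactly the stated form. This step is the main technical input, and the one I would expect to be the chief obstacle if one tried to work without citation.

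The implementability claim follows from two observations. First, the transposition-containing maps $G_Y^-$ are not completely positive (exactly as for antiunitary conjugations in Lemma~\ref{LemQuantumGeneral}) and so cannot be realized physically, while the maps $G_Y^+$ are completely positive and --- with the modified trace normalization of Subsection~\ref{SubsecSchroedinger} --- trace preserving. Second, since the $G_Y^-$ form the non-identity coset of $\g$, the implementable subgroup is exactly the connected component at the identity. To rewrite it, the surjective homomorphism $Y \mapsto G_Y^+$ from $\mathrm{GL}(N,\C)$ onto this component has kernel equal to the scalar phases $e^{i\theta}\mathbf{1}$ (from $Y \hat M Y^\dagger = \hat M$ for every Hermitian $\hat M$ and Schur's lemma), giving $\mathrm{GL}(N,\C)/\mathrm{U}(1)$. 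The final isomorphism $\mathrm{GL}(N,\C)/\mathrm{U}(1) \cong \R^+ \times \mathrm{PSL}(N,\C)$ then follows from the split short exact sequence $1 \to K \to \mathrm{GL}(N,\C) \xrightarrow{|\det|} \R^+ \to 1$ (split by $r \mapsto r^{1/N}\mathbf{1}$), together with the elementary identifications $K = \mathrm{U}(1)\cdot \mathrm{SL}(N,\C)$ and $\mathrm{U}(1) \cap \mathrm{SL}(N,\C) = \Z_N$, yielding $K/\mathrm{U}(1) \cong \mathrm{SL}(N,\C)/\Z_N = \mathrm{PSL}(N,\C)$.
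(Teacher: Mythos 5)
Your proposal is correct and takes essentially the same route as the paper: achievability follows from the correcting transformations constructed in~(\ref{eqFirstKind})--(\ref{eqSecondKind}), the cone-automorphism characterization rests on the same order-automorphism results (\v{S}emrl--Sourour, Moln\'ar) already invoked in Lemma~\ref{LemTypeTransform}, and the identification of the implementable subgroup with the identity component written as $\R^+\times{\rm PSL}(N,\C)$ matches the paper's argument, which simply rewrites $Y\bullet Y^\dagger$ as $\lambda Z\bullet Z^\dagger$ with $\lambda>0$, $\det Z=1$, and notes that the transposition-containing maps are not completely positive. Your kernel computation and split-exact-sequence bookkeeping merely make explicit what the paper leaves as a one-line parametrization.
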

That the connected component at the identity, i.e.\ the maps $\hat M \mapsto Y\hat M Y^\dagger$, can be written in the form~(\ref{eqConnectedComponent}) follows from the fact that
every conjugation of this form can also be written as $\hat M \mapsto \lambda Z\hat M Z^\dagger$, with $\lambda>0$ and $\det Z=1$. This subgroup is implementable
because it consists of linear isometries which can in principle be physically implemented according to the postulates of quantum mechanics; on the other hand,
the antilinear conjugations $\hat M \mapsto Y \hat M^T Y^\dagger$ (including the antiunitary maps) cannot.

Lemma~\ref{LemPSL} is the analog of Lemma~\ref{LemQuantumGeneral}. The greater freedom of having different types leads to a larger group $\R^+\times {\rm PSL}(N,\C)$ which contains ${\rm PUA}(N)$ of Lemma~\ref{LemQuantumGeneral} as a subgroup.

\subsection{Relating full laboratories: emergence of the Lorentz group}
\label{Subsec44}
As in Section~\ref{SecO3}, we can ask whether \emph{all} local quantum physics of Alice's and Bob's laboratories{\footnote{Or at least the laboratory systems which are in a connected component of a universal measurability graph.}} can be
related by a single finite-dimensional group element, in the more general setting that some observables are universally measurable on several different quantum systems. Indeed, it turns out that this is possible, as long as we make assumptions analogous to those in Subsection~\ref{SubsecFullLab}.

In Section~\ref{SecO3}, there was an encoding map $\varphi^{(S)}$ for every quantum system $S$; now, however, the encodings for equivalent systems
are related to each other by conjugation according to Lemma~\ref{LemTypeTransform}. Therefore, we have an encoding map $\varphi^{\mathbf{S}}$ for every equivalence class $\mathbf{S}$.
Instead of encoding quantum states, we can also encode observables by using a corresponding map $\left(\varphi^{\mathbf{S}}\right)^\dagger$. It is defined
such that it yields the correct expectation values, ${\rm tr}(\rho_{\rm phys}\hat M_{\rm phys})=\tr\left(\varphi^{\mathbf{S}}(\rho_{\rm phys})\left(\varphi^{\mathbf{S}}\right)^\dagger(\hat M_{\rm phys})\right)$. In contrast to Section~\ref{SecO3},
we now have $\varphi\neq\varphi^\dagger$; indeed, if $\varphi(\rho)=X\rho X^\dagger$ (with possibly a transpose on the $\rho$), then
$\varphi^\dagger(\hat M)=\left(X^\dagger\right)^{-1}\hat M X^{-1}$ (with possibly a transpose on the $\hat M$). Nevertheless, we will in the following drop the $\dagger$ symbol, and simplify the notation by using the symbol $\varphi^{\mathbf{S}}$ also for encodings of observables.

It is easy to see that an analogue of Lemma~\ref{Lem3.3} remains true in the more general setting of quantum systems of different types. To state it, we will say that an observable $\hat M$ can be universally measured on two quantum system equivalence classes $\mathbf{S}$ and $\mathbf{S}'$ if and only if there are quantum systems $S\in\mathbf{S}$ and $S'\in\mathbf{S}'$ such that $\hat M$ can be universally measured on $S$ and $S'$. Since all quantum systems in a single equivalence class have all observables universally measurable, this gives a way to relate the descriptions of observables on any quantum system in $\mathbf{S}$ to those on any quantum system in $\mathbf{S}'$.
\begin{lemma}
\label{LemEncodingTypes}
Suppose that a set of observables $\{\hat M_i\}_{i\in I}$ is universally measurable on $\mathbf{S}$ and $\mathbf{S'}$, and that this set is tomographically complete on $\mathbf{S}'$.
That is, the outcome statistics of all the $\hat M_i$ determine the physical state and type of any system $S'\in\mathbf{S'}$ uniquely. Then there is a protocol that allows Alice and Bob to agree on an encoding $\varphi^{\mathbf{S'}}$ by exchanging only classical information, given that they already agree on an encoding $\varphi^{\mathbf{S}}$. Similarly as in Lemma~\ref{Lem3.3}, the map $\varphi^{\mathbf{S}}\mapsto\varphi^{\mathbf{S'}}$ is continuous with respect to changes of $\varphi^{\mathbf{S}}$ that preserve the set of observables which are universally measurable on $\mathbf{S}$ and $\mathbf{S'}$.
\end{lemma}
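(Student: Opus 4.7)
The plan is to imitate the proof of Lemma~\ref{Lem3.3}, adapting it to equivalence classes and to the more general (non-unitary) encoding freedom provided by Lemma~\ref{LemTypeTransform}. First I would fix any $S\in\mathbf{S}$ and $S'\in\mathbf{S'}$ on which the family $\{\hat M_i\}_{i\in I}$ is jointly universally measurable. By Definition~\ref{DefUniversal4} the assignment $T^{\rm phys}_{\mathbf{S}\to\mathbf{S'}}:\hat M_i(S)\mapsto\hat M_i(S')$ is a well-defined homeomorphism from its image into the space of observables on $S'$, and is intrinsic to the physics rather than to any choice of encoding. Together with Lemma~\ref{LemTypeTransform}, which tells us how observables of any two representatives within the same equivalence class are related, this object extends unambiguously to a map at the level of the classes $\mathbf{S}$ and $\mathbf{S'}$.

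Next, have Alice choose some personal encoding $\varphi_A^{\mathbf{S'}}$ of $\mathbf{S'}$ and define
\[
   T_{\mathbf{S}\to\mathbf{S'}}:=\varphi_A^{\mathbf{S'}}\circ T^{\rm phys}_{\mathbf{S}\to\mathbf{S'}}\circ\left(\varphi^{\mathbf{S}}\right)^{-1},
\]
a continuous injection on matrix descriptions. Alice picks a finite $J\subset I$ still tomographically complete on $\mathbf{S'}$ (trivial if $I$ is finite, otherwise by a standard reduction) and classically sends Bob the pairs $\{(\hat M_j(S)_0,\hat M_j(S')_0)\}_{j\in J}$, where the subscripts $0$ denote the $\varphi^{\mathbf{S}}$- and $\varphi_A^{\mathbf{S'}}$-descriptions respectively. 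Using the shared $\varphi^{\mathbf{S}}$, Bob implements the physical devices measuring the $\hat M_j(S)$; by universality these very same devices measure $\hat M_j(S')$ when fed systems from $\mathbf{S'}$. Performing tomography on received systems, Bob recovers both the state $\rho$ and the type (the matrix $X$ of Lemma~\ref{LemTypeTransform}, up to the phase ambiguity of Observation~\ref{Obs1}), which together pin down $\varphi_A^{\mathbf{S'}}$ on all of $\mathbf{S'}$.

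For continuity, once the classical message is held fixed so is $T_{\mathbf{S}\to\mathbf{S'}}$, while $T^{\rm phys}_{\mathbf{S'}\to\mathbf{S}}$ depends only on physics; the analogue of~(\ref{eqDefSSPrime}),
\[
   \varphi^{\mathbf{S'}}=T_{\mathbf{S}\to\mathbf{S'}}\circ\varphi^{\mathbf{S}}\circ T^{\rm phys}_{\mathbf{S'}\to\mathbf{S}},
\]
then exhibits $\varphi^{\mathbf{S'}}$ as a composition of continuous maps in $\varphi^{\mathbf{S}}$. A perturbation $\varphi^{\mathbf{S}}\to\varphi^{\mathbf{S}}+\delta\varphi^{\mathbf{S}}$ keeps the right-hand side well defined exactly when the perturbed image $\tilde\varphi^{\mathbf{S}}(\mathcal{M}(\mathbf{S}))$ still lies in the domain $\varphi^{\mathbf{S}}(\mathcal{M}(\mathbf{S}))$ of $T_{\mathbf{S}\to\mathbf{S'}}$, which is precisely the stated equivalence condition on the allowed deformations.

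The main obstacle, and the genuinely new step relative to Lemma~\ref{Lem3.3}, is that tomographic completeness on $\mathbf{S'}$ must now pin down \emph{both the state and the type} of the received system, not just a density matrix on a fixed Hilbert space. One must verify that the outcome statistics of the $\hat M_j$, whose concrete form depends on the unknown $X$ via Lemma~\ref{LemTypeTransform}, uniquely fix $X$ up to the phase of Observation~\ref{Obs1}. The structural rigidity of the maps $\hat M\mapsto X^\dagger\hat M X$ and $\hat M\mapsto X^\dagger\hat M^T X$ supplied by Lemma~\ref{LemTypeTransform} makes this immediate, but it is the sole place where the extension of Assumptions~\ref{AssUniversal3} to the full Definition~\ref{DefUniversal4} is essential.
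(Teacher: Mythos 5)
Your proposal is correct and follows essentially the same route as the paper: the paper's proof simply passes to the Schr\"odinger picture (fixed observables $\hat M_i(\mathbf{S})$, $\hat M_i(\mathbf{S'})$, with unnormalized pseudo-states carrying the type) and then copies the proof of Lemma~\ref{Lem3.3} word for word with $S$ replaced by $\mathbf{S}$, which is exactly the protocol and the continuity argument via the analogue of~(\ref{eqDefSSPrime}) that you reproduce, only rendered in the Heisenberg picture with the type tracked explicitly through Lemma~\ref{LemTypeTransform} and Observation~\ref{Obs1}. Note only that your closing ``verification'' that the outcome statistics fix $X$ is not an additional step to be proved: it is precisely the content of the tomographic-completeness hypothesis as stated in the lemma (for the chosen subset $J$ as well, since you assume it remains tomographically complete).
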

Like Lemma~\ref{Lem3.3}, this result can be easily understood in terms of the communication scenario
depicted in Figure~\ref{fig_scenario1}. Suppose Alice and Bob have agreed on a common encoding of states on quantum system $\mathbf{S}$. Then Alice can send Bob
a classical request of the following form: \emph{``Please send me the quantum state on a quantum system in equivalence class $\mathbf{S'}$
which has the following outcome eigenvalues [list of numbers] with the following probabilities [list of numbers] if it is sent into the devices $\hat M_1,\ldots,\hat M_n$.''} In this request, the
devices $\hat M_i$ can be described by sending the matrices $\hat M_i(S_0)$, where $S_0$ is any reference quantum system in $\mathbf{S}$ that Alice and Bob have agreed upon beforehand.
From this, Bob will know what devices he should build, in accordance with their agreement on how to encode $\mathbf{S}$. We know from Lemma~\ref{LemTypeEigenvalues} that the eigenvalues of all observables determine the type of the physical system $S'$ uniquely; tomographic completeness means that the observables $\hat M_i$ are sufficient for this\footnote{We leave it as an open problem here how large the set of $\hat M_i$ has to be to allow for this.}. The outcome probabilities will then uniquely determine the quantum state of the corresponding type, which allows Bob to construct the state and send it to Alice.

There is one subtlety in this protocol: if Alice describes the outcome eigenvalues classically, then does Bob understand this message? After all, we have argued at the beginning of this section that Alice and Bob may disagree on a description of physical scalars, for example if they use different units. However, their agreement on the eigenvalues of $\hat M_i(S_0)$ ``sets a scale'' and fixes the corresponding physical unit.

The \textbf{proof of Lemma~\ref{LemEncodingTypes}} can be obtained by arguing in the Schr\"odinger picture, where the devices $\hat M_i$ correspond
to {\it fixed} observables $\hat M_i(\mathbf{S})$ resp.\ $\hat M_i(\mathbf{S'})$, and states in $\mathbf{S}$ resp.\ $\mathbf{S'}$ are described by unnormalized density matrices.
The proof of Lemma~\ref{Lem3.3} can be copied word by word, as long as every appearance of $S$ is replaced by $\mathbf{S}$ (similarly for $S'$).

In an analogous manner, we can generalize Definition~\ref{DefGraph} and define universal measurability graphs in this setting.
\begin{assumptions}[Universal measurability graph]
\label{AssInteractionGeneral}
Consider the set of all finite-dimensional equivalence classes of quantum systems $\mathbf{S},\mathbf{S'},\mathbf{S''},\ldots$; we regard them as vertices of a graph.
Draw a directed edge from $\mathbf{S}$ to $\mathbf{S'}$ if and only if the situation of Lemma~\ref{LemEncodingTypes} holds, i.e.\ if there
is a set of observables $\{\hat M_i\}_{i\in I}$ which is universally measurable on $\mathbf{S}$ and $\mathbf{S'}$ and tomographically complete on $\mathbf{S'}$.

We assume that there exists a two-level system $\mathbf{S}$ that is a ``root'' of this graph, in the sense that every vertex can be reached
from $\mathbf{S}$ by following directed edges. Furthermore, we assume that no quantum system with a partially preferred choice of encoding is a root of this graph.
\end{assumptions}
It is unclear whether it is even \emph{possible} that the smallest-dimensional root of the universal measurability graph has dimension $d=3$ or more.\footnote{This might be impossible for the following reason: if the universally measurable observables correspond to the Lie algebra of the implementable symmetry group $\mathbb{R}^+\times {\rm PSL}(d,\C)$, cf.\ Lemma~\ref{eqConnectedComponent} (in fact, we conjecture this to follow from our postulates), then this could only be true if there were ``enough'' positivity-preserving (due to~(\ref{eqOrder})) Lie algebra representations in higher dimensions. For $d=2$ this is shown to be possible in Lemma~\ref{LemConsistency}.} We leave it open whether one can show that this is impossible, which would imply that the assumption of a ``qubit root'' can be dropped.

We can argue further along the lines of Subsection~\ref{SubsecFullLab} if we keep our notation in the Schr\"odinger picture.
If $\varphi$ is an arbitrary fixed encoding of the qubit equivalence class $\mathbf{S}$, then every other encoding $\tilde\varphi$ is related to this by a map of the form $X\bullet X^\dagger$ or $X\bullet^T X^\dagger$, with $X$
an invertible complex $2\times 2$-matrix. That is, for every encoding $\varphi^{\mathbf{S}}$, there is some $T\in{\rm PLA}(2)$ (the ``projective linear antilinear group'' of maps as above) such that $\varphi^{\mathbf{S}}=T\circ\varphi$.
In this case, we write $\varphi^{\mathbf{S}}=\varphi^{\mathbf{S}}_T$.

All further steps of the proof of Theorem~\ref{TheQubitNormed}
hold also in this more general case, if every occurrence of ${\rm PUA}(2)$ is replaced by ${\rm PLA}(2)$
(and the reference to the special observables $\lambda\cdot\mathbf{1}$ with $\lambda\in\R$ is removed), yielding that $\g_{\min}={\rm PLA}(2)$. Similarly as in the proof of Theorem~\ref{TheQubitNormed}, we also have to ``bundle together'' those quantum systems $\mathbf{S}'$ and $\mathbf{S}''$ which differ only by the set of  observables that are universally measurable on them and on $\mathbf{S}$ (if those sets are related by conjugation and possibly transposition); Alice and Bob will then not be able to tell them apart by sending classical information only. This is the subtlety that was mentioned in Assumptions~\ref{Ass4.8}. A paradigmatic physical example would again be given by the polarization of photons in different directions of propagation.

The group ${\rm PLA}(2)$ has a simpler description. First, every conjugation $X\bullet X^\dagger$ with $X$ an invertible complex matrix can also
be written in the form $\lambda Z \bullet Z^\dagger$, with $\lambda>0$ and $\det Z=1$. Writing $2\times 2$ Hermitian matrices formally
as $4$-vectors, $\rho=\left(\begin{array}{cc} x_0+x_3 & x_1-i x_2 \\ x_1+i x_2 & x_0-x_3\end{array}\right)$, it is well-known that the conjugation
$\rho\mapsto Z\rho Z^\dagger$ acts like an element of the proper orthochronous Lorentz group~\cite{Wald}. Furthermore, the transposition
acts as a space inversion. Therefore, we obtain the following:

\begin{theorem}
\label{TheQubitUnnormed}
In the scenario above, the minimal group is the orthochronous Lorentz group, together with a scaling factor,
$\g_{\min}=\R^+\times{\rm O}^+(3,1)$. The subgroup of implementable transformations is $\R^+\times{\rm SO}^+(3,1)$, the group of proper orthochronous Lorentz transformations,
times a scaling factor.

Furthermore, if $\mathbf{S}$ is the ``root qubit'' of Assumptions~\ref{AssInteractionGeneral}, and $\mathbf{S'}$ any other quantum system
such that all observables of $\mathbf{S}$ are universally measurable on $\mathbf{S}$ and $\mathbf{S'}$, then $\mathbf{S'}$ carries a projective representation of ${\rm SO}^+(3,1)$; the group elements
act as isometries between different Hilbert spaces.
All other quantum systems $\mathbf{S'}$ carry a projective representation of the subgroup of ${\rm SO}^+(3,1)$ which preserves the observables that are universally measurable on $\mathbf{S}$ and $\mathbf{S'}$.
\end{theorem}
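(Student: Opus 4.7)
The plan is to mimic the proof strategy of Theorem~\ref{TheQubitNormed}, replacing $\mathrm{PUA}(2)$ throughout by the projective linear/antilinear group $\mathrm{PLA}(2)$, and then to identify the latter with $\R^+\times{\rm O}^+(3,1)$ via the standard isomorphism between Hermitian $2\times 2$ matrices and Minkowski $4$-vectors.

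First, I would fix a root qubit $\mathbf{S}$ from Assumptions~\ref{AssInteractionGeneral} and, working in the Schr\"odinger picture of Subsection~\ref{SubsecSchroedinger}, parametrize all encodings of $\mathbf{S}$ as $\varphi_T^{\mathbf{S}}=T\circ\varphi$ for $T\in\mathrm{PLA}(2)$, where $\varphi$ is any reference encoding. This parametrization exhausts all encodings precisely because, by hypothesis, the root has no partially preferred choice of encoding. Using Lemma~\ref{LemEncodingTypes}, each such $T$ canonically induces encodings $\varphi_T^{\mathbf{S}'}$ on every equivalence class $\mathbf{S}'$ reachable from $\mathbf{S}$ in the universal measurability graph; and an analogue of~(\ref{eqTripleProd}) gives $\varphi_W^{\mathbf{S}'}\circ(\varphi_V^{\mathbf{S}'})^{-1}\circ\varphi_T^{\mathbf{S}'}=\varphi_{WV^{-1}T}^{\mathbf{S}'}$. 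Defining $\varphi_T:=\bigotimes_{\mathbf{S}'}\varphi_T^{\mathbf{S}'}$ as in~(\ref{eqAllFactors}) then yields a physically distinguished family of encodings whose associated group is isomorphic to $\mathrm{PLA}(2)$. Minimality follows verbatim from the argument of Theorem~\ref{TheQubitNormed}: any strictly smaller achievable subset would restrict the root $\mathbf{S}$ to a proper subset of encodings, contradicting Assumptions~\ref{AssInteractionGeneral}. Systems $\mathbf{S}'$ for which only a proper subset of $\mathbf{S}$-observables is universally measurable are bundled into equivalence classes $[\mathbf{S}']$ exactly as in the proof of Theorem~\ref{TheQubitNormed}, to prevent the choice of subset from artificially shrinking $\g_{\min}$.

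Second, I would identify $\mathrm{PLA}(2)$ with $\R^+\times{\rm O}^+(3,1)$. Every conjugation $\rho\mapsto X\rho X^\dagger$ with $X\in\mathrm{GL}(2,\C)$ factors uniquely as $\rho\mapsto \lambda\, Z\rho Z^\dagger$ with $\lambda=|\det X|^{1/2}>0$ and $Z\in\mathrm{SL}(2,\C)$; writing Hermitian matrices as $4$-vectors $\rho=x_0\mathbf{1}+\vec x\cdot\vec\sigma$, the standard double cover $\mathrm{SL}(2,\C)\twoheadrightarrow\mathrm{SO}^+(3,1)$ identifies the conjugation action, modulo a global $\mathrm{U}(1)$ phase, with the proper orthochronous Lorentz group, while the global positive factor is the $\R^+$ scaling. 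The antilinear sector, generated by the transpose, acts on Pauli components as $(x_0,x_1,x_2,x_3)\mapsto(x_0,x_1,-x_2,x_3)$, a spatial reflection that extends $\mathrm{SO}^+(3,1)$ to ${\rm O}^+(3,1)$. Implementability is settled exactly as in Lemma~\ref{LemQuantumGeneral} and Section~\ref{SecO3}: antilinear conjugations fail to be completely positive, so the implementable subgroup is the connected component of the identity, namely $\R^+\times\mathrm{SO}^+(3,1)$.

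Third, for each $\mathbf{S}'$ on which all root observables are universally measurable, I would define $G_W^{\mathbf{S}'}:=\varphi_W^{\mathbf{S}'}\circ(\varphi_{\mathbf{1}}^{\mathbf{S}'})^{-1}$ in direct analogy with~(\ref{eqGroupRep}); continuity of Lemma~\ref{LemEncodingTypes} and the composition law established in the previous step make $W\mapsto G_W^{\mathbf{S}'}$ a continuous homomorphism, hence a projective representation of $\mathrm{SO}^+(3,1)$ upon restricting to the identity component. Because different types in $\mathbf{S}'$ live on different Hilbert spaces $\mathcal{H}_R$ with inner product $\langle\cdot|R^{-2}|\cdot\rangle$ (Observation~\ref{Obs1}), and because of~(\ref{tra}), these representations act as linear (or antilinear) isometries between distinct Hilbert spaces, as claimed. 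For $\mathbf{S}'$ with only a proper subset of root observables universally measurable, the argument leading to~(\ref{eqGroupRep}) only goes through for those $V,W$ that preserve the image of this subset, yielding a projective representation of the corresponding stabilizer subgroup.

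The step I expect to be most delicate is the minimality argument in the first paragraph: one must check that the bundling into equivalence classes $[\mathbf{S}']$ and the handling of $\mathbf{S}'$ carrying only partial universally measurable subsets really prevent $\Phi$ from decomposing into a strictly smaller achievable set without contradicting the rootlessness assumption. The identification $\mathrm{PLA}(2)\cong\R^+\times{\rm O}^+(3,1)$ is classical, and the representation-theoretic statements are formal consequences of the group law once the bookkeeping of types and Hilbert spaces has been carried out in the Schr\"odinger picture.
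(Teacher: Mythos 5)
Your proposal follows essentially the same route as the paper: rerun the proof of Theorem~\ref{TheQubitNormed} in the Schr\"odinger picture with ${\rm PUA}(2)$ replaced by ${\rm PLA}(2)$ (using Lemma~\ref{LemEncodingTypes}, the bundling of systems carrying only a partial set of universally measurable observables, and rootlessness for minimality), then identify ${\rm PLA}(2)\cong\R^+\times{\rm O}^+(3,1)$ via the factorization $X\bullet X^\dagger=\lambda\, Z\bullet Z^\dagger$ with $\det Z=1$ and the transposition acting as a spatial reflection, with implementability settled as in Lemma~\ref{LemPSL} and the representation statements obtained as in Theorem~\ref{TheQubitNormed} together with Observation~\ref{Obs1}. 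The only slip is cosmetic: in that factorization the scale is $\lambda=|\det X|$ rather than $|\det X|^{1/2}$, which does not affect the argument.
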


In particular, we have again obtained the relation between Alice's and Bob's laboratories from the relation between their qubit descriptions. In this regard qubits assume the role of fundamental building blocks of their physical world.

It remains to briefly discuss how Alice and Bob can actually agree on the description of qubits. A possible procedure is completely analogous to the one described in Section \ref{sec_qbit} for the simpler scenario. As before, its temporal stability requires the {\it inertial frame condition} to hold. The sole difference, however, is that Alice and Bob will also have to determine the \emph{type} of the qubits (in addition to the state).

In the following subsections, we discuss and interpret this result.

\subsection{Interpretation of the minimal group: scaling and spacetime}\label{interpretation}
If Alice and Bob reside in `distant' laboratories, and have never met before to synchronize their frames of reference, then their descriptions of local
quantum physics will be related by an element $(\lambda,T)$ of the group $\g_{\min}=\R^+\times {\rm O}^+(3,1)$, as we have learned in Theorem~\ref{TheQubitUnnormed}.

How can we interpret the scalar $\lambda>0$? If we set $T=\mathbf{1}$ for a moment, and argue in the Heisenberg picture, then Alice's and Bob's observables will be related
by $\hat M_A=\lambda\hat M_B$. This means that if Alice and Bob measure $\hat M$ on the same physical state, then \emph{the probabilities of the different measurement outcomes
will be the same in both cases, but they will describe the outcomes (which are physical quantities) differently:} Bob will say that the possible outcomes are
$m_1^{(B)},\ldots, m_N^{(B)}$ (the eigenvalues of $\hat M_B$), while Alice will assign the values of outcomes $m_1^{(A)}=\lambda m_1^{(B)},\ldots,
m_N^{(A)}=\lambda m_N^{(B)}$.

This does not come as a great surprise because the condition (\ref{eqOrder}), comparing {\it sizes} of measurement outcomes, introduces a {\it scale} into the game on which Alice and Bob have to agree. But the description of a scale requires {\it units} and Alice and Bob may adhere to different conventions and thus use different \emph{units} for the outcomes of observable $\hat M$, like
``kilometers'' versus ``miles''. Due to~(\ref{eqZero}), both Alice and Bob will agree on the case that the measured value is exactly zero, but for all other values,
there will be a scaling factor between their descriptions due to different choices of units; this is the scalar $\lambda>0$. 

Even if $T\neq\mathbf{1}$, then the form of $\g_{\min}$ as a direct product of groups ensures that \emph{there is always a unique consistent way
of assigning such a scaling factor between any two parties}. That is, Alice and Bob can split off a scalar factor $\lambda_{B\to A}$ such that the resulting transformation $T$ becomes
a Lorentz transformation (which leads to $\lambda_{B\to A}=\lambda$ in the example above), and if Bob and Charlie act likewise to obtain a factor
$\lambda_{C\to B}$, then we will have $\lambda_{C\to B}\lambda_{B\to A}=\lambda_{C\to A}$, which is the consistency condition that one would expect to hold in the case
of differing choices of units. Thus, in a world as described in this section, it is consistent for observers to interpret the relation between laboratories in the following way:
\emph{``The actual relation between any pair of laboratories is given by a Lorentz transformation. Furthermore, parties can of course choose different units,
which gives an unimportant and physically irrelevant factor that relates the descriptions.''}

So what about the Lorentz transformations $T\in{\rm O}^+(3,1)$? It is tempting to interpret those as spacetime transformations, and in the following subsection,
we will show that they indeed correspond to spacetime symmetries in the concrete example of relativistic Stern-Gerlach devices.
But is there any apriori reason to attach this interpretation to our abstractly obtained ${\rm O}^+(3,1)$?

Arguments by Wald~\cite{Wald} suggest an affirmative answer to this question. The condition that different observers residing in some spacetime should be able to translate their respective operational descriptions of physical systems (and, more generally, fields) into one another, requires the isometries
of this spacetime to have a representation on the set of states (classical or quantum) of the physical theory defined on that spacetime. Therefore, if we look at the quantum transformations that relate different observers'
descriptions of physics, then this set of transformations should contain a representation of the spacetime isometries.
Consequently, if we assume that there is some underlying spacetime, our result tells us that we should not be surprised to find
that this spacetime has a symmetry group which is related to ${\rm O}^+(3,1)$.

Of course, the present status of our thought experiment is not sufficient to permit Alice and Bob to identify their ambient spatiotemporal structure uniquely as a Minkowski spacetime (not even locally). After all, this would require a representation of the larger Poincar\'e group involving spacetime translations. Our present scheme is therefore, at this stage, still remote from a successful operational spacetime reconstruction.

Nevertheless, the spacetime structure is ultimately encoded in the {\it relations among the entirety of observers} contained in it (or, rather, it {\it is} the set of all those relations). Indeed, following von Weizs\"acker's ideas~\cite{Weizsaecker}, we may even reverse the present argumentation. One may argue that the very operational \emph{definition}
of a spacetime symmetry is to translate between two equivalent but distinct descriptions of the same physics, as long as these are descriptions of large physical
scenarios (full laboratories) given by distant observers.
In this light, the extraction of the Lorentz group from the informational relation of observers is at the very least suggestive of a deeper connection of our communication game and the ambient spatiotemporal structure.

\subsection{Relativistic Stern-Gerlach measurements}\label{sec_SG}
If our result really has a viable spacetime interpretation as suggested in the previous subsection, then there should already be a concrete relativistic example in which observables (or states) transform precisely as $\hat{M}_B=X^\dagger\,\hat{M}_A\,X$ between reference frames $A$ and $B$, where $X\in\,\rm{SL}(2,\mathbb{C})$. (We drop here the scaling factor $\lambda\in\mathbb{R}^+$ in~(\ref{eqConnectedComponent}) and Theorem~\ref{TheQubitUnnormed} such that $\rm{GL}(2,\mathbb{C})$ is replaced by $\rm{SL}(2,\mathbb{C})$.) Indeed, it turns out that a relativistic formulation of the Stern-Gerlach experiment is of exactly this form, in the WKB-limit of the Dirac theory \cite{Westman1,Westman2,PalmerDiss}. We shall only provide a brief description of this example here.

We start with some formal considerations, and give an intuitive physical description in terms of Stern-Gerlach devices further below.
In the WKB approximation of the Dirac equation, one can derive a momentum-dependent inner product for $\rm{SL}(2,\mathbb{C})$-spinorial qubits from the Dirac current \cite{Westman1,Westman2} $
\langle\psi|\phi\rangle_{p}$
which is linear in the four-momentum $p_\mu$ of the particle carrying the qubit. In this way, one obtains Hilbert spaces $\ch_p\simeq\mathbb{C}^2$ with inner products which depend on the four-momentum $p\equiv p_\mu$. The {\it classical} state of motion $p_\mu$ thus represents the \emph{type} of the qubit as defined in Definition~\ref{def_type}. The set of all inner products, and thus Hilbert spaces $\ch_p$, is related linearly by the set of all spacetime Lorentz boosts, and is invariant under local spatial rotations within the reference frame of a given particle. Spatial rotations, which are represented by unitaries, preserve the momentum, which corresponds to the fact that unitary conjugations preserve the type, cf.\ Definition~\ref{def_type}. Boosts $\Lambda$ act as isometries between Hilbert spaces, $\mathcal{H}_p\to\mathcal{H}_{\Lambda p}$, which corresponds to the fact that the matrices $X$ are isometries as explained in Observation~\ref{Obs1}. Thus, the formalism found in this work is identical to the formalism derived from relativistic quantum mechanics in~\cite{Westman1,Westman2}.

In the WKB-limit, every observable on any $\ch_p$ can be written as
\ba
\hat{M}=M_\mu\hat{\sigma}^\mu,\nn
\ea
where $\sigma^\mu$ is the Pauli-4-vector which acts as the operator, and $M_\mu$ is a {\it classical} Minkowski vector. For instance, a relativistic spin operator arises naturally on each $\ch_p$ from the Pauli-Lubanski vector \cite{Westman1,Westman2}. This classical treatment of the vector arises because we treat the different types of qubits -- and thus the corresponding momenta $p$ -- classically by construction. In particular, our framework does not include superpositions of different types and thus, in this case, of different states of motion. 

But while different observers will employ the same representation of the Pauli-4-vector $\hat{\sigma}^\mu$, they will agree on the description of $M_\mu$ only up to the Lorentz transformation $\Lambda$ translating between their inertial frames. One can check that this yields precisely 
\ba
\Lambda_\mu{}^\nu M_\nu\hat{\sigma}^\mu = X^\dagger M_\mu\hat{\sigma}^\mu X,\label{lalala}
\ea
where $X$ is the $\rm{SL}(2,\mathbb{C})$-transformation corresponding to the Lorentz transformation $\Lambda$. 

This allows us to consider concrete observables associated to a Stern-Gerlach device\footnote{In fact, in order to accommodate all types, i.e., all classical states of motion $p_\mu$, within one universal device, one needs to consider an idealized Stern-Gerlach device into which one can shoot particles from all directions and measure the corresponding deviation and/or acceleration.}. An abstract spin operator (e.g.\ one of the Pauli matrices) alone is not suitable to describe the different deviations that qubits in distinct states of motion experience when running through the inhomogeneous magnetic field, as it does not involve a scale.
By contrast, the gradient of the magnetic field $|B^{\rm RF}_\mu|$, as seen in the rest frame of the particle, constitutes a vector
\[
G_\nu:=\frac\mu m \partial_\nu\sqrt{B^{\rm RF}_\xi B_{\rm RF}^\xi}
\]
which accurately describes the different reactions to the Stern-Gerlach measurement ($\mu$ is the magnetic moment and $m$ the mass of the particle). It represents the four-acceleration that the particle experiences due to the magnetic field. The corresponding operator $\hat{G}=G_\mu\hat{\sigma}^\mu$ can be interpreted as a hybrid observable measuring the acceleration of the particle. It is a hybrid observable because it contains a classical and a quantum part: the quantum part determines the orientation relative to the magnetic field as usual, while the classical part translates this information into a corresponding spacetime acceleration\footnote{Clearly, we are not quantizing the electromagnetic field here such that the operator $\hat{G}$ must be of such a hybrid classical and quantum nature.}.

\begin{figure}[hbt!]
\begin{center}
{\includegraphics[scale=.3]{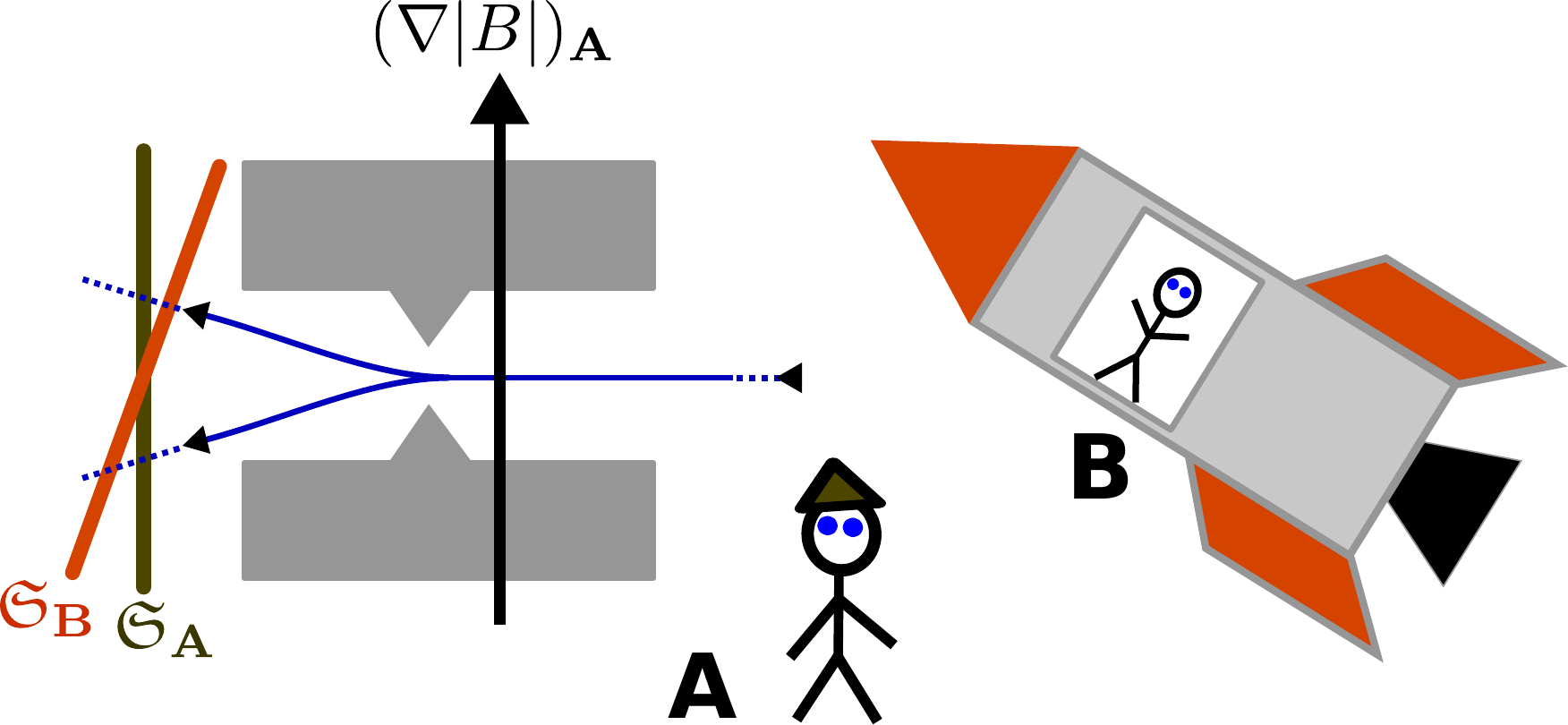}}$\quad$
\includegraphics[scale=0.8]{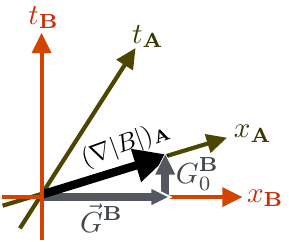}
\caption{\small Alice and Bob, being in different states of relative motion as seen by the injected qubits, will measure the qubits' acceleration according to different ``screens of simultaneity'' $\mathfrak{S}_{\mathbf{A}}$ resp.\ $\mathfrak{S}_{\mathbf{B}}$, and thus see different accelerations and deflections. Bob, having a non-vanishing component along the Stern-Gerlach magnetic field direction, will see `spin up' and `spin down' qubits deflected asymmetrically. The spacetime diagram on the right is drawn from Bob's point of view, assuming that Alice is relative to the qubit at rest.}\label{fig_SG}
\end{center}
\end{figure}

But this $\hat{G}$ is seen differently by different observers. Since the four-acceleration is always orthogonal to the four-momentum \cite{hartlebook}, $p^\mu G_\mu=0$, the acceleration is of the form $G^{\rm RF}=(0,\vec{G})$ in the rest frame of the particle. As one can easily check, this implies that the eigenvalues of the operator $\hat{G}=G_\mu\cdot\hat{\sigma}^\mu$ are $\pm|\vec{G}|$ in the rest frame of the particle, corresponding precisely to the two opposite accelerations of the particles in opposite spin eigenstates. The expectation value of $\hat{G}_{\rm RF}$ as seen from the qubit rest frame corresponds to the average acceleration experienced by the qubit ensemble.

However, Alice and Bob, who observe the deflection of the qubits in the Stern-Gerlach device from different states of motion (see Figure~\ref{fig_SG} for a schematic illustration), will see the resulting acceleration differently, namely as proportional to $\Lambda_\nu{}^\mu G_\mu^{\rm RF}$, where $\Lambda$ is the Lorentz transformation between the qubit's and Alice's or Bob's frame.
Thus, Alice and Bob will see the qubit acceleration in the form $G^{A,B}=(G^{A,B}_0,\vec{G}^{A,B})$, i.e.\ not in general as a four-vector with only spatial components. And this is precisely expressed by the transformation (\ref{lalala}), $\hat{G}_{\rm RF}\mapsto \hat G_{A,B}=X^\dagger \hat G_{\rm RF} X$, which, as can be verified easily, entails the transformed asymmetric eigenvalues $G_0^{A,B}\pm |\vec{G}^{A,B}|$ as seen by Alice or Bob. But these two eigenvalues uniquely determine the acceleration as seen by either Alice or Bob.

The asymmetry corresponds to the fact that the surfaces of simultaneity will be different for Alice, Bob and the qubits. For instance, if Bob's state of motion has non-zero components in the direction of the qubit deflection, one can convince oneself that this necessarily implies that Bob will see the deflection of the qubits asymmetrically. In this setup, one should not think of a Stern-Gerlach device as having a fixed material screen that is hit by the particles after they pass the magnetic field, but of every observer tracking the particle's deflection on his own imagined ``screen of simultaneity''.

Finally, we note that the change of normalization (or non-unitarity of transformations) between different observers corresponds to the fact that what is a purely spatial rotation to one may be seen as a combination of boost and rotation by another. For example, a rotation $R\in\rm{SO}(3)$ of the acceleration $G_\mu$ within the rest frame of the particle amounts to $G^{\rm RF}\mapsto G'^{\rm RF}=(0,R\cdot \vec{G})$ which does {\it not} affect the eigenvalues $\pm|\vec{G}|$. However, if Alice and Bob move relative to the qubits, this transformation will {\it not} appear to them as $G^{A,B}\mapsto G'^{A,B}=(G^{A,B}_0,R\cdot\vec{G}^{A,B})$ because of relativity of simultaneity. Thus, the eigenvalues of the corresponding operator will {\it not} be preserved for Alice and Bob under this transformation. Consequently, what appears as a norm or eigenvalue preserving transformation to one reference frame may constitute a norm and eigenvalue changing transformation for another. 

The unusual `non-unitary' behaviour and the multitude of Hilbert spaces is a consequence of the classical treatment of the {\it type} (i.e., state of motion) of the qubit. Nevertheless, within this framework, everything is consistent and has a clear physical interpretation within a meaningful physical approximation, namely the WKB limit of the Dirac theory. Clearly, for a more fundamental treatment, one would have to consider more sophisticated quantum field theoretic tools where all `types' of systems can be accommodated within one infinite-dimensional Hilbert (or Fock) space. However, the symmetry group does not change under the approximation.

\subsection{Are these results incompatible with a Galilean symmetric world?}

The short answer is: No. None of our assumptions in Definition~\ref{DefUniversal4} (in particular assumptions~(\ref{eqOrder}) and~(\ref{eqZero})) are in themselves
inconsistent with a non-relativistic Galilean universe. However, the details of the answer to this question depend very much on the precise assumptions one is willing to make about the relation between our communication scenario and spacetime, leaving the following two possible hypotheses:
\begin{enumerate}
\item There is no relation between the background spacetime (about which we have not made any specific assumptions) and the quantum states we are describing. Then, by definition, the thought experiment cannot say anything about spacetime at all. In this case, a Galilean world (and, in fact, \emph{any} kind of spacetime) would be consistent with our axioms, and the boost part of our Lorentz group would not represent spacetime boosts.
\item There is actually a relation between the Lorentz group acting on quantum states, obtained in our derivation, and spacetime transformations. Then, if spacetime was Galilean in the Stern-Gerlach example, all inertial observers would agree on the magnitude of the three-acceleration such that an asymmetric deflection of wave packets as in Section~\ref{sec_SG} would not occur. But in this case, a measurement of the acceleration would constitute additional input for Alice and Bob to facilitate their agreement on the description of their quantum states. This would violate Assumptions~\ref{AssInteractionGeneral} by supplementing every quantum system with a partially preferred choice of encoding, bringing us back into the setting of Section~\ref{SecO3}, and yielding ${\mathrm O}(3)$ as the symmetry group.
\end{enumerate}

Given the discussion in Subsection~\ref{interpretation}, and in particular the fact that our abstract assumptions are realized via relativistic Stern-Gerlach devices in the WKB approximation as shown in Subsection~\ref{sec_SG},
we think that there must be some grain of truth to possibility 2. However, from the approach in this paper, we clearly cannot give any conclusive argument as to why the abstractly derived group ${\rm O}^+(3,1)$ must be the spacetime symmetry group, unless we make additional assumptions. The question what would happen in a Galilean background, or some background spacetime of different structure and dimension, warrants further research. The answer to this question will surely depend on the detailed assumptions one makes about how spacetime interacts with the quantum states in our thought experiment.

\section{Summary and conclusions}
We have shown that the Lorentz transformations emerge naturally in an information-theoretic scenario
where two agents in distinct laboratories try to synchronize their descriptions of local quantum physics. The success of synchronization is formally tested in a ``game'', where Alice sends a classical description, and Bob is supposed to send back the physical system that was described. Our derivation does
not assume any specific properties or even the existence of an underlying spacetime manifold (except for very basic assumptions like the possibility to send systems between
observers, or a time order in the local laboratories); instead, the reference frame transformations arise purely from the communication relations among observers as that minimal group which allows them to ``win'' their synchronization game. But it rests on some physical background assumptions. Most importantly,
our result relies on the existence of devices that measure different types of quantum systems {\it universally}, while
satisfying physically meaningful consistency conditions on the physical measurement outcomes.

One possible intuitive understanding of our result is as follows. If we have ``many'' observables on a qubit which can be ``universally measured'' on many other quantum systems (as formalized in the ``graph of universal measurability'',
cf.\ Figure~\ref{fig_graph}), then this allows us to lift any description of qubit states to a description of all quantum systems. This promotes the symmetry group of the qubit
to the full symmetry group that relates different observers' descriptions of quantum physics. This highlights the role of qubits as fundamental informational building blocks in physics -- in line with their importance in reconstructions of quantum theory \cite{Hardy01,Masanes11,Chiribella2011,Hardy11,Axioms2013,Hoehn2014,hw2}. In the simplified case of stand-alone abstract quantum systems,
this yields the orthogonal group ${\rm O}(3)$; in the more realistic case {where we take into account that measurement outcomes on quantum systems actually have a ``size'' that can be compared and carry a scale}, this yields the orthochronous Lorentz group ${\rm O}^+(3,1)$. The resulting formalism correctly
describes relativistic Stern-Gerlach measurements~\cite{Westman2}.

Our result raises several interesting questions. Is there a more direct operational interpretation of the Lorentz transformations, acting on quantum
observables, in terms of relativistic effects like time dilation etc.\ without assuming special relativity in the first place? What if we
have a specific quantum field theory in more than $3+1$ dimensions -- do our operational assumptions or conclusions still
apply to observers in such a universe? Is this approach suitable for information-theoretic descriptions of other
geometric properties like curvature? {Finally, can one conceive of an informational spacetime reconstruction or characterization? For instance, what are the precise information-theoretic conditions from which local observers can infer the existence of a spatiotemporal structure described by Minkowski space?}

Our operational approach {and ``inside point of view'' on spacetime are} in spirit close
to Einstein's original belief, viewing relations among observers as a primary and spacetime as a secondary concept. Namely, the latter is regarded as ``the set of all possible observer relations" rather than an observer independent geometric entity.  This view is also compatible with a recently developed geometric formulation of special and general relativity in terms of an observer space which highlights the observers' relations as fundamental~\cite{Gielen:2012fz}.
The question then arises how far we can push this kind of reasoning towards a deeper information-theoretic understanding
of spacetime. Specifically, it is often stated that quantum theory and general relativity, apart from their apparent conceptual conflicts, are formulated in very different languages -- functional analysis vs.\ geometry. Given the informational character of many structures in quantum theory (as exploited in quantum information theory) and general relativity (as embodied in the causal structure), one might wonder whether information theory can provide a common language for both, preparing the grounds for merging the two theories in a more general informational framework.
In this context, the present paper is supposed to provide a first step towards a novel information-theoretic perspective on spacetime physics: it shows
that there is more than just peaceful coexistence~\cite{Shimony} between
quantum theory and relativity. Instead, the structure of one of these theories tightly constrains the structure of the other.

\section*{Acknowledgments}
We are grateful to Oscar Dahlsten and Robert H\"ubener for many helpful and stimulating discussions and to Ted Jacobson for constructive criticism and comments. We also thank the referees for useful suggestions which helped to improve the presentation of this manuscript.
Research at Perimeter Institute is supported by the Government of Canada through Industry Canada and by the Province of Ontario through the Ministry of
Research and Innovation. The project leading to this publication has also received funding from the European Union's Horizon 2020 research and innovation programme under the Marie Sklodowska-Curie grant agreement No 657661 (awarded to PH). PH would also like to thank the ITP Heidelberg for hospitality during a visit. This research was undertaken, in part, thanks to funding from the Canada Research Chairs program.

\end{document}